%2multibyte Version: 5.50.0.2960 CodePage: 65001
%\usepackage[nolists, nomarkers]{endfloat}
%\renewcommand{\efloatseparator}{\mbox{}}
%\definecolor{fgcolor}{rgb}{0.2, 0.2, 0.2}
%\setcounter{MaxMatrixCols}{10}
%\input{tcilatex}
%\input{tcilatex}

\documentclass[12pt]{article}
%%%%%%%%%%%%%%%%%%%%%%%%%%%%%%%%%%%%%%%%%%%%%%%%%%%%%%%%%%%%%%%%%%%%%%%%%%%%%%%%%%%%%%%%%%%%%%%%%%%%%%%%%%%%%%%%%%%%%%%%%%%%%%%%%%%%%%%%%%%%%%%%%%%%%%%%%%%%%%%%%%%%%%%%%%%%%%%%%%%%%%%%%%%%%%%%%%%%%%%%%%%%%%%%%%%%%%%%%%%%%%%%%%%%%%%%%%%%%%%%%%%%%%%%%%%%
\usepackage{graphicx}
\usepackage{caption}
\usepackage{subcaption}
\usepackage{amsfonts}
\usepackage{amsmath}
\usepackage{amstext}
\usepackage{tabularx}
\usepackage{mathrsfs}
\usepackage{natbib}
\usepackage{setspace}
\usepackage{amssymb}
\usepackage{color}
\usepackage{rotating}
\usepackage{multirow}
\usepackage{amsmath,amssymb,amsthm}
\usepackage{booktabs}
\usepackage{bm}
\usepackage{thrmappendix}
\usepackage{threeparttable}
\usepackage{longtable}

\setcounter{MaxMatrixCols}{10}
%TCIDATA{OutputFilter=LATEX.DLL}
%TCIDATA{Version=5.50.0.2960}
%TCIDATA{Codepage=65001}
%TCIDATA{<META NAME="SaveForMode" CONTENT="1">}
%TCIDATA{BibliographyScheme=BibTeX}
%TCIDATA{LastRevised=Monday, October 19, 2015 18:05:56}
%TCIDATA{<META NAME="GraphicsSave" CONTENT="32">}

\setlength{\textwidth}{6.5in} \setlength{\textheight}{8.8in}
\setlength{\topmargin}{-.5in} \setlength{\oddsidemargin}{-.01in}

\theoremstyle{plain} 
\newtheorem{theorem}{Theorem}

\newtheorem{lemma}{Lemma}

\theoremstyle{definition}

\theoremstyle{plain} 
\newtheorem{remark}{Remark}

\newcommand{\tr}{^{\top}}

%\addcontentsline{toc}{section}{Appendices}
%\renewcommand{\thesubsection}{\Alph{subsection}}

\def\X{{\mathbf{X}}}
\def\Z{{\mathbf{Z}}}
\def\E{{\mathbb{E}}}
\def\R{{\mathbb{R}}}
\def\P{{\mathbb{P}}}
\def\T{{\mathcal{T}}}
\def\1{{\mathbf{1}}}
\def \C{{\mathcal{C}}}

\def\zb{{\bm{\beta}}}

\def\F{\mathbb{F}}
\def\H{\mathbb{H}}

\DeclareMathOperator*{\argmin}{arg\,min}

\pdfminorversion=4

\begin{document}

\title{On the unbiased asymptotic normality of quantile regression with fixed effects%
\thanks{%
The authors are grateful to the editor, Jianqing Fan, the associate editor, and two anonymous referees for their constructive comments and suggestions. In addition we thank Roger Koenker for useful comments and discussions regarding this paper. Computer programs to replicate the numerical analyses are available
from the authors. All the remaining errors are ours.} }
\author{Antonio F. Galvao\thanks{Department of Economics, University of Arizona. E-mail: \texttt{agalvao@email.arizona.edu}}
\and
Jiaying Gu\thanks{Department of Economics, University of Toronto. E-mail: \texttt{jiaying.gu@utoronto.ca}}
\and
Stanislav Volgushev\thanks{Department of Statistical Sciences, University of Toronto. E-mail: \texttt{volgushe@utstat.toronto.edu} }
}
\maketitle

\begin{abstract}
\begin{spacing}{1}
Nonlinear panel data models with fixed individual effects provide an important set of tools for describing microeconometric data. In a large class of such models (including probit, proportional hazard and quantile regression to name just a few) it is impossible to difference out the individual effects, and inference is usually justified in a `large $n$ large $T$' asymptotic framework. However, there is a considerable gap in the type of assumptions that are currently imposed in models with smooth score functions (such as probit, and proportional hazard) and quantile regression. In the present paper we show that this gap can be bridged and
establish unbiased asymptotic normality for fixed effects quantile regression panels under conditions on $n,T$ that are very close to what is typically assumed in standard nonlinear panels. Our results considerably improve upon existing theory and show that quantile regression is applicable to the same type of panel data (in terms of $n,T$) as other commonly used nonlinear panel data models. Thorough numerical experiments confirm our theoretical findings.    
\end{spacing}
\end{abstract}

\thispagestyle{empty}

Keywords: Quantile regression; panel data; fixed effects; asymptotics
\newline

JEL Classification: C13, C21, and C23

\newpage

\baselineskip18.9pt \pagenumbering{arabic}

\section{Introduction}

Nonlinear panel data models are important in microeconometric applications. They provide flexible modeling tools and allow researchers to account for unobserved heterogeneity through individual specific effects. Examples of such models include, among others, probit, logit, Poisson, negative binomial, proportional hazard, tobit, and quantile regression. Nevertheless, incorporating individual-specific fixed effects (FE) into these frameworks results in models whose dimension depends on the number of individuals, and thus grows with sample size; this raises some important issues for the asymptotic analysis of the FE estimators. As first noted by \citet{NeymanScott48}, leaving the individual heterogeneity unrestricted in a nonlinear or dynamic panel model can result in inconsistent estimators of the common parameters due to the incidental parameters problem. Consistency can be recovered in settings where it is possible to remove individual-specific effects by a transformation or when the number of individuals, $n$, and the number of time periods, $T$, jointly go to infinity.  

%Nevertheless, the existence of the individual specific effects parameters in most nonlinear panel models, whose dimension tends to infinity with sample size, raises some important issues for the asymptotic analysis of the fixed effects (FE)  estimators. As first noted by \citet{NeymanScott48}, leaving the individual heterogeneity unrestricted in a nonlinear or dynamic panel model generally results in inconsistent estimators of the common parameters due to the incidental parameters problem. To overcome this drawback, it has become standard in the literature, to employ large asymptotics, where the number of individuals, $n$, and the number of time periods, $T$, jointly go to infinity.

The nonlinear panel data literature has shown that $n/T \to 0$ is a sufficient condition to obtain asymptotic (unbiased) normality of nonlinear panel data FE estimators under smoothness conditions on the objective function.\footnote{Under an asymptotic framework where both the numbers of individuals and time periods grow at the same rate, it is possible to show that the fixed effects estimator for smooth objective functions has a limiting normal distribution with non-zero mean.} We refer the reader to \cite{ArellanoHahn07}, \cite{ArellanoBonhomme11}, and \cite{Fernandez-ValWeidner17} for reviews of the literature.\footnote{Important work in this literature include, among others, \cite{Kiviet95}, \cite{PhillipsMoon99}, \cite{Lancaster00}, \citet{HahnKuersteiner02}, \cite{Lancaster02}, \cite{Arellano03a}, \cite{AlvarezArellano03}, \cite{LiLindsayWaterman03}, \cite{HahnNewey04}, \cite{Woutersen04}, \cite{Carro07}, \citet{BesterHansen09}, \cite{Fernandez-Val09}, \citet{HahnKuersteiner11}, \cite{Fernandez-ValLee13}, \citet{DhaeneJochmans15}, \cite{Fernandez-ValWeidner16}.}

An important class of nonlinear panel data models is quantile regression (QR). Panel QR models have provided a valuable method of statistical analysis and recent examples of their application include instrumental variables models [e.g., \citet{ChetverikovLarsenPalmer16}, \cite{Galvao11}, and \cite{HardingLamarche09}], nonseparable models with time homogeneity [\cite{ChernozhukovFernandez-ValHoderleinHolzmannNewey15}, and \cite{ChernozhukovFernandez-ValHahnNewey13}], censored regression models [\cite{GalvaoLamarcheLima2013}, and \cite{WangFygenson09}], nonlinear models [\cite{ArellanoBonhomme16}],  interactive effects [\cite{HardingLamarche14}], and growth charts [\cite{WeiHe06}], amongst many others.\footnote{For other recent developments in panel data QR, see e.g., \cite{Canay11},  \cite{GrahamHahnPoirierPowell2015}, \cite{SuHoshino16}, \cite{GalvaoKato16}, and \cite{HardingLamarche17}. \cite{GalvaoKato17}  review the QR methods for panel (longitudinal) data.} 
Unfortunately, as in most of the nonlinear panel data literature, transformations to remove the unobserved individual effects are not available for QR models. Thus, FE panel data QR suffers from the incidental parameters problem and large $n,T$ asymptotics must be employed in the analysis. \cite{KatoGalvaoMontes-Rojas12} analyze the standard fixed effects (FE-QR) estimator in which individual effects are introduced through dummy variables, and \cite{GalvaoWang15} consider the minimum distance (MD-QR) estimator. These studies establish the asymptotic properties of the corresponding estimators and derive sufficient conditions for their consistency and asymptotic normality. Both the FE-QR and MD-QR estimators are shown to be asymptotically normal under the stringent condition that $n^{2}(\log n)(\log T)^2/T\to0$, and $T \to \infty$ as $n \to \infty$. This requirement is much more restrictive than what is usually required in the standard literature on nonlinear panel data models under smoothness conditions. 

The substantial discrepancy between existing conditions that guarantee asymptotic normality of panel data QR and other nonlinear panel models gives rise to the following question: are the restrictive conditions imposed in the QR literature really necessary, or are these conditions rather an artifact of the proof techniques employed so far?\footnote{The source of this stronger requirement is the lack of smoothness of the QR objective function, which implies that classical higher-order expansions are not valid.} Answering this question is of central importance to the panel data QR literature and will have profound implications for the recommendation that econometricians can give to practitioners when it comes to the set of tools that should be used in the analysis of panel data with moderate length. If indeed the conditions required for QR turn out to be substantially more restrictive compared to other nonlinear models, it would substantially limit the application of QR to panels that have only modest time dimension compared to the cross sectional dimension.

%The asymptotic results in \cite{KatoGalvaoMontes-Rojas12} and \cite{GalvaoWang15} seem to suggest that panel data QR is only applicable to extremely long panels 

%The main reason for those conditions is the non-smoothness of the quantile objective function, which 

%Therefore, given that QR models have been extensively employed in studies in empirical contexts, it is very important from both theoretical and empirical viewpoints to relax this very restrictive requirement on the growth rate of $T$ relative to $n$ under the joint limits to achieve (unbiased) asymptotic normality.

%Unfortunately, this existing sufficient condition under which the asymptotic bias of these estimators is negligible requires $T>>n$.  This requirement on the extremely large time series sample size growth is very restrictive to be useful for most empirical applications in economics, finance, and statistics. This might suggest that asymptotic results are accurate only under very large panels.  In addition, numerical simulations in regression cases justify a belief that FE-QR and MD-QR methods should not need such large $T$ to maintain the desired asymptotic properties. 

%This paper studies QR for panel data models with individual specific FE. We derive the asymptotic properties of panel data MD-QR models with fixed effects. 

%The first main contribution is to formally establish sharper sufficient conditions on the sample size growth for asymptotic normality of the MD-QR estimator. The MD-QR is shown to be $\sqrt{nT}$-consistent when $n/T\rightarrow0$, as $T\to\infty$ and $n\to\infty$ jointly. 

The main contribution of this paper is to provide an answer to the question posed above. We prove that unbiased asymptotic normality of both the FE-QR and MD-QR estimators continues to hold provided that $n(\log T)^{2}/T \to 0$, as $(n,T) \to \infty$. This significantly improves upon the previous condition available in the literature, $n^{2}(\log n)(\log T)^2/T\rightarrow0$, and shows that QR is applicable to the same type of panels as other nonlinear models. In addition, we provide a complete set of corresponding results under conditions that allow for temporal dependence of observations within individuals, thus encompassing a large class of possible empirical applications.

While the results that we obtain are close to what is established in most of the nonlinear panel data literature under smooth conditions, we would like to emphasize that the method of proof is rather different. The main difficulty stems from the non-smooth objective function which renders most of the usual techniques (which crucially rely on Taylor expansions of the objective function) inapplicable. A key insight in the proofs is a detailed analysis of the \textit{expected values} of remainder terms in the classical Bahadur representation for QR, while previous approaches (including \cite{KatoGalvaoMontes-Rojas12} and \cite{GalvaoWang15}) focused on the \textit{stochastic order} of those remainder terms. A similar analysis was previously performed in \cite{VCC} for general QR models with growing dimension under the assumption of independent and identically distributed observations. The proofs involve subtle empirical process arguments, and extending those results to settings with dependent data requires a substantial amount of work. We further note that the results in \cite{VCC} are not directly applicable to either of the estimators we consider here. The FE-QR estimator can not be expressed as an average at all, and several intermediate steps in the Bahadur representation are needed before the ideas from \cite{VCC} become applicable. The MD-QR estimator is itself an average, but a complication in the corresponding analysis is due to the estimated weights, which also depend on estimates of the quantile coefficients in a non-smooth way. 

Finally, we conduct Monte Carlo simulations to study the finite sample properties of the FE-QR and MD-QR estimators and verify our theory. The numerical experiments confirm the theoretical findings. %The simulation study highlights some cases where the MD-QR estimator has moderate bias in panels with small samples, in special small $T$. The numerical experiments also confirm that the finite sample performance improves with the sample size. Importantly, the bias decreases as the sample size increases. This is a very important finding for empirical researchers.

The rest of the paper is organized as follows. Section~\ref{sec:mod} describes the QR model and the estimators. Section~\ref{sec:theo} contains the asymptotic results, theory in the independent case is presented in Section~\ref{sec:indep} while Section~\ref{sec:dep} considers extensions to temporal dependence within individuals. In Section~\ref{sec:sim} we present the Monte Carlo experiments and conclude in Section~\ref{sec:conc}. All proofs are collected in the Appendix.

\section{The model and estimators}\label{sec:mod}
\subsection{The model}
Assume that we have observations $(Y_{it},\X_{it})_{i=1,..,n, t=1,...,T}$ where $Y_{it}$ denotes the response variable for individual $i$ at observation period $t$, and $\X_{it}$ denotes the corresponding $p$-dimensional vector of explanatory variables. In this paper, we use a quantile regression (QR) panel data model with fixed effects (FE) to describe the influence of $\X_{it}$ on the response $Y_{it}$. This model takes the form
\begin{equation}
\label{eq:model}
q_{i,\tau}(\X_{it})=\alpha_{i0}(\tau)+\X_{it}\tr\bm{\beta}_0(\tau), \; t=1,...T, \; i=1,...,n,
\end{equation}
where $q_{i,\tau}(\X_{it})=\inf\{Y:\text{Pr}(Y_{it}<Y|\X_{it})\geq\tau\}$ is the conditional $\tau$-quantile of $Y_{it}$ given $\X_{it}$. Here $\bm{\beta}_0(\tau)$ denotes the vector of common slope coefficients while $\alpha_{i0}(\tau)$ is a quantile-specific individual effect, which is intended to capture individual specific sources of variability, or unobserved heterogeneity that was not adequately controlled by other covariates in the model. In what follows, we will also use the notation $\bm{\gamma}_{i0}(\tau)=(\alpha_{i0}(\tau),\bm{\beta}_{0}(\tau)\tr)\tr$ and $\Z_{it}\tr=(1,\X_{it}\tr)$. This model is semiparametric in the sense that the functional form of the conditional distribution of $Y_{it}$ given $\X_{it}$ is left unspecified and no parametric assumption is made on the relation between $\X_{it}$ and $\alpha_{i0}$. %The parameters $\bm{\beta}_0(\tau)$ can depend on the quantile index $\tau \in (0,1)$. %, however, since $\tau$ is fixed throughout the paper, we suppress such dependence on $\tau$ for simplicity when there is no confusion. 

Next we describe two estimators for the parameters of interest, $\bm{\beta}_{0}(\tau)$, namely the standard fixed effects QR and the minimum distance QR.

\subsection{The estimators}
To estimate the panel QR model, \cite{Koenker04} considers the fixed effects quantile regression (FE-QR) estimator. This is an individual dummy variables estimator, which is a natural analog of the dummy variables estimator for the standard FE mean regression model. However, in contrast to mean regression, there is no general transformation that can suitably eliminate the individual specific effects in the FE-QR estimator, and hence one is required to deal directly with the full problem. The FE-QR is defined as follows\footnote{We follow \cite{Koenker04} in treating the $\alpha_i$ as fixed parameters. An alternative approach which leads to equivalent results is to treat the $\alpha_i$ as random (with no restrictions placed on the dependence with $X_{it}$). In this case the model can be written as $Q_{Y_{it}|X_{it},\alpha_i(\tau)}(\tau) = \beta_0(\tau)^\top x + \alpha_{0i}(\tau)$; here $Q_{Y_{it}|X_{it},\alpha_i(\tau)}$ denotes the conditional quantile function of $Y_{it}$ given $(X_{it},\alpha_i(\tau))$ (see for instance \cite{KatoGalvaoMontes-Rojas12}, \cite{GalvaoWang15} and \cite{GalvaoKato16} for this interpretation). Both interpretations lead to the same asymptotic results.}
\begin{equation}
\label{eq:feqr}
(\widehat{\bm{\alpha}}(\tau),\widehat{\bm{\beta}}(\tau)) := \argmin_{(\bm{\alpha},\bm{\beta}) \in \mathbb{R}^n \times \mathbb{R}^{p}} \frac{1}{nT} \sum_{i=1}^{n} \sum_{t=1}^{T} \rho_{\tau}( Y_{it}-\alpha_{i} - \X_{it}\tr \bm{\beta} ),
\end{equation}
where $\bm{\alpha} := (\alpha_{1},\dots,\alpha_{n})\tr$ and  $\rho_{\tau}(u) := \{ \tau - \bm{1}(u \leq 0) \} u$ is the check function as in \cite{KoenkerBassett78}. We refer to the FE-QR estimator defined in equation \eqref{eq:feqr} as $\widehat{\boldsymbol{\beta}}_{FE}(\tau)$.

In typical applications, the number of individuals can be large and the FE-QR estimator involves optimization with substantial number of parameters to be estimated, which makes the problem computationally cumbersome. Hence, motivated by the practical implementation challenges of the FE-QR, we also consider a simple to implement and efficient QR minimum distance (MD) estimator for panels with fixed effects to estimate the common parameter of interest $\bm{\beta}_{0}(\tau)$.\footnote{The MD estimation is a flexible methodology and has been largely applied to  panel data problems, examples, among others, include \cite{Swamy70}, \cite{Chamberlain82, Chamberlain84}, \cite{AhnSchmidt95}, \cite{HsiaoPesaranTahmiscioglu02}, \cite{Hsiao03}, \cite{Pesaran06}, \cite{LeeMoonWeidner12}, and \cite{MoonShumWeidner17}.}  As in \cite{GalvaoWang15}, we use a minimum distance quantile regression (MD-QR) estimator, $\widehat{\bm{\beta}}_{MD}^{\rm{inf}}(\tau)$, defined as follows
\begin{equation}
\label{eq:femd}
\widehat{\bm{\beta}}_{MD}^{\rm{inf}}(\tau)=\left(\sum_{i=1}^nW_{i}^{-1}\right)^{-1}\sum_{i=1}^nW_i^{-1}\widehat{\bm{\beta}}_{i}(\tau),
\end{equation}
where $\widehat{\bm{\beta}}_{i}(\tau)$ is the slope coefficient estimator from each individual QR problem using the time-series data, i.e. 
\begin{equation*}
\widehat{\bm{\gamma}}_{i}(\tau)=(\widehat{\alpha}_{i}(\tau),\widehat{\bm{\beta}}_{i}(\tau)) := \argmin_{(\alpha_{i},\bm{\beta}_{i}) \in \mathbb{R} \times \mathbb{R}^{p}} \frac{1}{T} \sum_{t=1}^{T} \rho_{\tau}( Y_{it}-\alpha_{i} - \X_{it}\tr \bm{\beta}_{i} ),
\end{equation*}
where $W_i$ denotes the associated variance-covariance matrix of $\widehat{\bm{\beta}}_{i}(\tau)$ for each individual.

However, in applications, the estimator $\widehat{\boldsymbol{\beta}}^{\rm{inf}}_{MD}$, defined in equation \eqref{eq:femd}, is infeasible unless $W_i$ is known for every individual. The feasible estimator is defined with each $W_i$ replaced by its corresponding consistent estimators $\widehat{W}_i$, such that the MD-QR is given by
\begin{equation}
\label{eq:fefmd}
\widehat{\bm{\beta}}_{MD}(\tau)=\left(\sum_{i=1}^n\widehat{W}_{iT}^{-1}\right)^{-1}\sum_{i=1}^n\widehat{W}_{iT}^{-1}\widehat{\bm{\beta}}_{i}(\tau).
\end{equation}
This feasible two-step estimator can be implemented by obtaining, in the first step, consistent estimates for the slope coefficients and their associated variance-covariance matrices, $\widehat{W}_{iT}$. One can obtain such estimates from the standard QR algorithm for each individual separately. In the second step the estimated $W_{i}$'s are substituted into \eqref{eq:femd} which results in \eqref{eq:fefmd}. The specific form of $\widehat{W}_{iT}$ depends on the assumption on the dependence across individuals, detailed expressions for such estimators will be provided later.

\section{Asymptotic theory}\label{sec:theo}

{In this section, we investigate the asymptotic properties of the FE-QR estimator in~\eqref{eq:feqr} and also the MD-QR estimator in~\eqref{eq:fefmd}. We begin by stating and discussing a set of basic assumptions that will be used throughout the remaining parts of this paper. In Section~\ref{sec:indep} we discuss the asymptotic distribution of the FE-QR and the feasible MD-QR estimators given in \eqref{eq:feqr} and \eqref{eq:fefmd}, respectively, under the additional assumption that data within individuals are independent. An extension of those results to the case where temporal dependence of observations within individuals is allowed is provided in Section~\ref{sec:dep}.   } 

\subsection{Basic assumptions}\label{sec:ass}

Throughout the paper, we use the following notations: for a square matrix $A$, let $\|A\|$ denote the maximal absolute eigenvalue of $A$, while for a vector $v$, $\|v\|$ denotes the usual Euclidean norm. Recall that $\Z_{it}^\top = (1, \X_{it}^\top)$ is a vector of dimension $p+1$ and let $\mathcal{Z}$ denote the support of $\Z_{it}$. We make the following assumptions. 

\begin{itemize}
\item[(A0)]\label{A0} The processes $\{ (Y_{it},\X_{it}) : t \in \mathbb{Z} \}$ are strictly stationary for each $i$ and independent across $i$. 
\item[(A1)]\label{A1} Assume that $\|\Z_{it}\| \leq M < \infty$ almost surely, and that $c_{\lambda}\leq\lambda_{\min}(\E [\Z_{it} \Z_{it}^\top])\leq\lambda_{\max}(\E [\Z_{it} \Z_{it}^\top])\leq C_{\lambda}$ holds uniformly in  $i$ for some fixed constants $c_{\lambda}>0$ and $C_{\lambda} <\infty$.
\item[(A2)]\label{A2} The conditional distribution $F_{Y_{i1}|\Z_{i1}}(y|\mathbf{z})$ is twice differentiable w.r.t. $y$, with the corresponding derivatives $f_{Y_{i1}|\Z_{i1}}(y|\mathbf{z})$ and $f'_{Y_{i1}|\Z_{i1}}(y|\mathbf{z})$. Assume that 
\begin{equation*}
f_{max} :=\sup_i \sup_{y \in \mathbb{R},\mathbf{z}\in \mathcal{Z}} |f_{Y_{i1}|\Z_{i1}}(y|\mathbf{z})| < \infty,
\end{equation*}
and 
\begin{equation*}
\overline{f'} := \sup_i \sup_{y \in \mathbb{R},\mathbf{z}\in \mathcal{Z}} |f'_{Y_{i1}|\Z_{i1}}(y|\mathbf{z})| < \infty.
\end{equation*}
\item[(A3)]\label{A3} Denote by $\mathcal{T}$ an open neighborhood of $\tau$. Assume that uniformly across $i$, there exists a constant $f_{\min} < f_{\max}$ such that
\begin{equation*}
0 < f_{\min} \leq \inf_i \inf_{\eta \in \mathcal{T}} \inf_{\mathbf{z} \in \mathcal{Z}} f_{Y_{i1}|\Z_{i1}}(q_{i,\eta}(\mathbf{z})|\mathbf{z}),
\end{equation*}
where $q_{i,\eta}(\mathbf{z})$ is the conditional $\eta$ quantile of $Y_{i1}$ given $\Z_{i1} = \mathbf{z}$. 
\end{itemize}

Condition (A0) assumes that the data are independent across individuals, and strictly stationary within each individual. Additional assumptions on the temporal dependence within individuals will be added later when we state the corresponding results. Condition (A1) poses a boundedness condition on the norm of the regressors, which is also standard in the literature, see for instance \cite{Koenker04, KatoGalvaoMontes-Rojas12, GalvaoWang15}. Condition (A1) also assures that the eigenvalues of $\E [\Z_{it} \Z_{it}^\top]$ are bounded away from zero and infinity uniformly across $i$, similar assumptions were made in \cite{CVC, BCCF}. Conditions (A2) and (A3) impose smoothness and boundedness of the conditional distribution, the density and its derivatives. The same type of assumption has been imposed in \cite{GalvaoWang15}. \cite{CVC} and \cite{BCCF} also make similar assumptions when deriving Bahadur representations for QR estimators in a setting without panel data.

%The uniform upper and lower bounds of the continuous density functions together with Assumption (A1) guarantee that both $\E [f_{it}\bm{Z}_{it}\bm{Z}_{it}\tr ]$ and their inverses are uniformly bounded across $i$.  Therefore, the variance-covariance matrices of the regression quantiles $\widehat{V}_{iT}$ and their inverses are uniformly bounded. 

\subsection{The independent case}\label{sec:indep}

The main contribution of this section is to provide new insights on the asymptotic properties of the FE-QR estimator in~\eqref{eq:feqr} and the feasible version of the MD-QR estimator in \eqref{eq:fefmd} under the assumption that data are independent and identically distributed (i.i.d.) within individuals. We are able to considerably improve the existing conditions on $n,T$, thus reconciling asymptotic results in the panel data QR and nonlinear panel literature. Throughout this section, we will use the following assumption.
\begin{enumerate}
\item[(I)]\label{A4n} Assume that for each $i$, the observations are $(\X_{it},Y_{it})_{t=1,...,T}$ are i.i.d. across $t$. Moreover, assume that $n \to \infty, T \to \infty$ and\footnote{The assumption $n \to \infty$ could be dropped at the cost of more complicated notation. Nevertheless, since this case can be handled by existing results we prefer to concentrate on the more complicated setting $n \to \infty$. In addition, we conjecture that the factor $(\log T)^2$ could be improved, but that seems impossible with our current method of proof. We leave such an improvement for future research.} 
\begin{equation*}
\frac{n (\log T)^2}{T} = o(1).
\end{equation*}
\end{enumerate}

Condition (I) excludes temporal dependence. In Section~\ref{sec:dep}, we extend the results to the dependent case under suitable mixing conditions as in \cite{HahnKuersteiner11}. We note that the independence assumption is used mainly to apply some standard stochastic inequalities; our results are extended in Section~\ref{sec:dep} to the dependent case by replacing these stochastic inequalities by those that hold under suitable dependence conditions.

We begin by discussing the FE-QR estimator. To this end let $f_i(y) := \E[f_{Y_{i1}|\X_{i1}}(y+q_{i,\tau}(\X_{i1})|\X_{i1})]$ denote the marginal density of $Y_{i1} - q_{i,\tau}(\X_{i1})$ and define
\begin{equation} \label{eq:def-gi}
g_i := \E[f_{Y_{i1}|\X_{i1}}(q_{i,\tau}(\X_{i1})|\X_{i1})\X_{i1}]/f_i(0).
\end{equation}
Let 
\begin{equation} \label{eq:def-gamman}
\Gamma_n := \frac{1}{n}\sum_{i=1}^n \E[f_{Y_{i1}|\X_{i1}}(q_{i,\tau}(\X_{i1})|\X_{i1})\X_{i1}(\X_{i1}-g_i)^\top].
\end{equation}
\begin{enumerate}
\item[(FI)] Assume that $\Gamma_n$ is non-singular for each $n$, that $\Gamma := \lim_{n \to \infty} \Gamma_n$ exists and is non-singular and that $L := \lim_{n \to \infty} n^{-1}\sum_{i=1}^n \E[(\X_{i1}-g_i)(\X_{i1}-g_i)^\top]$ exists and is non-singular. 
\end{enumerate}

\begin{theorem}\label{th1FE}
Assume that (A0)--(A3), (I), (FI) hold. Then, for fixed $\tau\in(0,1)$, we have that
\begin{equation}\label{eq:mainFE}
\sqrt{nT}(\widehat{\bm{\beta}}_{FE}(\tau) - \bm{\beta}_{0}(\tau)) \stackrel{\mathcal{D}}{\longrightarrow} \mathcal{N}\big(0,\tau(1-\tau)\Gamma^{-1}L\Gamma^{-1}\big).
\end{equation}
\end{theorem}

The limiting distribution above is the same as in Theorem~3.2 of \cite{KatoGalvaoMontes-Rojas12}. However, in contrast to the assumption $n^2(\log n)^3/T = o(1)$ in the latter reference we only require $n (\log T)^2/T = o(1)$. Some intuition on why such an improvement is possible is provided in Remark~\ref{rem:proofideasFE} below. Finally, note that consistent estimation of the asymptotic variance-covariance matrix $\tau(1-\tau)\Gamma^{-1}L\Gamma^{-1}$ was discussed in \cite{KatoGalvaoMontes-Rojas12}, see their Proposition~3.1. Since that result does not require a stringent growth condition on $n$ it continues to be applicable under our weaker assumptions.

We next discuss the MD-QR estimator. To complete the computation of the MD-QR estimator in \eqref{eq:fefmd} we need an estimate $\widehat{W}_{iT}$. Assuming that data within individuals are i.i.d., the asymptotic covariance matrix of $\widehat{\bm{\gamma}}_{i}(\tau)=(\widehat{\alpha}_{i}(\tau),\widehat{\bm{\beta}}_{i}(\tau))$ takes the form $V_i = B_i^{-1}A_iB_i^{-1}$ where
\begin{equation}
\label{eq:AiBi}
A_i := \tau (1-\tau) \E[\Z_{i1}\Z_{i1}^\top], \quad B_i = \E[f_{Y\mid \Z}(q_{i,\tau}(\Z_{i1})\mid \Z_{i1})\Z_{i1}\Z_{i1}^\top].
\end{equation}
{Note that $W_{i}$ is the lower $p\times p$ sub-matrix of $V_{i}$. Hence by estimating $V_{i}$ consistently we recover a consistent estimator of $W_{i}$.} A common way to estimate $V_{i}$ uses the Hendricks-Koenker sandwich covariance matrix estimator (\cite{HendricksKoenker91}) which takes the following form
\begin{equation}
\label{eq:HKvar}
\widehat{V}_{iT} := \widehat{B}_{iT}^{-1} \widehat{A}_{iT} \widehat{B}_{iT}^{-1},
\end{equation}
with 
\begin{align*}
\label{eq3}
\widehat{B}_{iT} & :=  \frac{1}{T} \sum_{t=1}^{T} \widehat{f}_{it} \Z_{it}\Z_{it}^\top, \quad \widehat{A}_{iT} := \tau (1-\tau)\frac{1}{T} \sum_{t=1}^{T}  \Z_{it}\Z_{it}^\top, \quad
\widehat{f}_{it}  := \frac{2d_T}{\Z_{it}^\top (\widehat{\bm{\gamma}}_{i}(\tau + d_T) - \widehat{\bm{\gamma}}_{i}(\tau - d_{T}))}.
\end{align*}
Here $d_T$ denotes a smoothing parameter that should converge to zero at an appropriate rate in order to guarantee consistency of $\widehat{B}_{iT}$, which is imposed in assumption (MI) below. 

Given the definitions above, we consider the following MD-QR estimator  
\begin{equation}
\label{eq4}
\widehat{\bm{\beta}}_{MD}(\tau)=\Big(\sum_{i=1}^n\widehat{W}_{iT}^{-1}\Big)^{-1}\sum_{i=1}^n\widehat{W}_{iT}^{-1}\widehat{\bm{\beta}}_{i}(\tau),
\end{equation}
where $\widehat{W}_{iT}$ is the lower $p\times p$ sub-matrix of $\widehat{V}_{iT}$ in \eqref{eq:HKvar}. Since consistency of a closely related estimator (the only difference is the form of the variance estimators $\widehat{W}_{iT}$) was established in \cite{GalvaoWang15}, we focus on conditions which ensure asymptotic normality. {In addition to (A0)-(A3), (I) we make the following assumption  
\begin{enumerate}
\item[(MI)]\label{A4} Assume that $d_T = o(1)$, as $T \to \infty$ and
\begin{equation*}
\frac{\log T}{T d_T^2} = o(1).
\end{equation*}
Let $W_i$ denote the lower $p\times p$ sub-matrix of $V_i$ and assume that $\left( \frac{1}{n}\sum_{i=1}^{n} W_i^{-1} \right)^{-1}$
is a well-defined positive definite matrix if $n$ is fixed and that $\Sigma_{1} := \lim_{n\to \infty} (\frac{1}{n}\sum_{i=1}^{n} W_i^{-1})^{-1}$ exists and is a well-defined positive definite matrix if $n \to \infty$.
\end{enumerate}
}

The asymptotic normality of the MD-QR estimator is formalized in the next theorem.

\begin{theorem}\label{th1}
Assume that (A0)--(A3), (I) and (MI) hold. In this case, for fixed $\tau$, we have that
\begin{equation}\label{eq:main}
\sqrt{nT}(\widehat{\bm{\beta}}_{MD}(\tau) - \bm{\beta}_{0}(\tau)) \stackrel{\mathcal{D}}{\longrightarrow} \mathcal{N}\big(0,\Sigma_{1}\big).
\end{equation}
\end{theorem}

\bigskip

%Theorem \ref{th1} shows that, as $T\to\infty$ and $n\to\infty$ with $n(\log T)^2/T\rightarrow 0$, the MD-QR estimator is asymptotically normal with the asymptotic covariance matrix equal to that of the infeasible MD-QR estimator. 

An important by-product of the proof of Theorem \ref{th1} is that we are able to estimate the asymptotic variance-covariance matrix, $\Sigma_{1}$, consistently, uniformly over a growing number of individuals. More precisely, we obtain the following result. 

\begin{lemma}
\label{lemma:CovHK}
Under the conditions of Theorem \ref{th1}, we have $\max_{i=1,...,n}\|\widehat{B}_{iT} - B_{i} \| = o_P(1)$ and $\max_{i=1,...,n}\|\widehat{A}_{iT} - A_{i} \| = o_P(1)$. Moreover, $\widehat{\Sigma}_1 := (\frac{1}{n}\sum_{i=1}^n \widehat{W}_{iT}^{-1} )^{-1}$ is a consistent estimator for $\Sigma_1$.
\end{lemma}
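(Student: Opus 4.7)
The plan is to establish the three claims in order: (i) $\max_i\|\widehat A_{iT}-A_i\|=o_P(1)$, (ii) $\max_i\|\widehat B_{iT}-B_i\|=o_P(1)$, and (iii) consistency of $\widehat\Sigma_1$ via matrix perturbation. Step (i) is the easiest: by (A1) the entries of $\Z_{it}\Z_{it}\tr$ are bounded by $M^2$, and within each $i$ the $\Z_{it}$ are i.i.d.\ by (A4). An entry-wise application of Hoeffding's inequality combined with a union bound over $i\le n$ and the $(p+1)^2$ matrix entries yields
\[
\P\big(\max_{i\le n}\|\widehat A_{iT}-A_i\|>\epsilon\big)\le Cn\exp(-cT\epsilon^2),
\]
which tends to zero whenever $\log n=o(T)$, comfortably ensured by (A4).

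For step (ii), I would decompose
\[
\widehat B_{iT}-B_i=\frac{1}{T}\sum_{t=1}^T(\widehat f_{it}-f_{it})\Z_{it}\Z_{it}\tr \;+\; \Big(\frac{1}{T}\sum_{t=1}^T f_{it}\Z_{it}\Z_{it}\tr-B_i\Big),
\]
with $f_{it}:=f_{Y|\Z}(q_{i,\tau}(\Z_{it})|\Z_{it})$. The second summand is $o_P(1)$ uniformly in $i$ by the same Hoeffding plus union-bound argument as in step (i), since $f_{it}\le f_{\max}$ and $\|\Z_{it}\Z_{it}\tr\|\le M^2$. The first summand has norm at most $M^2\max_t|\widehat f_{it}-f_{it}|$, so it remains to show $\max_{i,t}|\widehat f_{it}-f_{it}|=o_P(1)$. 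Writing $\widehat f_{it}^{-1}=\Z_{it}\tr(\widehat{\bm{\gamma}}_i(\tau+d_T)-\widehat{\bm{\gamma}}_i(\tau-d_T))/(2d_T)$ and adding/subtracting the population analogue, I split $\widehat f_{it}^{-1}-f_{it}^{-1}$ into a smoothing bias $\Z_{it}\tr(\bm{\gamma}_i(\tau+d_T)-\bm{\gamma}_i(\tau-d_T))/(2d_T)-f_{it}^{-1}$ and an estimation error. By a symmetric-difference Taylor expansion at $\tau$ (the linear term cancels by symmetry, and (A2) controls the quadratic remainder), the bias is $O(d_T)=o(1)$ uniformly in $i$ and $t$. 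The estimation error is bounded by $Md_T^{-1}\max_i\max_{\eta\in\{\tau\pm d_T\}}\|\widehat{\bm{\gamma}}_i(\eta)-\bm{\gamma}_i(\eta)\|$; a uniform-in-$i$ Bahadur-type bound, of the sort underlying the proof of Theorem~\ref{th1}, gives this maximum as $O_P(\sqrt{\log T/T})$ under (A4), so the estimation error is $O_P(\sqrt{\log T/(Td_T^2)})=o_P(1)$ by the second part of (A4). Since $f_{it}^{-1}\ge 1/f_{\max}$ is bounded away from zero uniformly, convergence of $\widehat f_{it}^{-1}$ transfers to $\widehat f_{it}$.

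For step (iii), the eigenvalue lower bounds $\lambda_{\min}(A_i)\ge\tau(1-\tau)c_\lambda$ and $\lambda_{\min}(B_i)\ge f_{\min}c_\lambda$ from (A1) and (A3) hold uniformly in $i$, so on an event of probability tending to one all $\widehat A_{iT}$ and $\widehat B_{iT}$ have eigenvalues lying in a fixed compact subset of $(0,\infty)$. Continuous mapping applied to $(B,A)\mapsto BA^{-1}B$ on this set yields $\max_i\|\widehat W_{iT}-W_i\|=o_P(1)$, so $\|\tfrac{1}{n}\sum_i\widehat W_{iT}-\tfrac{1}{n}\sum_i W_i\|=o_P(1)$. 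Combining with the assumed convergence $\tfrac{1}{n}\sum_i W_i\to\Sigma_1^{-1}$ and inverting by continuous mapping completes the proof.

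The only non-routine ingredient is the cross-sectional uniform Bahadur representation invoked in step (ii); this is precisely the uniform control developed as part of the proof of Theorem~\ref{th1}, and once it is granted, the remainder of the argument reduces to Hoeffding-type bounds and standard matrix perturbation.
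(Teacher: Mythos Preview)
Your proof is correct and takes essentially the same approach as the paper, which simply cites its Lemma~\ref{lem:aitbit}: uniform rates for $\widehat A_{iT}$ and $\widehat B_{iT}$ via exponential inequalities together with the uniform Bahadur control of Lemma~\ref{VCClemma}, followed by matrix perturbation. The only minor difference is that you compare $\widehat f_{it}$ directly to the true conditional density rather than to the paper's intermediate finite-difference quantity $2d_T/(q_{i,\tau+d_T}-q_{i,\tau-d_T})$, which yields the slightly cruder rate $O_P\big(\sqrt{\log T/(Td_T^2)}\big)$ for the estimation error instead of the paper's $O_P\big(\sqrt{\log T/(Td_T)}\big)$ obtained by exploiting the small variance of $\psi_{i,\tau+d_T}-\psi_{i,\tau-d_T}$; either rate is $o_P(1)$ under (A4), so nothing is lost for this lemma.
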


The result in Lemma \ref{lemma:CovHK} allows for the construction of simple to implement inference procedures based on the Wald statistic.

\begin{remark}{\rm In the proof of Theorem \ref{th1} in the Appendix (see Lemma~\ref{lem:aitbit}), we also provide uniform convergence rates and Bahadur representations of the estimators $\widehat{A}_{iT},\widehat{B}_{iT}$. Those results directly yield a corresponding uniform rate and Bahadur representation for the Hendricks-Koenker covariance matrix estimator $\widehat{V}_{it}$. This result may be of independent interest. }
\end{remark}

The following remarks describe the intuition for the proofs of Theorems~\ref{th1FE} and \ref{th1}. Since the proof of Theorem \ref{th1FE} builds on intermediate results as well as intuition from the proof of Theorem~\ref{th1}, we begin by discussing some of the key ideas in the  proof of the latter result, which, to the best of our knowledge, are not present in the existing literature.

%The proof of Theorem \ref{th1} contains ideas that are, to the best of our knowledge, not present in the existing literature. We highlight the main steps in the following remark.

\begin{remark}[Intuition for proofs of the MD estimator] \label{rem:proofideas} {\rm 
Theorem~\ref{th1} shows asymptotic normality of $\widehat{\bm{\beta}}_{MD}(\tau)$ under the assumption $n (\log T)^2/T = o(1)$, which considerably relaxes previous conditions of the form $n^2 (\log n)(\log T)^2/T = o(1)$ (see \cite{GalvaoWang15}). The crucial insight for providing the improved condition comes from a closer analysis of the remainder term in the Bahadur representation for $\widehat{\zb}_{MD}$ 
\begin{equation}\label{eq:mdex}
\widehat{\bm{\beta}}_{MD}(\tau) - \bm{\beta}_{0}(\tau) = \frac{1}{n} \sum_{i=1}^{n} W_{i}^{-1} \frac{1}{T}\sum_{t=1}^{T} \phi_{i,\tau}(\Z_{it}, Y_{it}) + \frac{1}{n} \sum_{i=1}^{n} r_{n,i}(\tau) + o_p\Big(\frac{1}{\sqrt{nT}}\Big),
\end{equation}
where $\phi_{i,\tau}(\Z_{it}, Y_{it})$ denotes the last $p$ entries of $B_i^{-1} \Z_{it} (\1(Y_{it} \leq q_{i,\tau}(\Z_{it}))-\tau)$. Previous approaches bound the sum of remainder terms $\frac{1}{n} \sum_{i=1}^{n} r_{n,i}(\tau)$ by $\sup_i|r_{n,i}(\tau)|$, which is of order $O_P((\log T)^b/T^{3/4})$ for some constant $b$. The power of $T$ in this bound cannot be improved, which seems to suggest that in order to obtain unbiased asymptotic normality the condition $\sup_i|r_{n,i}(\tau)| = o_P((nT)^{-1/2})$ needs to be imposed. This gave rise to the condition $n^2 (\log n)(\log T)^2/T = o(1)$ in \cite{GalvaoWang15}. 

The intuitive reason for why this approach results in conditions that are too strong is that the remainder terms $r_{n,i}$ can be viewed as independent (across $i$) random vectors, and thus the order of their mean is governed by their expected values, while the variance is of the order $n^{-1}\sup_i Var(r_{n,i}(\tau)) = o((nT)^{-1/2})$. In order to make this intuition precise, we need to derive a more detailed expansion for $\widehat{\bm{\beta}}_{MD}(\tau) - \bm{\beta}_{0}(\tau)$ which allows to bound the expected values of the remainder terms. To achieve this, we start by deriving a Bahadur type expansion for the estimated weights which takes the form (see equation \eqref{Vit1} in the Appendix)
\begin{equation*}
\widehat{W}_{iT}^{-1} - W_{iT}^{-1} = \frac{1}{T} \sum_{t} \tilde{\eta}_{iT}(\Z_{it}, Y_{it}) + \tilde{R}_{3i}(\tau),
\end{equation*}
where $\tilde{\eta}_{iT}(\Z_{it}, Y_{it})$ are centered random variables with $(\tilde{\eta}_{iT}(\Z_{it}, Y_{it}))_{t=1,...,T}$ i.i.d. for all $i$ and the remainder terms $\tilde{R}_{3i}(\tau)$ are small in a suitable sense. A similar expansion is derived for the estimators $\widehat{\bm{\beta}}_i$ which is given by (see Lemma~\ref{VCClemma} in the Appendix)
\begin{equation*}
\widehat{\bm{\beta}}_{i}(\tau) - \bm{\beta}_{0}(\tau) = \frac{1}{T}\sum_{t=1}^{T} \phi_{iT}(\Z_{it}, Y_{it}) + R_{iT}^{(1)}(\tau) + R_{iT}^{(2)}(\tau),
\end{equation*}
where again $\phi_{iT}(\Z_{it}, Y_{it})$ are i.i.d. and centered, $R_{iT}^{(2)}(\tau)$ are `small' uniformly, and $\E[R_{iT}^{(1)}(\tau)]$ is `small'. Combining the two representations above yields the representation in \eqref{eq:mdex}, and a detailed analysis of all remainder terms shows that, in fact, $\sup_i \Big\| \E[r_{n,i}(\tau)] \Big\| = O(\log T/T)$, which gives rise to the less restrictive condition $\log T/T = o((nT)^{-1/2})$.

We conclude this remark by noting that the idea of analyzing expected values of remainder terms when aggregating QR from subsets was also used in \cite{VCC}. However, the setting we consider here is different from the latter paper since we also need to take into account the $\widehat{W}_{iT}^{-1}$ which also depend on $\widehat{\bm{\beta}}_{i}(\tau \pm d_T)$ in a non-smooth way. This considerably complicates the asymptotic analysis.   
}
\end{remark}

{
\begin{remark}[Intuition for proofs of the FE estimator] \label{rem:proofideasFE} {\rm Similarly to the discussion of the MD estimator given above, the main improvement in the conditions for the FE estimator comes from a closer analysis of the remainder terms in the Bahadur representation of $\widehat\zb_{FE}(\tau)$. However, the structure of the estimator is quite different and so a different analysis is required. More precisely, a closer look at the proof of Theorem~3.2 in \cite{KatoGalvaoMontes-Rojas12} shows that the remainder term which is responsible for the condition $n(\log n)^2/T = o(1)$ in the latter paper is of the form
\[
\frac{1}{n} \sum_{i=1}^n \H_{ni}^{(3)}(\widehat\alpha_i,\widehat\zb) - \H_{ni}^{(3)}(\alpha_{i0},\zb_0) - \Big\{H_{ni}^{(3)}(\widehat\alpha_{i},\widehat\zb) - H_{ni}^{(3)}(\alpha_{i0},\zb_0)\Big\},
\]
where $\H_{ni}^{(3)} - H_{ni}^{(3)}$ denotes certain empirical processes with each process using only observations from individual $i$ (see the beginning of section A.2 for a formal definition). A direct analysis of this remainder in \cite{KatoGalvaoMontes-Rojas12} leads to a stochastic order $O_P((\log(n)/T)^{3/4})$ which can not be improved (in terms of the power of $T$) and gives rise to the condition $n^2(\log n)^3/T = o(1)$ in the latter paper. Given the discussion in the MD case it would seem natural to again look at the expected value of this remainder term. However, this is not directly helpful since the terms in the sum are dependent across $i$ through the common $\widehat \zb$ and the $\widehat{\alpha}_i$ so that a direct control of the variance of this term is infeasible. The key idea to solve those difficulties is to first show that this remainder term is sufficiently close to 
\[
\frac{1}{n} \sum_{i=1}^n \H_{ni}^{(3)}(\widetilde\alpha_i,\zb_0) - \H_{ni}^{(3)}(\alpha_{i0},\zb_0) - \Big\{H_{ni}^{(3)}(\widetilde\alpha_{i},\zb_0) - H_{ni}^{(3)}(\alpha_{i0},\zb_0)\Big\}
\]
where $\widetilde \alpha_i := \argmin_{a \in R} \sum_{t=1}^T \rho_\tau(Y_{it} - \X_{it}^\top \zb_0 - a)$. In contrast to the $\widehat\alpha_i$ which are dependent across $i$ because they result from running one single regression the $\widetilde\alpha_i$ only use data from individual $i$. Hence the terms in the above sum are now independent across $i$ and the ideas from~\cite{VCC} can be applied to show that their expected values are of order $T^{-1}\log T$ while the variance of the sum is negligible due to averaging. This eventually leads to the condition $\log T/T = o((nT)^{-1/2})$ which explains the source of our less restrictive growth condition.
}
\end{remark}

}

\subsection{The dependent case}\label{sec:dep}

In this section we extend the asymptotic results provided in Section~\ref{sec:indep} to settings where we allow for dependence across time while still maintaining independence across individuals. We continue to use notations introduced in the previous section. 

{ Throughout this section we will make the following assumptions on the temporal dependence structure.

\begin{itemize}
\item[(D1)]\label{B1} For each $i \geq 1$, the process $\{ (Y_{it},\X_{it}) : t \in \mathbb{Z} \}$ is strictly stationary and $\beta$-mixing. Let $\beta_{i}(j)$ denote the $\beta$-mixing coefficient of the process $\{ (Y_{it},\X_{it}) :t \in \mathbb{Z}  \}$. Assume that there exist constants $b_\beta \in (0,1), C_\beta > 0$ independent of $i$ such that
\begin{equation*}
\sup_i \beta_i(j) \leq C_{\beta} b_\beta^j =: \beta(j) \quad \forall j \geq 1.
\end{equation*}
\item[(D2)]\label{B2} For each $i=1,...,n$ and $j>1$, the random vector $(Y_{i1},Y_{i 1+j})$ has a density conditional on $(\Z_{i1},\Z_{i 1+j})$ and this density is bounded uniformly across $i,j$.
\end{itemize}
}
Condition (D1) relaxes the assumption of i.i.d. within each individual to that of stationary $\beta$-mixing which is used in \cite{KatoGalvaoMontes-Rojas12, GalvaoWang15} and is similar to \cite{HahnKuersteiner11}. The $\beta$-mixing condition is stronger than $\alpha$-mixing. Nevertheless, $\beta$-mixing is still satisfied in a reasonably large class of time series models.\footnote{For example, consider the MA $(\infty)$ process $X_{t}= \sum_{j=0}^{ \infty}a_{j} \varepsilon_{t-j}$, where $a_{j} \to 0$ exponentially fast (note that the ARMA($p,q$) process, subject to standard assumptions, fulfils this condition), and $\{ \varepsilon_{t} \}$ is i.i.d. If the density function of $\varepsilon_{t}$ exists, then $\{ X_{t}\}$ is $\beta$-mixing with $\beta(n) \to 0$ exponentially fast.} Condition (D2) is needed because the data are not i.i.d. and we need to impose a condition on the joint distributions, {similar conditions were imposed in \cite{KatoGalvaoMontes-Rojas12, GalvaoWang15}. }

{To state the asymptotic properties of $\widehat\zb_{FE}(\tau)$ we make the following additional assumption.

\begin{enumerate}
\item[(FD)] Assume that $\Gamma_n$ is non-singular for each $n$ and that $\lim_{n \to \infty} \Gamma_n$ exists and is non-singular. Further assume that 
\[
\widetilde L := \lim_{n \to \infty} \frac{1}{n}\sum_{i=1}^n Var\Big(T^{-1/2}\sum_{t=1}^T (\X_{it}-g_i)(\tau - \1\{Y_{it} \leq \X_{it}^\top \zb_0- \alpha_{i0}\})\Big)
\]
exists and is non-singular.
\end{enumerate}

This assumption was also made by \cite{KatoGalvaoMontes-Rojas12} (see their condition (D3)), it involves the long run covariance matrix of the leading piece in the Bahadur representation for $\widehat \zb_{FE}(\tau)$.

\begin{theorem}\label{th2FE}
Assume that (A0)--(A3), (D1)--(D2) and (FD) hold. If also $n (\log T)^4/T = o(1)$ then, for fixed $\tau$, 
\begin{equation}\label{eq:mainFE_dep}
\sqrt{nT}(\widehat{\bm{\beta}}_{FE}(\tau) - \bm{\beta}_{0}(\tau)) \stackrel{\mathcal{D}}{\longrightarrow} \mathcal{N}\big(0,\tau(1-\tau)\Gamma^{-1}\widetilde L\Gamma^{-1}\big).
\end{equation}
\end{theorem}

Similarly as in the i.i.d. case we obtain the same limiting behaviour as \cite{KatoGalvaoMontes-Rojas12} but under a weaker condition on the growth rate of $n$ relative to $T$. 
}

We next discuss the MD-QR estimator. In order to construct the MD-QR estimator with dependent data, we need to account for the dependence when estimating the covariance matrix. To this end we extend the Hendricks-Koenker estimator in the previous section. Note that the limiting variance of $\widehat{\bm{ \gamma}}_{i}$ in the dependent case takes the form $\widetilde V_{i} = B_{i}^{-1}\widetilde{A}_{i} B_{i}^{-1}$ where $B_i$ is the same as in the independent case (see~\eqref{eq:AiBi}) and 
\begin{equation*}
\widetilde{A}_i := \tau (1-\tau) \E[\Z_{i1}\Z_{i1}^\top] + \sum_{j=1}^{\infty} \E(w_{i1}w_{i1+j}^\top + w_{i1+j} w_{i1}^\top),
\end{equation*}
with $w_{it} := \Z_{it} (\tau - \1(Y_{it} \leq q_{i,\tau}(\bm{Z}_{it}))$. This motivates the following estimator of the limiting variance matrix of $\widehat{\bm{\gamma}}_{i}(\tau)$
\begin{equation}
\label{eq:Dvar}
\widetilde{V}_{iT} := \widehat{B}_{iT}^{-1} \widetilde{A}_{iT}\widehat{B}_{iT}^{-1},
\end{equation}
where $\widehat{B}_{iT} :=  \frac{1}{T} \sum_{t=1}^{T} \widehat{f}_{it} \Z_{it}\Z_{it}^\top$ is defined in the same way as in the independent case and the estimator $\widetilde{A}_{iT}$ now takes the form 
\begin{equation*}
\widetilde{A}_{iT} := \frac{\tau(1-\tau)}{T}\sum_{t=1}^{T} \Z_{it} \Z_{it}^\top + \sum_{1 \leq j \leq m_{T}}\Big(1- \frac{j}{T}\Big) \Big(\frac{1}{T}\sum_{t\in T_{j}} (\widehat{w}_{it} \widehat{w}_{it+j}^\top+\widehat{w}_{it+j}\widehat{w}_{it}^\top)\Big),
\end{equation*}
with $T_j := \{1\leq t \leq T - j\}$. Here $m_T$ is a `bandwidth' parameter (which needs to increase to infinity in order to obtain consistent estimation), and
\begin{equation*}
\widehat{w}_{it} := \Z_{it} (\tau - \1(Y_{it} \leq \widehat{\bm{\gamma}}_{i}(\tau)^\top \Z_{it})).
\end{equation*}
The estimator $\widehat{\bm{\beta}}_{MD}(\tau)$ is defined as in \eqref{eq4} with the new estimator $\widetilde{V}_{iT}$ in \eqref{eq:Dvar} instead of $\widehat{V}_{iT}$. 

To derive the asymptotic distribution of $\widehat{\bm{\beta}}_{MD}(\tau)$ we consider the following set of assumptions.

\begin{itemize}
%\item[(D1)]\label{B1} For each $i \geq 1$, the process $\{ (Y_{it},\X_{it}) : t \in \mathbb{Z} \}$ is strictly stationary and $\beta$-mixing. Let $\beta_{i}(j)$ denote the $\beta$-mixing coefficient of the process $\{ (Y_{it},\X_{it}) :t \in \mathbb{Z}  \}$. Assume that there exist constants $b_\beta \in (0,1), C_\beta > 0$ independent of $i$ such that
%\begin{equation*}
%\sup_i \beta_i(j) \leq C_{\beta} b_\beta^j =: \beta(j) \quad \forall j \geq 1.
%\end{equation*}
%\item[(D2)]\label{B2} For each $i=1,...,n$ and $j>1$, the random vector $(Y_{i1},Y_{i 1+j})$ has a density conditional on $(\Z_{i1},\Z_{i 1+j})$ and this density is bounded uniformly across $i,j$.
\item[(MD)]\label{B3} Assume that $d_T \to 0$, $m_T \to \infty$ and
\begin{equation*}
\frac{n (\log T)^4 m_T ^3}{T} = o(1), \quad \frac{\log T}{T d_T^2} = o(1).
\end{equation*}
\end{itemize}

Condition (MD) is similar to (MI) and imposes a restriction on the relative growth of the time dimension compared to number of individuals.

The following theorem provides the asymptotic normality result under stationary $\beta$-mixing dependence, and is an extension of Theorem~\ref{th1} in the previous section.

\begin{theorem} \label{th2}
Assume that (A0)--(A3) and (D1)--(D2) and (MD) hold and let $\widetilde W_{iT}$ denote the lower $p\times p$ sub-matrix of $\widetilde V_{iT}$. Assume that $\Sigma_{2}:= \lim_{n\to \infty} \Big(\frac{1}{n}\sum_{i=1}^{n} \widetilde  W_{iT}^{-1}\Big)^{-1}$ exists and is a well-defined positive definite matrix. In this case, for fixed $\tau$, we have that
\begin{equation}\label{eq:main_dep}
\sqrt{nT}(\widehat{\bm{\beta}}_{MD}(\tau) - \bm{\beta}_{0}(\tau)) \stackrel{\mathcal{D}}{\longrightarrow} N\big(0,\Sigma_{2} \big).
\end{equation}
\end{theorem}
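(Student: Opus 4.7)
The plan is to mirror the three-step decomposition behind Theorem~\ref{th1} (a uniform Bahadur expansion for each $\widehat{\bm{\beta}}_i$, an analogous expansion for the estimated weight matrix, and a Lindeberg–Feller CLT across $i$), replacing every i.i.d.\ concentration tool by its $\beta$-mixing counterpart. Under (B1) geometric mixing delivers Bernstein-type tail bounds and bracketing-entropy maximal inequalities for uniformly bounded VC classes (via blocking plus chaining, as in \cite{HahnKuersteiner11,KatoGalvaoMontes-Rojas12}). These produce deviation bounds of order $\sqrt{\log T/T}$ for empirical processes over $T$ mixing observations, up to extra $\log T$ factors relative to the i.i.d.\ case, which is exactly what widens condition (A4) into the $(\log T)^4$ appearing in (B3).

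First I would re-derive the Bahadur representation
\begin{equation*}
\widehat{\bm{\beta}}_i(\tau)-\bm{\beta}_{i0}(\tau)=\frac{1}{T}\sum_{t=1}^{T}\phi_{iT}(\Z_{it},Y_{it})+R_{iT}^{(1)}(\tau)+R_{iT}^{(2)}(\tau)
\end{equation*}
uniformly in $i$, where (A1)–(A3) and (B2) supply the smoothness and joint-density bounds needed, the mixing maximal inequality controls $R_{iT}^{(2)}$ in probability, and $\|\E[R_{iT}^{(1)}]\|=O(\log T/T)$ as in the proof of Theorem~\ref{th1}. Next I would expand $\widehat W_{iT}$ starting from $\widetilde V_{iT}^{-1}=\widehat B_{iT}\widetilde A_{iT}^{-1}\widehat B_{iT}$. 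Since $\widehat B_{iT}$ has the same functional form as in Section~\ref{sec:indep}, its expansion (and Lemma~\ref{lemma:CovHK}-type uniform consistency) transfers directly once $\widehat{\bm{\gamma}}_i(\tau\pm d_T)$ is shown to be uniformly consistent under mixing. The genuinely new object is the HAC-type estimator $\widetilde A_{iT}$, which I would handle in three sub-steps: (i) replace $\widehat w_{it}$ by $w_{it}$, absorbing the indicator error through a Knight-style identity together with (B2) and the uniform Bahadur rate; (ii) bound the truncation bias $\sum_{j>m_T}\E[w_{i1}w_{i1+j}^\top]$ by $C b_\beta^{m_T}$ using the $\beta$-mixing covariance inequality; (iii) concentrate the centred double sum via a mixing Bernstein bound, yielding $\max_i\|\widetilde A_{iT}-\widetilde A_i\|=O_P(m_T\sqrt{\log T/T})$ and an expected error of order $m_T\log T/T$.

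Plugging the two expansions into $\widehat{\bm{\beta}}_{MD}$ gives the analog of \eqref{eq:mdex},
\begin{equation*}
\widehat{\bm{\beta}}_{MD}(\tau)-\bm{\beta}_0(\tau)=\frac{1}{n}\sum_{i=1}^{n}\widetilde W_i\frac{1}{T}\sum_{t=1}^{T}\phi_{i,\tau}(\Z_{it},Y_{it})+\frac{1}{n}\sum_{i=1}^{n}r_{n,i}(\tau)+o_P\!\big((nT)^{-1/2}\big).
\end{equation*}
Independence across $i$ lets me split $n^{-1}\sum_i r_{n,i}$ into its mean and a centred sum: the variance piece is controlled by $n^{-1}\sup_i\Var(r_{n,i})$, while the bias $\sup_i\|\E[r_{n,i}(\tau)]\|$ absorbs one $\log T/T$ from the QR Bahadur step and one $m_T(\log T)^2/T$ from the HAC step. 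Requiring both to be $o((nT)^{-1/2})$ reproduces exactly the growth condition $n m_T^3(\log T)^4/T=o(1)$ in (B3). Applying a Lindeberg–Feller CLT to the leading linear term (individual CLTs under (B1) with covariance $\widetilde V_i$, aggregated via independence across $i$) delivers the $N(0,\Sigma_2)$ limit. The hardest step will be (iii): uniformly controlling the HAC estimator while simultaneously tracking the \emph{expected value} of its error after the non-smooth plug-in of $\widehat{\bm{\gamma}}_i$ into the indicator defining $\widehat w_{it}$. The delicate interplay between the growing lag-window $m_T$, the Bahadur remainder of $\widehat{\bm{\gamma}}_i$, and mixing concentration is precisely what forces (B3) to be strictly stronger than (A4).
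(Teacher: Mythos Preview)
Your overall architecture matches the paper's: derive a Bahadur expansion for $\widehat{\bm{\gamma}}_i$ under mixing, expand $\widetilde V_{iT}^{-1}-V_{iT}^{-1}$ into a centred linear piece plus a uniformly small remainder, substitute both into the MD-QR definition, and finish with a Lindeberg--Feller CLT across $i$. But two steps in your outline underestimate the work required and one of them is a genuine gap.

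The central difficulty you gloss over is the claim that $\|\E[R_{iT}^{(1)}(\eta)]\|=O(\log T/T)$ ``as in the proof of Theorem~\ref{th1}.'' In the i.i.d.\ case this comes from a leave-one-out argument (Theorem~S.2.1 in \cite{VCC}): when computing $\E[r_{i,2}]$, one replaces $\widehat{\bm{\gamma}}_i$ by a version $\widehat{\bm{\gamma}}_i^{(-t)}$ that is independent of $(\Z_{it},Y_{it})$, so the expectation of the centred empirical process term vanishes. Under $\beta$-mixing this independence is lost and the argument does not transfer. The paper instead introduces a leave-one-\emph{block}-out construction $\widehat{\bm{\gamma}}_i^{(-J_t)}$ with $|J_t|\asymp \log T$, controls $\sup_t\|\widehat{\bm{\gamma}}_i^{(-J_t)}-\widehat{\bm{\gamma}}_i\|$ via a self-bounding inequality over the class $\mathcal{G}_2(\delta)$, and then uses the $\beta$-mixing coupling (Lemma~2.6 in \cite{Yu1994}) to make the residual expectation of order $\beta(L\log T)$. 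The outcome is $\|\E[R_{iT}^{(1)}]\|=O((\log T)^2/T)$, not $O(\log T/T)$, and it is precisely this extra $\log T$ that produces the $(\log T)^4$ in (B3). Your proposal contains no mechanism for decoupling $\widehat{\bm{\gamma}}_i$ from the summand indexed by $t$, so as written the expectation bound would fail.

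Two smaller discrepancies: First, the paper does \emph{not} track the expectation of the HAC error. Instead it places $\|\widetilde A_{iT}-\E[A_{iT}]\|$ wholesale into the remainder $R_{3i}$, which is then multiplied by $T^{-1}\sum_t\phi_{i,\tau}=O_P((\log T/T)^{1/2})$; the $m_T^3$ in (B3) comes from the uniform bound $\|\widetilde A_{iT}-\E[A_{iT}]\|=O_P((m_T^3\log T/T)^{1/2})$, not from an expected-bias calculation. Your bias accounting $m_T(\log T)^2/T$ does not reproduce (B3). Second, the cross term $\big(T^{-1}\sum_t\widetilde\eta_{iT}\big)\big(T^{-1}\sum_t\phi_{i,\tau}\big)$ cannot be handled just by ``independence across $i$ and a variance bound'': under mixing the fourth-moment computation requires bounding joint cumulants $\mathrm{cum}(\theta_{t_1},\theta_{t_2},\xi_{s_1},\xi_{s_2})$ via $\beta$-mixing, which the paper does explicitly. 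You should flag that step rather than fold it into a generic variance control.
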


Theorem \ref{th2} shows that the asymptotic normality of the estimator for the dependent data still holds under similar conditions on $n,T$ as Theorem~\ref{th1}. The following Lemma provides a result for the consistent estimation of the covariance matrix for the dependent case.

\begin{lemma}
\label{lemma:CovD}
Under the conditions of Theorem \ref{th2}, we have we have $\max_{i=1,...,n}\|\widehat{B}_{iT} - B_{i} \| = o_P(1)$ and $\max_{i=1,...,n}\|\widetilde{A}_{iT} - \widetilde  A_{i} \| = o_P(1)$. Moreover, $\widehat \Sigma_2 := (\frac{1}{n}\sum_{i=1}^n \widetilde{W}_{iT}^{-1} )^{-1}$ is a consistent estimator for $\Sigma_2$.
\end{lemma}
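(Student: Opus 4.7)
The proof parallels that of Lemma~\ref{lemma:CovHK}, with two modifications dictated by the extension to the dependent setting. First, all concentration inequalities for independent sums used in the i.i.d.\ proof must be replaced by their $\beta$-mixing analogues available under (B1); the paper already relies on such inequalities elsewhere in the proof of Theorem~\ref{th2}, so I would import them without rederivation. Second, the HAC correction in $\widetilde A_{iT}$ introduces a growing number of lag terms and a plug-in correction through $\widehat w_{it}$. The overall plan is to show the two uniform convergences $\max_i \|\widehat B_{iT} - B_i\| = o_P(1)$ and $\max_i \|\widetilde A_{iT} - \widetilde A_i\| = o_P(1)$ and then assemble them into the conclusion for $\widehat\Sigma_2$ via a standard matrix-perturbation argument.

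For $\widehat B_{iT}$ the argument mirrors the i.i.d.\ case essentially verbatim. One first uses (A2) together with the uniform-in-$i$ rate $\max_i\|\widehat{\bm{\gamma}}_i(\tau\pm d_T) - \bm{\gamma}_{i0}(\tau\pm d_T)\| = O_P(\sqrt{\log T/(T d_T^2)})$, which is established as part of the proof of Theorem~\ref{th2}, to replace $\widehat f_{it}$ by $f_{Y\mid\Z}(q_{i,\tau}(\Z_{it})\mid \Z_{it})$ in $\widehat B_{iT}$ with error $o_P(1)$ uniformly in $i$. The resulting oracle sum $T^{-1}\sum_t f_{Y\mid\Z}(q_{i,\tau}(\Z_{it})\mid \Z_{it})\Z_{it}\Z_{it}^\top$ converges to $B_i$ uniformly in $i$ by a Bernstein-type inequality for bounded $\beta$-mixing sequences combined with a union bound over $i \leq n$; the growth condition in (B3) ensures the resulting probability bound vanishes.

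The main obstacle lies in $\widetilde A_{iT}$. I would split $\widetilde A_{iT} - \widetilde A_i$ into: (i) the concentration error of the unlagged sum $\tau(1-\tau)T^{-1}\sum_t \Z_{it}\Z_{it}^\top - \tau(1-\tau)\E[\Z_{i1}\Z_{i1}^\top]$, handled by mixing Bernstein as above; (ii) the truncation bias $\sum_{j>m_T}\bigl(\E[w_{i1}w_{i1+j}^\top] + \E[w_{i1+j}w_{i1}^\top]\bigr)$, which by Davydov's covariance inequality applied to the uniformly bounded centered sequence $(w_{it})$ is dominated by a constant times $\sum_{j>m_T}\beta(j) = O(b_\beta^{m_T})$ uniformly in $i$, hence vanishes since $m_T \to \infty$; and (iii) the sum $\sum_{1\le j\le m_T}(1-j/T)\bigl\{T^{-1}\sum_{t\in T_j}\widehat w_{it}\widehat w_{it+j}^\top - \E[w_{i1}w_{i1+j}^\top]\bigr\}$ plus its symmetrization. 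Within (iii) I would further isolate the \emph{stochastic error} $T^{-1}\sum_{t\in T_j} w_{it}w_{it+j}^\top - \E[w_{i1}w_{i1+j}^\top]$, bounded uniformly in $i$ and $1\leq j\leq m_T$ by a mixing Bernstein inequality with a union bound over $nm_T$ pairs, and the \emph{plug-in error} from replacing $w_{it}$ by $\widehat w_{it}$. The plug-in error is the hardest piece: from $\widehat w_{it} - w_{it} = \Z_{it}\bigl(\1\{Y_{it}\leq q_{i,\tau}(\Z_{it})\} - \1\{Y_{it}\leq \widehat{\bm{\gamma}}_i(\tau)^\top \Z_{it}\}\bigr)$, one bounds $\|\widehat w_{it} - w_{it}\|$ by $M\cdot\1\{|Y_{it}-q_{i,\tau}(\Z_{it})|\leq |\Z_{it}^\top(\widehat{\bm{\gamma}}_i(\tau)-\bm{\gamma}_{i0}(\tau))|\}$, and controls the resulting empirical process across $i$ and lags by an argument analogous to the Bahadur remainder analysis in the proof of Theorem~\ref{th2}. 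The growth condition $n(\log T)^4 m_T^3/T = o(1)$ in (B3) is calibrated precisely so that the aggregated bound across the $nm_T$ lag-individual pairs remains $o_P(1)$.

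Finally, uniform lower bounds on the minimal eigenvalues of $B_i$ (from (A3) and (A1)) and $\widetilde A_i$ (from (A1), (A3) and (B1), via positivity of the spectral density of $w_{it}$) allow a matrix-perturbation argument to propagate $\max_i\|\widehat B_{iT}-B_i\| + \max_i\|\widetilde A_{iT}-\widetilde A_i\| = o_P(1)$ to $\max_i\|\widetilde V_{iT}^{-1} - \widetilde V_i^{-1}\| = o_P(1)$, which in turn gives $\max_i\|\widetilde W_{iT}-\widetilde W_i\| = o_P(1)$. Since these sub-matrices are uniformly bounded, averaging preserves this: $\|n^{-1}\sum_i \widetilde W_{iT} - n^{-1}\sum_i \widetilde W_i\| \leq \max_i\|\widetilde W_{iT}-\widetilde W_i\| = o_P(1)$. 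Combined with the assumed convergence $n^{-1}\sum_i \widetilde W_i \to \Sigma_2^{-1}$ and the continuous mapping theorem applied to matrix inversion on invertible matrices, this yields $\widehat\Sigma_2 \pconv \Sigma_2$.
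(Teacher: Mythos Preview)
Your proposal is correct and follows essentially the same route as the paper: the paper's proof is a one-line citation because all of the ingredients you sketch (the uniform control of $\widehat B_{iT}$ via Lemma~\ref{lem:aitbit2} and Lemma~\ref{lemma1}, and the decomposition of $\widetilde A_{iT}-\widetilde A_i$ into plug-in error, stochastic error, and truncation bias via Lemma~\ref{lem:ait_beta}) are already established as auxiliary lemmas in the course of proving Theorem~\ref{th2}. Your decomposition (i)--(iii) for $\widetilde A_{iT}$ and the ensuing matrix-perturbation assembly are exactly what those lemmas deliver, so there is no substantive difference in strategy---only that you re-sketch the content of Lemma~\ref{lem:ait_beta} rather than cite it.
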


As in the previous section, we can see that the asymptotic covariance matrix of $\widehat{\bm{\beta}}_{MD}(\tau)$ for the dependent case, $\Sigma_{2}$ is consistently estimated by the inverse of $\frac{1}{n} \sum_{i=1}^{n} \widetilde{W}_{iT}^{-1} $. The proof strategy is similar to the i.i.d. case (see Remark~\ref{rem:proofideas} for an outline of the key steps) but somewhat more complicated due to the dependence within individuals. Note also that the estimators $\widetilde{A}_{iT}$ now contain an additional sum with $m_T \to \infty$ pieces, this poses an additional challenge for its theoretical analysis. As in the previous section, the result in Lemma \ref{lemma:CovD} allows for the construction of inference procedures based on the Wald statistic.

\section{Monte Carlo simulations} \label{sec:sim}

\subsection{Design}

In this section, {we use simulation experiments to assess the finite
sample performance of the FE-QR and MD-QR estimators discussed previously.} 
We consider a simple model as in equation \eqref{eq:model}, where 
the response $y_{it}$ is generated by 
\begin{equation*}
y_{it}=\alpha_{i}+\beta x_{it}+(1+\lambda x_{it})u_{it}.
\end{equation*}
This is a general location-scale-shift model. When $\lambda=0$, 
the exogenous covariate $x_{it}$ exerts a pure location
shift effect. When $\lambda \neq 0$, $x_{it}$ exerts both location and scale
effects. 

We set $\beta=1$ and consider $\lambda\in\{0, 1\}$. We employ three different schemes to generate the disturbances $u_{it}$. Under scheme 1, we generate $u_{it}$ as a $N(0,1)$, under scheme 2 we generate $u_{it}$ as a $t$-distribution with 3 degrees of freedom, and under scheme 3 the disturbances follow $\chi_{3}^{2}$. In this model the true coefficients take the form $\alpha_{i0}(\tau)=\alpha_{i}+F^{-1}(\tau)$ and $\beta_{0}(\tau)=\beta+\lambda F^{-1}(\tau)$ where $F$ denotes the CDF of $u_{it}$.

In all cases, the fixed effects, $\alpha_{i}$, are generated as $\alpha_i = i/n, i=1,...,n$.
%according to uniform variable, $U[0,1]$. 
The independent variable, $x_{it}$, is generated according to $x_{it}=0.3\alpha_{i}+v_{it}$, where $v_{it}\sim U[0,10]$. This method of generating $\alpha_{i}$ ensures that the classical
random effects estimators are biased because the individual effects and the explanatory variables are correlated. 

In the numerical simulations we study the FE-QR defined in \eqref{eq:feqr}, which includes the dummy variables to account for the fixed effects, and the MD-QR estimator in \eqref{eq:fefmd} using the Hendricks-Koenker sandwich covariance matrix estimator in \eqref{eq:HKvar} as weights.\footnote{For estimation of the asymptotic covariance matrix,  we use the default  bandwidth option in the {\tt quantreg} package in \texttt{R}{; for the Hendricks-Koenker estiamtor this corresponds to the Hall-Sheather method.}} Moreover, for comparison, we report results for the infeasible MD-QR estimator generated with true weights instead of the estimated weights, which are computed by using the true sparsity function in the corresponding variance-covariance matrix. It is important to present results for the MD-QR with the true sparsity to investigate to what extent the finite sample performance of the estimator is affected by estimation of the sparsity function in the weights. The MD-QR using the true sparsity is denoted by MDT-QR.

We report results for bias and standard errors (SE) for the FE-QR, MD-QR, and MDT-QR estimators for the quantiles $\tau=\{0.25,0.5,0.75\}$. All reported results are based on $2,000$ Monte Carlo replications.

\subsection{Results}

Before discussing the Monte Carlo results, it is worthwhile to review the theoretical predictions for the estimators. Remarks~\ref{rem:proofideas} and~\ref{rem:proofideasFE} in Section \ref{sec:indep} above discussed properties of the MD-QR and FE-QR estimator. The results there provide an upper bound for the order of bias of both estimators which are of the form $(\log T)/T$ and thus depend only on the time series dimension $T$. This indicates that $T\times bias$ should be at most constant or increase very slowly. In addition, the standard error should be proportional to $\sqrt{nT}$.

{To illustrate those theoretical predictions we report $T\times bias$ for the three estimators described above, FE-QR, MD-QR, and MDT-QR. Table~\ref{table_N_0_bias}, Table~\ref{table_t_0_bias}, and Table~\ref{table_chi_0_bias} present results for $\lambda = 0$ for all three estimators and normal, $t_3$ and $\chi_3^2$ errors, respectively. Table~\ref{table_N_1_bias}, Table~\ref{table_t_1_bias}, and Table~\ref{table_chi_1_bias} present the corresponding results for $\lambda = 1$. Moreover, $\sqrt{nT}\times SE$ is reported in Table~\ref{table_N_0_se}, Table~\ref{table_t_0_se}, and Table~\ref{table_chi_0_se} for the location case ($\lambda=0$) for all three estimators and the same three error distributions as above, respectively. Finally, Table~\ref{table_N_1_se}, Table~\ref{table_t_1_se}, and Table~\ref{table_chi_1_se} display the corresponding results to the location-scale ($\lambda=1$) case.  }

The first important observation we find when examining Tables~\ref{table_N_1_bias},~\ref{table_t_1_bias},~\ref{table_chi_1_bias} is that $T\times bias$ is indeed roughly constant across $n$ for all three estimators considered, indicating that our upper bound is close to being sharp. Tables~\ref{table_N_0_bias},~\ref{table_t_0_bias},~\ref{table_chi_0_bias} further indicate that the bias in the pure location-shift model (i.e. $\lambda = 0$) seems to be extremely small even when multiplied by $T$ for all estimators, hinting that in some models the bias might be of even higher order than $T^{-1}$. This does not contradict our theory since we only provide an upper bound for the bias. Another interesting finding is that the exact bias of the three estimators can be quite different (see in particular Table~\ref{table_t_1_bias} and Table~\ref{table_chi_1_bias}). Again, this does not contradict our theoretical predictions since we provide an upper bound with rate while the difference seems to be in the constants. It also seems that estimating the weight in the MD estimator can have a significant impact on the constant in the bias. Finally, there is no estimator that has uniformly smallest bias.

%Second, when comparing the MD-QR and MDT-QR -- Tables~\ref{table1_MD} and \ref{table1_MDT} for the location case, and Tables~\ref{table3_MD} and \ref{table3_MDT} for the location-scale case -- we find that the bias of MDT-QR is in general slightly lower than or equal to that of MD-QR, which is due to the estimation effect of the sparsity.   Moreover, when comparing the MD-QR and FE-QR estimators we find evidence that overall they have  similar behavior with the FE-QR having slightly smaller $T\times bias$. We also note that Tables 1-6 report $T \times bias$ for the different sample sizes. Thus, the absolute value of $bias$, , without multiplying by $T$, for all three estimators, decreases as $T$ increases.}

{The scaled (by $\sqrt{nT}$) standard errors are reported in Table~\ref{table_N_0_se}, Table~\ref{table_t_0_se}, and Table~\ref{table_chi_0_se} for the location case and in Table~\ref{table_N_1_se}, Table~\ref{table_t_1_se}, and Table~\ref{table_chi_1_se} for the location-shift case. The first main result is that $\sqrt{nT}\times SE$ is stable across a range of $n,T$ for both $\lambda=0$ and $\lambda=1$, with a slight decreasing trend as $T$ grows, especially for the MD-QR estimator. Both FE-QR and MDT-QR suffer less variation. This slight larger trend for MD-QR is probably due to higher-order terms that result from estimated weights in the MD-QR estimator. In addition, when comparing $\sqrt{nT}\times SE$ for MD-QR and MDT-QR we see that the latter starts with smaller results (for small $T$), however, this difference tends to disappear as $T$ increases.}

\begin{figure}[!h]
\caption{Bias and Standard Error for $\lambda=1$ and $\tau=0.75$ for $\chi_3^2$ errors.}
\label{Fig2}\centering
\includegraphics[scale=0.55]{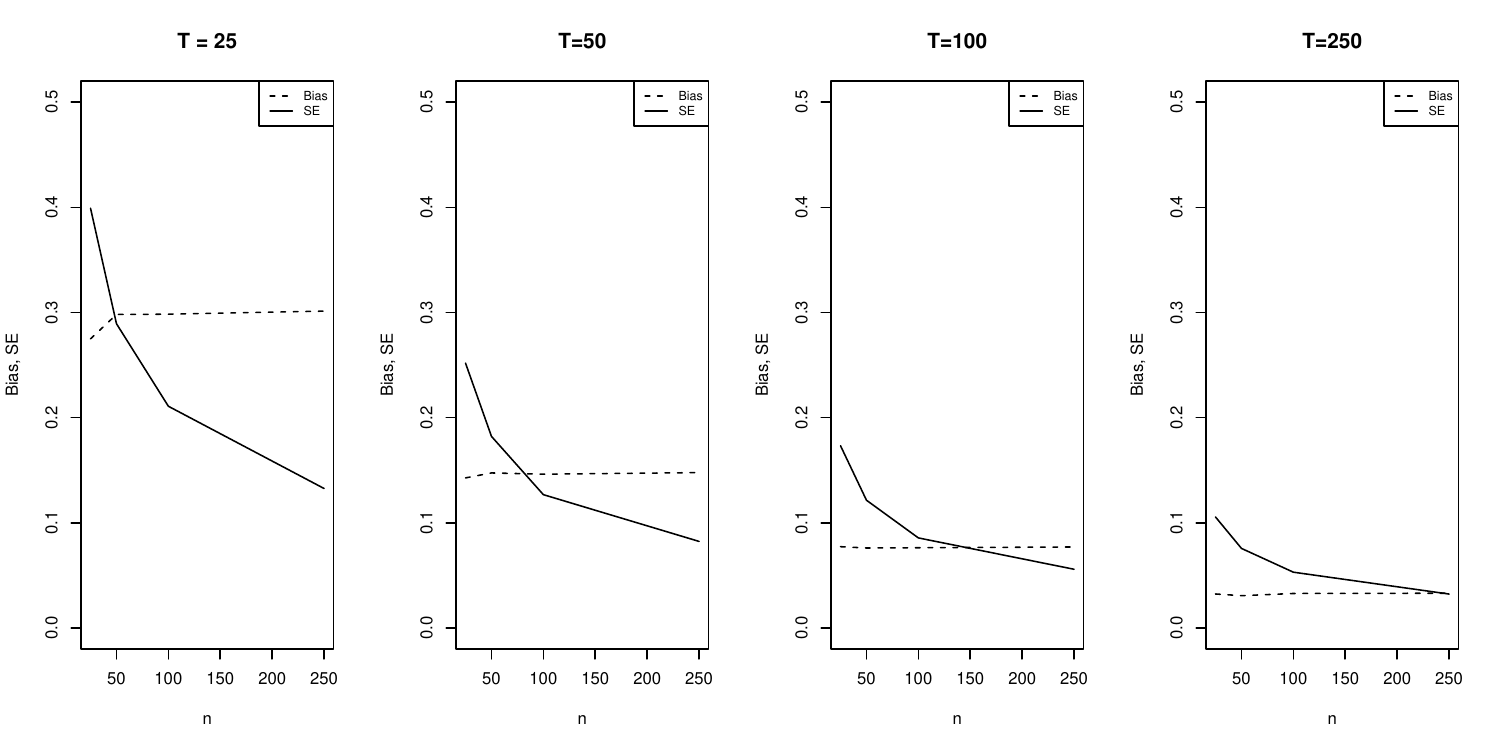}
\end{figure}

{Finally, to further illustrate the discussion in Remark~\ref{rem:proofideas}, Figure~\ref{Fig2} shows different plots of the non-standardized bias (dotted lines) and SE (solid lines) as functions of $n$ for different values of $T$ for $\lambda=1$, $\tau=0.75$ and $\chi_3^2$ errors, for the MD-QR estimator. Plots for the FE-QR estimator are similar and we omit them for brevity. As expected from the theory, a comparison for each particular plot reveals that bias is stable across $n$, and a comparison across plots shows that the bias only depends on $T$ while the standard error decreases with $n$ and $T$.\footnote{For robustness purposes, we have conducted experiments with $n=5,000$ and $T=25$ and $T=100$. The results are qualitatively similar, and we omit them for brevity.} Approximately unbiased normality is expected to hold when the standard error is larger than the bias, which is the case when $T \gg n$ in the plots. For a fixed $T$, as $n$ increases, the standard error decreases and at some point is of the same order as the bias; this is when unbiased normality fails. Nevertheless, as predicted this happens for larger and larger values of $n$ as $T$ increases.} %\textbf{Should we write the next sentence? We do not have any insights on those constants...} Finally, we note that the exact meeting point of the two curves depends on the exact constants in the asymptotic bias and variance and will change across different models.   

% latex table generated in R 3.4.2 by xtable 1.8-2 package
% Sun Jan 27 11:29:55 2019
\begin{scriptsize} 
	\begin{longtable} {ccc|ccc||ccc||ccc}
		\caption{$T$ $\times$ bias for different estimators with normal errors. Location-shift ($\lambda=0$)}\\
		\label{table_N_0_bias}\\
		\hline
		&          &  &\multicolumn{3}{c}{$\tau = 0.25$} & \multicolumn{3}{c}{$\tau = 0.5$} & \multicolumn{3}{c}{$\tau = 0.75$}\\
		\hline
		$n$	&	$T$	&	$n/T$	&	$MD$	&	$FE$	&	$MDT$	&	$MD$	&	$FE$	&	$MDT$	&	$MD$	&	$FE$	&	$MDT$	\\
		\hline
25 & 25 & 1.00 & -0.014 & -0.011 & -0.002 & -0.012 & -0.013 & -0.008 & 0.003 & -0.003 & -0.004 \\ 
25 & 50 & 0.50 & -0.005 & -0.000 & -0.008 & -0.011 & -0.012 & -0.013 & -0.025 & -0.013 & -0.019 \\ 
25 & 100 & 0.25 & -0.007 & -0.006 & -0.004 & -0.004 & -0.018 & -0.006 & 0.022 & 0.031 & 0.016 \\ 
25 & 250 & 0.10 & 0.007 & 0.001 & 0.003 & -0.015 & -0.036 & -0.011 & -0.015 & -0.021 & -0.014 \\ 
50 & 25 & 2.00 & 0.003 & -0.007 & -0.002 & -0.008 & -0.009 & -0.005 & -0.016 & -0.009 & -0.009 \\ 
50 & 50 & 1.00 & -0.011 & -0.006 & -0.005 & -0.007 & -0.010 & -0.003 & -0.008 & 0.009 & -0.005 \\ 
50 & 100 & 0.50 & -0.030 & -0.036 & -0.032 & -0.019 & -0.027 & -0.024 & -0.020 & -0.023 & -0.026 \\ 
50 & 250 & 0.20 & 0.010 & 0.009 & 0.013 & 0.017 & 0.017 & 0.012 & 0.009 & 0.013 & 0.010 \\ 
100 & 25 & 4.00 & -0.004 & -0.002 & -0.002 & -0.003 & -0.006 & -0.002 & -0.002 & 0.001 & -0.002 \\ 
100 & 50 & 2.00 & -0.020 & -0.019 & -0.018 & -0.016 & -0.016 & -0.015 & -0.011 & -0.016 & -0.015 \\ 
100 & 100 & 1.00 & -0.010 & -0.016 & -0.009 & -0.019 & -0.022 & -0.019 & 0.004 & 0.002 & -0.002 \\ 
100 & 250 & 0.40 & -0.012 & -0.008 & -0.009 & -0.009 & -0.009 & -0.013 & -0.020 & -0.020 & -0.025 \\ 
250 & 25 & 10.00 & 0.001 & 0.002 & 0.000 & -0.001 & -0.002 & -0.002 & -0.002 & -0.002 & -0.002 \\ 
250 & 50 & 5.00 & 0.004 & 0.003 & 0.005 & 0.003 & 0.005 & 0.005 & 0.003 & 0.001 & 0.002 \\ 
250 & 100 & 2.50 & -0.005 & -0.005 & -0.002 & -0.006 & -0.008 & -0.007 & -0.010 & -0.013 & -0.008 \\ 
250 & 250 & 1.00 & 0.005 & 0.003 & 0.005 & 0.017 & 0.015 & 0.014 & 0.004 & 0.001 & 0.003 \\ 

		\hline\hline
\end{longtable}
\end{scriptsize} 
	
	\newpage

\begin{scriptsize} 
	\begin{longtable} {ccc|ccc||ccc||ccc}
		\caption{$\sqrt{nT}$ $\times$ standard error for different estimators with normal errors. Location-shift ($\lambda=0$)}\\
		\label{table_N_0_se}\\
		\hline
		&          &  &\multicolumn{3}{c}{$\tau = 0.25$} & \multicolumn{3}{c}{$\tau = 0.5$} & \multicolumn{3}{c}{$\tau = 0.75$}\\
		\hline
		$n$	&	$T$	&	$n/T$	&	$MD$	&	$FE$	&	$MDT$	&	$MD$	&	$FE$	&	$MDT$	&	$MD$	&	$FE$	&	$MDT$	\\
		\hline
25 & 25 & 1.00 & 0.541 & 0.485 & 0.480 & 0.467 & 0.445 & 0.438 & 0.547 & 0.495 & 0.489 \\ 
25 & 50 & 0.50 & 0.514 & 0.482 & 0.484 & 0.456 & 0.449 & 0.444 & 0.527 & 0.479 & 0.481 \\ 
25 & 100 & 0.25 & 0.490 & 0.481 & 0.474 & 0.452 & 0.451 & 0.442 & 0.497 & 0.479 & 0.471 \\ 
25 & 250 & 0.10 & 0.488 & 0.477 & 0.480 & 0.452 & 0.448 & 0.445 & 0.493 & 0.474 & 0.476 \\ 
50 & 25 & 2.00 & 0.554 & 0.487 & 0.490 & 0.468 & 0.441 & 0.440 & 0.570 & 0.489 & 0.491 \\ 
50 & 50 & 1.00 & 0.511 & 0.477 & 0.477 & 0.464 & 0.456 & 0.449 & 0.497 & 0.477 & 0.469 \\ 
50 & 100 & 0.50 & 0.499 & 0.476 & 0.483 & 0.449 & 0.439 & 0.438 & 0.503 & 0.477 & 0.480 \\ 
50 & 250 & 0.20 & 0.478 & 0.467 & 0.467 & 0.440 & 0.443 & 0.437 & 0.480 & 0.475 & 0.470 \\ 
100 & 25 & 4.00 & 0.554 & 0.481 & 0.479 & 0.476 & 0.449 & 0.444 & 0.552 & 0.479 & 0.484 \\ 
100 & 50 & 2.00 & 0.504 & 0.473 & 0.467 & 0.461 & 0.444 & 0.446 & 0.523 & 0.473 & 0.480 \\ 
100 & 100 & 1.00 & 0.492 & 0.475 & 0.471 & 0.446 & 0.437 & 0.438 & 0.502 & 0.478 & 0.478 \\ 
100 & 250 & 0.40 & 0.477 & 0.463 & 0.468 & 0.427 & 0.422 & 0.423 & 0.480 & 0.465 & 0.471 \\ 
250 & 25 & 10.00 & 0.567 & 0.495 & 0.489 & 0.472 & 0.450 & 0.445 & 0.568 & 0.487 & 0.484 \\ 
250 & 50 & 5.00 & 0.507 & 0.474 & 0.466 & 0.468 & 0.455 & 0.450 & 0.513 & 0.476 & 0.480 \\ 
250 & 100 & 2.50 & 0.489 & 0.475 & 0.476 & 0.437 & 0.424 & 0.426 & 0.491 & 0.477 & 0.471 \\ 
250 & 250 & 1.00 & 0.478 & 0.470 & 0.469 & 0.443 & 0.438 & 0.437 & 0.476 & 0.469 & 0.462 \\ 
			\hline\hline
	\end{longtable}
\end{scriptsize}

\newpage

\begin{scriptsize} 
	\begin{longtable} {ccc|ccc||ccc||ccc}
		\caption{$T$ $\times$ bias for different estimators with $t_3$ errors. Location-shift ($\lambda=0$)}\\
		\label{table_t_0_bias}\\
		\hline
		&          &  &\multicolumn{3}{c}{$\tau = 0.25$} & \multicolumn{3}{c}{$\tau = 0.5$} & \multicolumn{3}{c}{$\tau = 0.75$}\\
		\hline
		$n$	&	$T$	&	$n/T$	&	$MD$	&	$FE$	&	$MDT$	&	$MD$	&	$FE$	&	$MDT$	&	$MD$	&	$FE$	&	$MDT$	\\
		\hline
	25 & 25 & 1.00 & -0.004 & 0.003 & -0.008 & -0.010 & -0.005 & -0.001 & -0.023 & -0.015 & -0.011 \\ 
	25 & 50 & 0.50 & 0.006 & -0.006 & -0.009 & 0.008 & -0.010 & 0.011 & 0.002 & 0.003 & 0.004 \\ 
	25 & 100 & 0.25 & 0.020 & 0.008 & 0.013 & 0.009 & -0.006 & 0.003 & -0.011 & -0.026 & -0.025 \\ 
	25 & 250 & 0.10 & 0.065 & 0.019 & 0.037 & -0.044 & -0.024 & -0.026 & -0.016 & -0.036 & -0.012 \\ 
	50 & 25 & 2.00 & -0.000 & -0.001 & -0.009 & 0.003 & 0.000 & 0.003 & -0.020 & -0.003 & 0.003 \\ 
	50 & 50 & 1.00 & -0.005 & 0.001 & 0.008 & 0.001 & 0.002 & 0.004 & -0.005 & -0.016 & -0.006 \\ 
	50 & 100 & 0.50 & -0.005 & 0.000 & -0.010 & -0.008 & -0.000 & -0.007 & -0.021 & -0.009 & -0.009 \\ 
	50 & 250 & 0.20 & -0.025 & -0.019 & -0.031 & 0.018 & 0.029 & 0.023 & 0.009 & -0.008 & 0.005 \\ 
	100 & 25 & 4.00 & -0.003 & -0.001 & -0.007 & 0.001 & -0.002 & 0.001 & -0.004 & -0.008 & -0.006 \\ 
	100 & 50 & 2.00 & -0.003 & -0.004 & -0.003 & 0.001 & -0.000 & -0.001 & 0.001 & -0.002 & 0.000 \\ 
	100 & 100 & 1.00 & 0.001 & -0.004 & -0.009 & -0.015 & -0.012 & -0.019 & 0.008 & 0.000 & -0.003 \\ 
	100 & 250 & 0.40 & 0.011 & 0.004 & 0.006 & 0.015 & 0.003 & 0.011 & -0.009 & -0.008 & -0.009 \\ 
	250 & 25 & 10.00 & 0.003 & 0.000 & 0.002 & -0.003 & -0.002 & -0.001 & -0.005 & -0.004 & -0.002 \\ 
	250 & 50 & 5.00 & -0.003 & -0.007 & -0.004 & -0.003 & 0.002 & -0.001 & 0.005 & 0.003 & 0.002 \\ 
	250 & 100 & 2.50 & 0.003 & -0.003 & 0.004 & 0.008 & 0.004 & 0.004 & -0.010 & -0.007 & -0.008 \\ 
	250 & 250 & 1.00 & -0.002 & -0.001 & 0.002 & 0.004 & 0.005 & 0.004 & 0.018 & 0.018 & 0.012 \\ 
	
		\hline\hline
	\end{longtable}
\end{scriptsize}

\begin{scriptsize} 
	\begin{longtable} {ccc|ccc||ccc||ccc}
		\caption{$\sqrt{nT}$ $\times$ standard error for different estimators with $t_3$ errors. Location-shift ($\lambda=0$)}\\
		\label{table_t_0_se}\\
		\hline
		&          &  &\multicolumn{3}{c}{$\tau = 0.25$} & \multicolumn{3}{c}{$\tau = 0.5$} & \multicolumn{3}{c}{$\tau = 0.75$}\\
		\hline
		$n$	&	$T$	&	$n/T$	&	$MD$	&	$FE$	&	$MDT$	&	$MD$	&	$FE$	&	$MDT$	&	$MD$	&	$FE$	&	$MDT$	\\
		\hline
	25 & 25 & 1.00 & 0.738 & 0.611 & 0.642 & 0.546 & 0.499 & 0.519 & 0.750 & 0.613 & 0.651 \\ 
	25 & 50 & 0.50 & 0.659 & 0.604 & 0.600 & 0.506 & 0.489 & 0.495 & 0.645 & 0.583 & 0.604 \\ 
	25 & 100 & 0.25 & 0.614 & 0.586 & 0.592 & 0.480 & 0.471 & 0.477 & 0.604 & 0.582 & 0.577 \\ 
	25 & 250 & 0.10 & 0.604 & 0.595 & 0.590 & 0.473 & 0.472 & 0.466 & 0.592 & 0.575 & 0.580 \\ 
	50 & 25 & 2.00 & 0.756 & 0.603 & 0.632 & 0.524 & 0.487 & 0.498 & 0.724 & 0.593 & 0.617 \\ 
	50 & 50 & 1.00 & 0.656 & 0.592 & 0.600 & 0.493 & 0.485 & 0.488 & 0.646 & 0.585 & 0.592 \\ 
	50 & 100 & 0.50 & 0.607 & 0.578 & 0.582 & 0.471 & 0.464 & 0.464 & 0.609 & 0.590 & 0.594 \\ 
	50 & 250 & 0.20 & 0.603 & 0.589 & 0.594 & 0.487 & 0.476 & 0.480 & 0.606 & 0.596 & 0.594 \\ 
	100 & 25 & 4.00 & 0.738 & 0.593 & 0.611 & 0.522 & 0.481 & 0.502 & 0.737 & 0.584 & 0.619 \\ 
	100 & 50 & 2.00 & 0.653 & 0.584 & 0.602 & 0.489 & 0.472 & 0.477 & 0.649 & 0.581 & 0.602 \\ 
	100 & 100 & 1.00 & 0.630 & 0.588 & 0.605 & 0.498 & 0.483 & 0.486 & 0.625 & 0.592 & 0.604 \\ 
	100 & 250 & 0.40 & 0.602 & 0.590 & 0.593 & 0.482 & 0.478 & 0.478 & 0.603 & 0.585 & 0.588 \\ 
	250 & 25 & 10.00 & 0.749 & 0.596 & 0.640 & 0.539 & 0.479 & 0.500 & 0.746 & 0.582 & 0.614 \\ 
	250 & 50 & 5.00 & 0.676 & 0.597 & 0.601 & 0.505 & 0.484 & 0.491 & 0.660 & 0.602 & 0.610 \\ 
	250 & 100 & 2.50 & 0.624 & 0.592 & 0.603 & 0.493 & 0.482 & 0.487 & 0.620 & 0.593 & 0.598 \\ 
	250 & 250 & 1.00 & 0.593 & 0.572 & 0.581 & 0.466 & 0.458 & 0.460 & 0.604 & 0.589 & 0.590 \\ 
		\hline\hline
	\end{longtable}
\end{scriptsize}

	\newpage
	\begin{scriptsize} 
		\begin{longtable} {ccc|ccc||ccc||ccc}
			\caption{$T$ $\times$ bias for different estimators with $\chi_3^2$ errors. Location-shift ($\lambda=0$)}\\
			\label{table_chi_0_bias}\\
			\hline
			&          &  &\multicolumn{3}{c}{$\tau = 0.25$} & \multicolumn{3}{c}{$\tau = 0.5$} & \multicolumn{3}{c}{$\tau = 0.75$}\\
			\hline
			$n$	&	$T$	&	$n/T$	&	$MD$	&	$FE$	&	$MDT$	&	$MD$	&	$FE$	&	$MDT$	&	$MD$	&	$FE$	&	$MDT$	\\
			\hline
25 & 25 & 1.00 & 0.004 & 0.017 & 0.005 & 0.014 & 0.001 & 0.019 & 0.009 & 0.014 & -0.009 \\ 
25 & 50 & 0.50 & -0.017 & -0.019 & -0.016 & -0.019 & -0.025 & -0.014 & -0.048 & -0.017 & -0.057 \\ 
25 & 100 & 0.25 & -0.015 & -0.030 & -0.007 & -0.010 & -0.018 & -0.021 & -0.069 & -0.077 & -0.084 \\ 
25 & 250 & 0.10 & -0.030 & -0.031 & -0.025 & -0.023 & -0.024 & -0.026 & 0.024 & 0.013 & 0.015 \\ 
50 & 25 & 2.00 & -0.005 & -0.011 & -0.012 & -0.006 & -0.015 & -0.006 & -0.035 & -0.008 & -0.024 \\ 
50 & 50 & 1.00 & -0.010 & -0.005 & -0.012 & -0.007 & -0.002 & -0.014 & -0.010 & -0.036 & -0.038 \\ 
50 & 100 & 0.50 & -0.008 & -0.008 & 0.004 & 0.037 & 0.013 & 0.019 & 0.058 & 0.042 & 0.040 \\ 
50 & 250 & 0.20 & -0.011 & 0.006 & -0.014 & -0.020 & -0.003 & -0.022 & 0.129 & 0.131 & 0.124 \\ 
100 & 25 & 4.00 & -0.006 & -0.010 & -0.000 & -0.001 & -0.001 & -0.010 & -0.018 & -0.018 & -0.024 \\ 
100 & 50 & 2.00 & -0.006 & -0.012 & -0.003 & -0.003 & 0.002 & -0.000 & 0.022 & 0.029 & 0.022 \\ 
100 & 100 & 1.00 & -0.001 & 0.008 & 0.007 & 0.006 & 0.011 & 0.010 & 0.022 & 0.009 & 0.009 \\ 
100 & 250 & 0.40 & 0.003 & -0.000 & 0.006 & 0.023 & 0.030 & 0.030 & 0.047 & 0.068 & 0.058 \\ 
250 & 25 & 10.00 & -0.002 & 0.000 & 0.002 & -0.007 & -0.007 & 0.002 & -0.017 & -0.009 & -0.012 \\ 
250 & 50 & 5.00 & 0.003 & 0.004 & 0.006 & -0.003 & -0.005 & 0.002 & 0.013 & 0.017 & 0.009 \\ 
250 & 100 & 2.50 & 0.003 & -0.002 & 0.002 & 0.021 & 0.011 & 0.018 & 0.040 & 0.036 & 0.034 \\ 
250 & 250 & 1.00 & -0.009 & -0.007 & -0.009 & -0.040 & -0.032 & -0.033 & 0.010 & 0.018 & 0.014 \\ 
			\hline\hline
		\end{longtable}
	\end{scriptsize} 

\begin{scriptsize} 
	\begin{longtable} {ccc|ccc||ccc||ccc}
		\caption{$\sqrt{nT}$ $\times$ standard error for different estimators with $\chi_3^2$ errors. Location-shift  ($\lambda=0$)}\\
		\label{table_chi_0_se}\\
		\hline
		&          &  &\multicolumn{3}{c}{$\tau = 0.25$} & \multicolumn{3}{c}{$\tau = 0.5$} & \multicolumn{3}{c}{$\tau = 0.75$}\\
		\hline
		$n$	&	$T$	&	$n/T$	&	$MD$	&	$FE$	&	$MDT$	&	$MD$	&	$FE$	&	$MDT$	&	$MD$	&	$FE$	&	$MDT$	\\
		\hline
	25 & 25 & 1.00 & 0.664 & 0.636 & 0.650 & 0.972 & 0.938 & 0.943 & 1.668 & 1.446 & 1.473 \\ 
	25 & 50 & 0.50 & 0.646 & 0.618 & 0.623 & 0.954 & 0.935 & 0.939 & 1.617 & 1.459 & 1.471 \\ 
	25 & 100 & 0.25 & 0.638 & 0.627 & 0.633 & 0.963 & 0.943 & 0.947 & 1.553 & 1.484 & 1.498 \\ 
	25 & 250 & 0.10 & 0.630 & 0.625 & 0.622 & 0.924 & 0.918 & 0.916 & 1.496 & 1.456 & 1.466 \\ 
	50 & 25 & 2.00 & 0.658 & 0.624 & 0.628 & 0.987 & 0.919 & 0.933 & 1.726 & 1.455 & 1.496 \\ 
	50 & 50 & 1.00 & 0.634 & 0.617 & 0.625 & 0.977 & 0.953 & 0.962 & 1.619 & 1.462 & 1.472 \\ 
	50 & 100 & 0.50 & 0.630 & 0.613 & 0.617 & 0.939 & 0.913 & 0.918 & 1.515 & 1.450 & 1.456 \\ 
	50 & 250 & 0.20 & 0.655 & 0.633 & 0.641 & 0.946 & 0.946 & 0.943 & 1.504 & 1.462 & 1.470 \\ 
	100 & 25 & 4.00 & 0.663 & 0.617 & 0.637 & 1.023 & 0.943 & 0.958 & 1.739 & 1.485 & 1.493 \\ 
	100 & 50 & 2.00 & 0.647 & 0.615 & 0.625 & 0.949 & 0.915 & 0.927 & 1.603 & 1.438 & 1.440 \\ 
	100 & 100 & 1.00 & 0.643 & 0.626 & 0.632 & 0.945 & 0.929 & 0.926 & 1.549 & 1.455 & 1.477 \\ 
	100 & 250 & 0.40 & 0.642 & 0.627 & 0.636 & 0.966 & 0.940 & 0.954 & 1.508 & 1.453 & 1.460 \\ 
	250 & 25 & 10.00 & 0.671 & 0.619 & 0.631 & 0.968 & 0.924 & 0.920 & 1.745 & 1.456 & 1.462 \\ 
	250 & 50 & 5.00 & 0.661 & 0.631 & 0.646 & 0.974 & 0.929 & 0.954 & 1.586 & 1.439 & 1.462 \\ 
	250 & 100 & 2.50 & 0.649 & 0.633 & 0.640 & 0.972 & 0.932 & 0.946 & 1.579 & 1.462 & 1.475 \\ 
	250 & 250 & 1.00 & 0.644 & 0.630 & 0.632 & 0.923 & 0.911 & 0.911 & 1.450 & 1.434 & 1.419 \\ 
		\hline\hline
	\end{longtable}
\end{scriptsize}

\newpage

\begin{scriptsize} 
	\begin{longtable} {ccc|ccc||ccc||ccc}
		\caption{$T$ $\times$ bias for different estimators with normal errors. Location-scale shift ($\lambda=1$)}\\
		\label{table_N_1_bias}\\
		\hline
		&          &  &\multicolumn{3}{c}{$\tau = 0.25$} & \multicolumn{3}{c}{$\tau = 0.5$} & \multicolumn{3}{c}{$\tau = 0.75$}\\
		\hline
		$n$	&	$T$	&	$n/T$	&	$MD$	&	$FE$	&	$MDT$	&	$MD$	&	$FE$	&	$MDT$	&	$MD$	&	$FE$	&	$MDT$	\\
		\hline
	25 & 25 & 1.00 & 0.992 & 0.867 & 0.948 & -0.038 & -0.054 & -0.006 & -0.993 & -0.830 & -0.890 \\ 
	25 & 50 & 0.50 & 1.151 & 0.870 & 0.876 & -0.077 & -0.095 & -0.095 & -1.167 & -0.886 & -0.943 \\ 
	25 & 100 & 0.25 & 1.171 & 0.799 & 0.843 & -0.093 & -0.107 & -0.140 & -1.051 & -0.750 & -0.788 \\ 
	25 & 250 & 0.10 & 1.514 & 0.969 & 1.041 & -0.069 & -0.152 & -0.055 & -1.354 & -0.894 & -0.877 \\ 
	50 & 25 & 2.00 & 1.089 & 0.879 & 0.914 & -0.043 & -0.034 & -0.060 & -1.158 & -0.870 & -1.005 \\ 
	50 & 50 & 1.00 & 1.073 & 0.816 & 0.895 & -0.048 & -0.026 & -0.039 & -1.094 & -0.810 & -0.926 \\ 
	50 & 100 & 0.50 & 1.091 & 0.648 & 0.767 & -0.112 & -0.137 & -0.140 & -1.284 & -0.935 & -0.974 \\ 
	50 & 250 & 0.20 & 1.411 & 0.947 & 0.949 & 0.029 & -0.013 & 0.012 & -1.269 & -0.859 & -0.805 \\ 
	100 & 25 & 4.00 & 1.114 & 0.857 & 0.916 & -0.002 & -0.018 & -0.004 & -1.075 & -0.826 & -0.925 \\ 
	100 & 50 & 2.00 & 1.015 & 0.777 & 0.821 & -0.082 & -0.063 & -0.076 & -1.147 & -0.930 & -0.960 \\ 
	100 & 100 & 1.00 & 1.169 & 0.754 & 0.834 & -0.141 & -0.134 & -0.143 & -1.222 & -0.868 & -0.909 \\ 
	100 & 250 & 0.40 & 1.314 & 0.783 & 0.849 & -0.075 & -0.071 & -0.069 & -1.396 & -0.930 & -0.926 \\ 
	250 & 25 & 10.00 & 1.106 & 0.890 & 0.923 & -0.001 & -0.017 & -0.014 & -1.109 & -0.879 & -0.936 \\ 
	250 & 50 & 5.00 & 1.127 & 0.886 & 0.934 & 0.009 & 0.011 & 0.013 & -1.071 & -0.854 & -0.887 \\ 
	250 & 100 & 2.50 & 1.161 & 0.805 & 0.871 & -0.038 & -0.031 & -0.038 & -1.232 & -0.884 & -0.892 \\ 
	250 & 250 & 1.00 & 1.351 & 0.802 & 0.870 & 0.027 & 0.000 & 0.031 & -1.335 & -0.838 & -0.861 \\ 
		\hline\hline
	\end{longtable}
\end{scriptsize}

\begin{scriptsize} 
	\begin{longtable} {ccc|ccc||ccc||ccc}
		\caption{$\sqrt{nT}$ $\times$ standard error for different estimators with normal errors. Location-scale shift  ($\lambda=1$)}\\
		\label{table_N_1_se}\\
		\hline
		&          &  &\multicolumn{3}{c}{$\tau = 0.25$} & \multicolumn{3}{c}{$\tau = 0.5$} & \multicolumn{3}{c}{$\tau = 0.75$}\\
		\hline
		$n$	&	$T$	&	$n/T$	&	$MD$	&	$FE$	&	$MDT$	&	$MD$	&	$FE$	&	$MDT$	&	$MD$	&	$FE$	&	$MDT$	\\
		\hline
	25 & 25 & 1.00 & 3.222 & 2.791 & 2.841 & 2.661 & 2.572 & 2.526 & 3.186 & 2.857 & 2.870 \\ 
	25 & 50 & 0.50 & 2.909 & 2.744 & 2.739 & 2.574 & 2.528 & 2.487 & 2.963 & 2.750 & 2.757 \\ 
	25 & 100 & 0.25 & 2.771 & 2.688 & 2.701 & 2.511 & 2.462 & 2.469 & 2.736 & 2.658 & 2.634 \\ 
	25 & 250 & 0.10 & 2.707 & 2.646 & 2.669 & 2.462 & 2.427 & 2.438 & 2.725 & 2.628 & 2.628 \\ 
	50 & 25 & 2.00 & 3.195 & 2.806 & 2.853 & 2.655 & 2.533 & 2.507 & 3.203 & 2.837 & 2.810 \\ 
	50 & 50 & 1.00 & 2.934 & 2.733 & 2.699 & 2.608 & 2.547 & 2.520 & 2.832 & 2.702 & 2.686 \\ 
	50 & 100 & 0.50 & 2.788 & 2.667 & 2.680 & 2.493 & 2.448 & 2.430 & 2.782 & 2.664 & 2.701 \\ 
	50 & 250 & 0.20 & 2.694 & 2.644 & 2.633 & 2.510 & 2.474 & 2.479 & 2.711 & 2.656 & 2.661 \\ 
	100 & 25 & 4.00 & 3.332 & 2.799 & 2.813 & 2.703 & 2.574 & 2.566 & 3.233 & 2.774 & 2.837 \\ 
	100 & 50 & 2.00 & 2.853 & 2.682 & 2.678 & 2.570 & 2.477 & 2.480 & 2.960 & 2.703 & 2.740 \\ 
	100 & 100 & 1.00 & 2.800 & 2.712 & 2.674 & 2.494 & 2.434 & 2.440 & 2.749 & 2.652 & 2.628 \\ 
	100 & 250 & 0.40 & 2.706 & 2.678 & 2.660 & 2.375 & 2.358 & 2.360 & 2.655 & 2.601 & 2.599 \\ 
	250 & 25 & 10.00 & 3.302 & 2.833 & 2.860 & 2.675 & 2.504 & 2.533 & 3.398 & 2.809 & 2.847 \\ 
	250 & 50 & 5.00 & 2.948 & 2.714 & 2.736 & 2.678 & 2.580 & 2.552 & 2.926 & 2.736 & 2.770 \\ 
	250 & 100 & 2.50 & 2.788 & 2.709 & 2.706 & 2.421 & 2.342 & 2.370 & 2.728 & 2.644 & 2.641 \\ 
	250 & 250 & 1.00 & 2.695 & 2.652 & 2.638 & 2.460 & 2.441 & 2.434 & 2.676 & 2.650 & 2.627 \\ 
		\hline\hline
	\end{longtable}
\end{scriptsize} 

\newpage

\begin{scriptsize} 
	\begin{longtable} {ccc|ccc||ccc||ccc}
		\caption{$T$ $\times$ bias for different estimators with $t_3$ errors. Location-scale shift ($\lambda=1$)}\\
		\label{table_t_1_bias}\\
		\hline
		&          &  &\multicolumn{3}{c}{$\tau = 0.25$} & \multicolumn{3}{c}{$\tau = 0.5$} & \multicolumn{3}{c}{$\tau = 0.75$}\\
		\hline
		$n$	&	$T$	&	$n/T$	&	$MD$	&	$FE$	&	$MDT$	&	$MD$	&	$FE$	&	$MDT$	&	$MD$	&	$FE$	&	$MDT$	\\
		\hline
	25 & 25 & 1.00 & 2.924 & 1.424 & 0.781 & -0.032 & -0.026 & -0.016 & -3.020 & -1.435 & -0.849 \\ 
	25 & 50 & 0.50 & 2.898 & 1.416 & 0.795 & 0.112 & 0.044 & 0.109 & -2.824 & -1.368 & -0.774 \\ 
	25 & 100 & 0.25 & 2.882 & 1.460 & 0.837 & 0.006 & -0.037 & 0.014 & -2.912 & -1.571 & -1.002 \\ 
	25 & 250 & 0.10 & 3.374 & 1.826 & 1.127 & -0.086 & -0.069 & -0.007 & -3.161 & -1.788 & -1.032 \\ 
	50 & 25 & 2.00 & 2.991 & 1.380 & 0.746 & 0.024 & 0.033 & 0.057 & -3.043 & -1.396 & -0.737 \\ 
	50 & 50 & 1.00 & 2.934 & 1.462 & 0.906 & -0.004 & -0.022 & 0.005 & -2.943 & -1.490 & -0.890 \\ 
	50 & 100 & 0.50 & 3.019 & 1.544 & 0.909 & -0.025 & 0.056 & -0.017 & -3.015 & -1.519 & -0.899 \\ 
	50 & 250 & 0.20 & 2.999 & 1.404 & 0.810 & 0.107 & 0.211 & 0.137 & -3.006 & -1.593 & -0.891 \\ 
	100 & 25 & 4.00 & 2.926 & 1.419 & 0.746 & -0.016 & -0.003 & -0.004 & -3.053 & -1.434 & -0.803 \\ 
	100 & 50 & 2.00 & 3.015 & 1.494 & 0.856 & 0.054 & 0.029 & 0.023 & -2.974 & -1.511 & -0.855 \\ 
	100 & 100 & 1.00 & 3.021 & 1.517 & 0.855 & 0.013 & -0.013 & -0.021 & -2.912 & -1.510 & -0.897 \\ 
	100 & 250 & 0.40 & 3.070 & 1.500 & 0.885 & 0.083 & 0.072 & 0.058 & -3.046 & -1.563 & -0.899 \\ 
	250 & 25 & 10.00 & 3.039 & 1.439 & 0.793 & -0.002 & -0.007 & 0.000 & -3.057 & -1.453 & -0.824 \\ 
	250 & 50 & 5.00 & 2.945 & 1.472 & 0.830 & 0.002 & 0.035 & 0.007 & -2.944 & -1.493 & -0.833 \\ 
	250 & 100 & 2.50 & 2.989 & 1.466 & 0.869 & 0.045 & 0.027 & 0.024 & -3.041 & -1.522 & -0.901 \\ 
	250 & 250 & 1.00 & 3.071 & 1.516 & 0.918 & 0.051 & 0.036 & 0.055 & -2.986 & -1.389 & -0.803 \\ 
		\hline\hline
	\end{longtable}
\end{scriptsize}

\begin{scriptsize} 
	\begin{longtable} {ccc|ccc||ccc||ccc}
		\caption{$\sqrt{nT}$ $\times$ standard error for different estimators with $t_3$ errors. Location-scale shift  ($\lambda=1$)}\\
		\label{table_t_1_se}\\
		\hline
		&          &  &\multicolumn{3}{c}{$\tau = 0.25$} & \multicolumn{3}{c}{$\tau = 0.5$} & \multicolumn{3}{c}{$\tau = 0.75$}\\
		\hline
		$n$	&	$T$	&	$n/T$	&	$MD$	&	$FE$	&	$MDT$	&	$MD$	&	$FE$	&	$MDT$	&	$MD$	&	$FE$	&	$MDT$	\\
		\hline
	25 & 25 & 1.00 & 4.274 & 3.377 & 3.666 & 3.080 & 2.876 & 2.953 & 4.452 & 3.515 & 3.757 \\ 
	25 & 50 & 0.50 & 3.715 & 3.373 & 3.436 & 2.860 & 2.737 & 2.798 & 3.611 & 3.300 & 3.406 \\ 
	25 & 100 & 0.25 & 3.442 & 3.238 & 3.325 & 2.686 & 2.666 & 2.679 & 3.464 & 3.275 & 3.280 \\ 
	25 & 250 & 0.10 & 3.403 & 3.321 & 3.318 & 2.668 & 2.632 & 2.652 & 3.310 & 3.216 & 3.279 \\ 
	50 & 25 & 2.00 & 4.371 & 3.433 & 3.620 & 3.025 & 2.792 & 2.899 & 4.243 & 3.326 & 3.574 \\ 
	50 & 50 & 1.00 & 3.681 & 3.329 & 3.364 & 2.774 & 2.690 & 2.750 & 3.728 & 3.336 & 3.397 \\ 
	50 & 100 & 0.50 & 3.475 & 3.291 & 3.317 & 2.663 & 2.659 & 2.656 & 3.452 & 3.321 & 3.327 \\ 
	50 & 250 & 0.20 & 3.367 & 3.294 & 3.356 & 2.706 & 2.670 & 2.684 & 3.366 & 3.290 & 3.266 \\ 
	100 & 25 & 4.00 & 4.342 & 3.412 & 3.585 & 3.004 & 2.777 & 2.878 & 4.446 & 3.368 & 3.616 \\ 
	100 & 50 & 2.00 & 3.717 & 3.337 & 3.428 & 2.804 & 2.668 & 2.747 & 3.728 & 3.300 & 3.444 \\ 
	100 & 100 & 1.00 & 3.505 & 3.296 & 3.347 & 2.766 & 2.677 & 2.717 & 3.561 & 3.397 & 3.394 \\ 
	100 & 250 & 0.40 & 3.346 & 3.208 & 3.277 & 2.695 & 2.652 & 2.679 & 3.389 & 3.313 & 3.323 \\ 
	250 & 25 & 10.00 & 4.418 & 3.419 & 3.686 & 3.064 & 2.790 & 2.869 & 4.439 & 3.335 & 3.574 \\ 
	250 & 50 & 5.00 & 3.782 & 3.363 & 3.422 & 2.793 & 2.762 & 2.744 & 3.724 & 3.342 & 3.431 \\ 
	250 & 100 & 2.50 & 3.482 & 3.272 & 3.346 & 2.772 & 2.716 & 2.746 & 3.501 & 3.364 & 3.397 \\ 
	250 & 250 & 1.00 & 3.319 & 3.249 & 3.273 & 2.571 & 2.551 & 2.546 & 3.354 & 3.290 & 3.300 \\ 
		\hline\hline
	\end{longtable}
\end{scriptsize}

\newpage

\begin{scriptsize} 
	\begin{longtable} {ccc|ccc||ccc||ccc}
		\caption{$T$ $\times$ bias for different estimators with $\chi_3^2$ errors. Location-scale shift  ($\lambda=1$)}\\
		\label{table_chi_1_bias}\\
		\hline
		&          &  &\multicolumn{3}{c}{$\tau = 0.25$} & \multicolumn{3}{c}{$\tau = 0.5$} & \multicolumn{3}{c}{$\tau = 0.75$}\\
		\hline
		$n$	&	$T$	&	$n/T$	&	$MD$	&	$FE$	&	$MDT$	&	$MD$	&	$FE$	&	$MDT$	&	$MD$	&	$FE$	&	$MDT$	\\
		\hline
25 & 25 & 1.00 & -1.428 & 0.389 & 2.106 & -2.447 & -1.072 & 0.686 & -6.874 & -3.783 & -1.770 \\ 
25 & 50 & 0.50 & -1.533 & 0.098 & 1.720 & -2.682 & -1.259 & 0.371 & -7.136 & -3.987 & -2.302 \\ 
25 & 100 & 0.25 & -1.207 & 0.021 & 1.641 & -2.614 & -1.136 & 0.404 & -7.732 & -4.135 & -2.344 \\ 
25 & 250 & 0.10 & -1.259 & -0.021 & 1.369 & -2.988 & -1.304 & 0.189 & -8.085 & -4.153 & -2.283 \\ 
50 & 25 & 2.00 & -1.625 & 0.227 & 2.026 & -2.640 & -1.158 & 0.558 & -7.454 & -3.987 & -2.022 \\ 
50 & 50 & 1.00 & -1.630 & 0.136 & 1.719 & -2.703 & -1.185 & 0.388 & -7.373 & -4.137 & -2.220 \\ 
50 & 100 & 0.50 & -1.507 & -0.069 & 1.525 & -2.719 & -1.098 & 0.403 & -7.607 & -3.890 & -2.112 \\ 
50 & 250 & 0.20 & -1.307 & -0.009 & 1.346 & -3.107 & -1.237 & 0.164 & -7.682 & -3.618 & -1.732 \\ 
100 & 25 & 4.00 & -1.690 & 0.244 & 2.050 & -2.564 & -1.081 & 0.655 & -7.461 & -4.050 & -2.005 \\ 
100 & 50 & 2.00 & -1.655 & 0.041 & 1.734 & -2.772 & -1.175 & 0.404 & -7.314 & -4.144 & -2.068 \\ 
100 & 100 & 1.00 & -1.501 & -0.006 & 1.556 & -2.839 & -1.221 & 0.383 & -7.632 & -4.104 & -2.161 \\ 
100 & 250 & 0.40 & -1.294 & -0.117 & 1.388 & -2.964 & -1.202 & 0.316 & -8.206 & -4.126 & -2.003 \\ 
250 & 25 & 10.00 & -1.672 & 0.216 & 2.084 & -2.667 & -1.128 & 0.648 & -7.533 & -4.131 & -1.930 \\ 
250 & 50 & 5.00 & -1.658 & 0.055 & 1.759 & -2.825 & -1.192 & 0.417 & -7.388 & -4.127 & -2.053 \\ 
250 & 100 & 2.50 & -1.519 & -0.091 & 1.510 & -2.830 & -1.186 & 0.363 & -7.705 & -4.088 & -2.132 \\ 
250 & 250 & 1.00 & -1.357 & -0.129 & 1.351 & -3.220 & -1.393 & 0.117 & -8.269 & -4.215 & -2.104 \\ 
		\hline\hline
	\end{longtable}
\end{scriptsize} 

\begin{scriptsize} 
	\begin{longtable} {ccc|ccc||ccc||ccc}
		\caption{$\sqrt{nT}$ $\times$ standard error for different estimators with $\chi_3^2$ errors. Location-shift ($\lambda=1$)}\\
		\label{table_chi_1_se}\\
		\hline
		&          &  &\multicolumn{3}{c}{$\tau = 0.25$} & \multicolumn{3}{c}{$\tau = 0.5$} & \multicolumn{3}{c}{$\tau = 0.75$}\\
		\hline
		$n$	&	$T$	&	$n/T$	&	$MD$	&	$FE$	&	$MDT$	&	$MD$	&	$FE$	&	$MDT$	&	$MD$	&	$FE$	&	$MDT$	\\
		\hline
	25 & 25 & 1.00 & 3.897 & 3.725 & 3.785 & 5.721 & 5.315 & 5.442 & 9.980 & 8.193 & 8.384 \\ 
	25 & 50 & 0.50 & 3.772 & 3.605 & 3.634 & 5.495 & 5.297 & 5.376 & 8.905 & 8.195 & 8.209 \\ 
	25 & 100 & 0.25 & 3.533 & 3.502 & 3.496 & 5.306 & 5.272 & 5.233 & 8.666 & 8.289 & 8.338 \\ 
	25 & 250 & 0.10 & 3.554 & 3.489 & 3.494 & 5.163 & 5.094 & 5.103 & 8.342 & 8.083 & 7.990 \\ 
	50 & 25 & 2.00 & 4.025 & 3.720 & 3.774 & 5.774 & 5.341 & 5.422 & 10.229 & 8.447 & 8.542 \\ 
	50 & 50 & 1.00 & 3.614 & 3.525 & 3.580 & 5.545 & 5.310 & 5.420 & 9.110 & 8.261 & 8.378 \\ 
	50 & 100 & 0.50 & 3.593 & 3.470 & 3.484 & 5.264 & 5.202 & 5.122 & 8.590 & 8.190 & 8.143 \\ 
	50 & 250 & 0.20 & 3.691 & 3.563 & 3.632 & 5.379 & 5.341 & 5.322 & 8.461 & 8.224 & 8.247 \\ 
	100 & 25 & 4.00 & 3.999 & 3.691 & 3.842 & 5.807 & 5.402 & 5.577 & 10.543 & 8.356 & 8.641 \\ 
	100 & 50 & 2.00 & 3.744 & 3.515 & 3.591 & 5.431 & 5.239 & 5.224 & 8.965 & 8.142 & 8.233 \\ 
	100 & 100 & 1.00 & 3.597 & 3.474 & 3.507 & 5.302 & 5.148 & 5.155 & 8.565 & 8.183 & 8.200 \\ 
	100 & 250 & 0.40 & 3.633 & 3.568 & 3.601 & 5.384 & 5.201 & 5.298 & 8.389 & 8.179 & 8.200 \\ 
	250 & 25 & 10.00 & 4.141 & 3.744 & 3.878 & 5.698 & 5.331 & 5.396 & 10.486 & 8.288 & 8.432 \\ 
	250 & 50 & 5.00 & 3.788 & 3.677 & 3.699 & 5.471 & 5.417 & 5.409 & 9.204 & 8.187 & 8.347 \\ 
	250 & 100 & 2.50 & 3.687 & 3.602 & 3.671 & 5.370 & 5.250 & 5.273 & 8.828 & 8.198 & 8.377 \\ 
	250 & 250 & 1.00 & 3.578 & 3.503 & 3.532 & 5.120 & 5.070 & 5.023 & 8.065 & 7.949 & 7.917 \\ 
		\hline\hline
	\end{longtable}
\end{scriptsize}

\newpage

\section{Conclusion}\label{sec:conc}

Asymptotic theory for panel data quantile regression (QR) with fixed effects poses many challenges as it involves models with an increasing number of parameters and a non-smooth objective function. Owing to those difficulties, unbiased asymptotic normality of estimators for common parameters in panel data QR has so far only been known to hold under stringent conditions on the length of panels relative to the number of individuals. Specifically, \cite{KatoGalvaoMontes-Rojas12} proved $\sqrt{nT}$-consistency of a panel data fixed effect QR (FE-QR) estimator under the stringent condition that $n^{2}(\log n)^{3}/T\rightarrow0$, and since then, it has been an open question whether the rates on the sample size requirement could be improved to a condition which is closer to the assumption $n/T = o(1)$ which is known to be sufficient in nonlinear panel data models with smooth objective function. 

The major contribution of this paper was to show that, for both the FE-QR and MD-QR estimators, such an improvement is indeed possible in a wide range of scenarios including observations with temporal dependence.

Our results are important to practitioners and theorists. The main practical implication is that, despite a lack of smoothness of the objective function, panel data QR is applicable for the same type of panel dimensions as other popular nonlinear models. This validates the use of QR in many practical scenarios in which its validity previously lacked theoretical justification.  Our theory also provides grounds for subsequent methodological research on panel data QR which now can rely on an improved and more realistic growth rate on the sample size. We believe that the proof techniques provided here will also be useful to future researchers. 

There are ample directions for future research. For instance, we have not considered censored observations or scenarios with endogeneity. Another topic that merits further exploration is related to the bootstrap and its refinements.

\newpage

\bibliographystyle{econometrica}
\bibliography{pqrb}
 % qhe.bib is the name of our database

\newpage

\linespread{1.35}

\small

\newpage

\appendix

\section{Appendix}

\subsection{Technical results used in proof for MD-QR and FE-QR}

\subsubsection{Results in the i.i.d. case}

\begin{lemma}
\label{VCClemma}
Under Assumptions (I) and (A0)--(A3), 
\begin{equation*}
\widehat{\bm{\gamma}}_{i}(\tau) - \bm{\gamma}_{i0}(\tau) = -\frac{1}{T}B_i^{-1}\sum_{t=1}^{T} \Z_{it}(\1(Y_{it} \leq q_{i,\tau}(\Z_{it})) - \tau)+R_{iT}^{(1)}(\tau) + R_{iT}^{(2)}(\tau),
\end{equation*}
with 
\begin{align}
\label{vcc1}&\sup_{i}\sup_{\eta\in \T}\|R_{iT}^{(2)}(\eta)\| = O_p\Big(\frac{\log T}{T}\Big),\\
\label{vcc2}&\sup_{i}\sup_{\eta\in \T}\|R_{iT}^{(1)}(\eta)\|  = O_p\Big(\Big(\frac{\log T}{T}\Big)^{3/4}\Big),\\
\label{vcc3}&\sup_{i}\sup_{\eta\in \T}\|\E[R_{iT}^{(1)}(\eta)]\| = O\Big(\frac{\log T}{T}\Big),\\
\label{vcc4}&\sup_{i}\sup_{\eta\in \T}\|R_{iT}^{(1)}(\eta)\| = O(1) \quad a.s.,\\
\label{vcc5}&\sup_{i}\sup_{\eta\in \T}\Big\|\E\Big[\Big(R_{iT}^{(1)}(\eta)-\E[R_{iT}^{(1)}(\eta)]\Big)\Big(R_{iT}^{(1)}(\eta)-\E[R_{iT}^{(1)}(\eta)]\Big)^\top\Big]\Big\| = O\Big(\Big(\frac{\log T}{T}\Big)^{3/2}\Big).
\end{align}
\end{lemma}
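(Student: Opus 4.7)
\textbf{Proof plan for Lemma~\ref{VCClemma}.} The starting point is the subgradient optimality condition for the QR objective, which gives $\|\bar{s}_{iT}(\widehat{\bm{\gamma}}_i(\tau))\| = O(M(p+1)/T)$ uniformly in $i$, where $\bar{s}_{iT}(\bm{\gamma}) := \frac{1}{T}\sum_{t=1}^T \Z_{it}(\tau - \1(Y_{it} \leq \Z_{it}^\top \bm{\gamma}))$. The natural decomposition splits $\bar{s}_{iT}(\bm{\gamma}) - \bar{s}_{iT}(\bm{\gamma}_{i0})$ into a deterministic Taylor expansion $\E[\bar{s}_{iT}(\bm{\gamma})-\bar{s}_{iT}(\bm{\gamma}_{i0})] = -B_i(\bm{\gamma}-\bm{\gamma}_{i0}) + O(\|\bm{\gamma}-\bm{\gamma}_{i0}\|^2)$ (via (A2)--(A3)) plus a centered empirical-process piece $\mathbb{G}_{iT}(\bm{\gamma})$ which vanishes at $\bm{\gamma}=\bm{\gamma}_{i0}$. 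Setting $\bm{\gamma}=\widehat{\bm{\gamma}}_i$ and inverting $B_i$ yields the stated leading term, together with the identifications $R^{(1)}_{iT}(\tau) := B_i^{-1}\mathbb{G}_{iT}(\widehat{\bm{\gamma}}_i)$ and $R^{(2)}_{iT}(\tau)$ collecting the KKT slack and the quadratic Taylor remainder.

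Having established the preliminary uniform rate $\sup_i \|\widehat{\bm{\gamma}}_i(\tau)-\bm{\gamma}_{i0}(\tau)\| = O_P(\sqrt{\log T/T})$ by the standard convexity / exponential-inequality argument (with the $\log n$ cost of the union bound absorbed into $\log T$ under (A4)), the bounds \eqref{vcc1}, \eqref{vcc2}, \eqref{vcc4}, \eqref{vcc5} are essentially routine. Specifically, \eqref{vcc1} follows from $\|\widehat{\bm{\gamma}}_i - \bm{\gamma}_{i0}\|^2 = O(\log T/T)$ combined with the deterministic $O(1/T)$ KKT slack. For \eqref{vcc2}, I would use a modulus-of-continuity bound for $\mathbb{G}_{iT}$ over a ball of radius $O(\sqrt{\log T/T})$; a key input is the variance estimate $O(\|\bm{\gamma}-\bm{\gamma}_{i0}\|/T)$ for each summand of $\mathbb{G}_{iT}(\bm{\gamma})-\mathbb{G}_{iT}(\bm{\gamma}_{i0})$, which follows from the mean-value theorem applied to the expected indicator. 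The bound \eqref{vcc4} is immediate since indicator-based summands are in $[-1,1]$, and \eqref{vcc5} follows from Bernstein with the same variance input combined with the preliminary rate.

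The truly delicate piece is \eqref{vcc3}: improving the bound on $\|\E R^{(1)}\|$ from the naive $(\log T/T)^{3/4}$ (inherited from \eqref{vcc2}) down to $\log T/T$. Here I plan to adapt the strategy of \cite{VCC}. The idea is to substitute the first-order expansion $\widehat{\bm{\gamma}}_i-\bm{\gamma}_{i0} = B_i^{-1}\bar{s}_{iT}(\bm{\gamma}_{i0}) + O_P((\log T/T)^{3/4})$ back into $\mathbb{G}_{iT}(\widehat{\bm{\gamma}}_i)$ and perform a second-order analysis: $\mathbb{G}_{iT}(\widehat{\bm{\gamma}}_i)$ is approximately a bilinear form in the score $\bar{s}_{iT}(\bm{\gamma}_{i0})$ and in an auxiliary mean-zero empirical process linearizing $\mathbb{G}_{iT}$ around $\bm{\gamma}_{i0}$; both factors are sample averages of mean-zero terms of magnitude $O_P(T^{-1/2})$ built from the \emph{same} data, so explicit cross-moment computation yields an expectation of size $O(1/T)$, with the extra $\log T$ coming from tail control of the leftover $(\log T/T)^{3/4}$-sized perturbation. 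The main obstacle I foresee is the careful empirical-process bookkeeping needed to handle the higher-order bilinear remainders uniformly in $i$ without any smoothness of the score: generic chaining only reaches $(\log T/T)^{3/4}$, so the argument must genuinely exploit the linearized-score representation and the cross-moment cancellations to beat that barrier, while ensuring the $\log n$ factors from maxima over $n=o(T/(\log T)^2)$ individuals remain subdominant.
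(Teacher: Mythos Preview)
Your decomposition and overall strategy match the paper's: for \eqref{vcc1}--\eqref{vcc4} the paper simply invokes Theorem~S.2.1 of \cite{VCC}, noting that in their notation $g_N=0$, $c_N=0$, $\xi_m$ and $m$ are constants, and absorbing the union bound over $i$ by taking $\kappa_n=C\log T$. Your sketch is essentially a reconstruction of that theorem from first principles, which is fine, and your identification of $R_{iT}^{(1)}$ and $R_{iT}^{(2)}$ is exactly the one used there.

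Two places deserve sharpening. For \eqref{vcc3}, your ``bilinear form / cross-moment'' description is heuristic: since $\widehat{\bm\gamma}_i$ depends on the whole sample, the two ``mean-zero factors built from the same data'' are not independent, and the product does not \emph{a priori} have expectation $O(1/T)$. The device that makes this rigorous in \cite{VCC} (and which the present paper spells out in the dependent analogue, Lemma~\ref{VCClemma_beta}) is a \emph{leave-one-out} replacement: in the $t$-th summand of $\mathbb G_{iT}$ substitute $\widehat{\bm\gamma}_i^{(-t)}$ for $\widehat{\bm\gamma}_i$, so that the $t$-th term has zero conditional mean by independence, and then control the perturbation via $\sup_t\|\widehat{\bm\gamma}_i^{(-t)}-\widehat{\bm\gamma}_i\|=O_P((\log T)/T)$. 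You cite \cite{VCC}, so you may well have this in mind, but the write-up should name the mechanism rather than appeal to unspecified cross-moment cancellations.

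For \eqref{vcc5}, ``Bernstein with the same variance input'' is not quite right either, because $R_{iT}^{(1)}$ is an empirical process evaluated at the random point $\widehat{\bm\gamma}_i$, so its second moment is not a sum of i.i.d.\ variances. The paper's argument is a clean truncation: writing $r_{j,i,\eta}$ for a coordinate of $R_{iT}^{(1)}(\eta)-\E R_{iT}^{(1)}(\eta)$,
\[
\E[r_{j,i,\eta}^2]\ \le\ \mathcal C_7^2\,\P\Big(|r_{j,i,\eta}|>\mathcal C_8\big(\tfrac{\log T}{T}\big)^{3/4}\Big)\ +\ \mathcal C_8^2\Big(\tfrac{\log T}{T}\Big)^{3/2},
\]
and then combine the a.s.\ bound \eqref{vcc4} with the tail probability $\le 2T^{-2}$ already furnished by the VCC result (take $\kappa=2$). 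This finishes \eqref{vcc5} in one line once \eqref{vcc2} and \eqref{vcc4} are in hand.
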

\begin{proof}
Observe the decomposition
\begin{equation*}
\widehat{\bm{\gamma}}_{i}(\eta) - \bm{\gamma}_{i0}(\eta) = -\frac{1}{T}B_i^{-1}\sum_{t=1}^{T} \Z_{it}(\1(Y_{it} \leq q_{i,\eta}(\Z_{it})) - \eta) + r_{i,1}(\eta) + r_{i,2}(\eta) + r_{i,3}(\eta),
\end{equation*}
where
\begin{align*}
r_{i,1}(\eta) &:= \frac{1}{T}B_i^{-1}\sum_{t=1}^{T} \Z_{it}(\1(Y_{it} \leq \Z_{it}^\top \widehat{\bm{\gamma}}_{i}(\eta)) - \eta),
\\
r_{i,2}(\eta) &:= - \frac{1}{T}B_i^{-1} \sum_{t=1}^{T} \Big\{\Z_{it}\Big( \1(Y_{it} \leq \Z_{it}^\top \widehat{\bm{\gamma}}_{i}(\eta)) -\1(Y_{it} \leq \Z_{it}^\top \bm{\gamma}_{i0}(\eta)) \Big)
\\
& \quad \quad \quad \quad \quad \quad \quad \quad- \int z[F_{Y|Z}(z^\top \widehat{\bm{\gamma}}_{i}(\eta) \mid z) - F_{Y|Z}(z^\top {\bm{\gamma}}_{i0}(\eta)\mid z)] dP^{\Z_{i1}}(z)\Big\},
\\
r_{i,3}(\eta) &:= - B_i^{-1}\Big[ \int z[F_{Y|Z}(z^\top \widehat{\bm{\gamma}}_{i}(\eta) \mid z) - F_{Y|Z}(z^\top {\bm{\gamma}}_{i0}(\eta)\mid z)] dP^{\Z_{i1}}(z) - B_i (\widehat{\bm{\gamma}}_{i}(\eta) - \bm{\gamma}_{i0}(\eta))\Big].
\end{align*}
Let $R_{iT}^{(1)}(\eta) := r_{i,2}(\eta), R_{iT}^{(2)}(\eta) := r_{i,1}(\eta) + r_{i,3}(\eta)$.

The first four statements \eqref{vcc1}--\eqref{vcc4} follow from Theorem S.6.1 of \cite{VCC}. More precisely, note that in the notation of \cite{VCC} under the assumptions (I) and (A1)--(A3) we have $g_N = 0, c_N = 0$, $\xi_m, m$ are constant. Apply the union bound to handle the $\sup_i$ and choose $\kappa_n = C\log T$ in \cite{VCC} for a suitable constant $C$.% and hence their proof is omitted. 

To prove \eqref{vcc5}, denote the $j$-th element of the vector $R_{iT}^{(1)}(\eta)-\E[R_{iT}^{(1)}(\eta)]$ by $r_{j,i,\eta}$. By \eqref{vcc4} $\sup_{i,j}\sup_{\eta\in \T}|r_{j,i,\eta}| \leq \mathcal{C}_7$ and by Theorem S.6.1 in \cite{VCC}  
\begin{equation*}
\P\Big(\sup_{\eta \in \T} |r_{j,i,\eta}|\geq \mathcal{C}_8 \Big(\frac{(\kappa+1)\log T}{T}\Big)^{3/4}\Big) \leq 2 T^{-\kappa},
\end{equation*}
with $\kappa$ and $\mathcal{C}_8$ independent of $j$ and $i$. Hence, take $\kappa = 2$
\begin{equation*}
\E[r_{j,i,\eta}^2] \leq \mathcal{C}_7^2 \P\Big(|r_{j,i,\eta}| > \mathcal{C}_8 \Big(\frac{3\log T}{T}\Big)^{3/4}\Big) + \E\Big[r_{j,i,\eta}^2 \1\Big\{|r_{j,i,\eta}|\leq \mathcal{C}_8 \Big(\frac{3\log T}{T}\Big)^{3/4}\Big\}\Big] \leq  \frac{2\mathcal{C}_7^2}{T^2} + \mathcal{C}_8^2 \Big(\frac{3\log T}{T}\Big)^{3/2}.
\end{equation*}
Since the covariance matrix is of fixed dimension, \eqref{vcc5} follows.
% therefore, $\sup_{i,\eta}\|\E\Big[\Big(R_{iT}^{(1)}(\eta)-\E[R_{iT}^{(1)}(\eta)]\Big)\Big(R_{iT}^{(1)}(\eta)-\E[R_{iT}^{(1)}(\eta)]\Big)^\top\Big]\| = O\Big(\Big(\frac{\log T}{T}\Big)^{2/3}\Big)$. 
\end{proof}

\subsubsection{Results in the dependent case}

For a class of functions $\mathcal{G}$ we use the following notation
\begin{align*}
\|\mathbb{P}_{i,T} - \mathbb{P}_i\|_{\mathcal{G}} &:= \sup_{g\in \mathcal{G}}\Big |\frac{1}{T}\sum_{t=1}^{T} (g(Y_{it}, \Z_{it}) - \E[g(Y_{i1},\Z_{i1})])\Big |,
\\
\sigma_{q,i}(f) &:= Var \Big(\frac{1}{\sqrt{q}} \sum_{t=1}^q f(Y_{it},\Z_{it})\Big).
\end{align*}

We proceed in the same way as in the previous section and begin by stating and proving several technical lemmas for the dependent case that are useful in the demonstration of Theorem \ref{th2}.

\begin{lemma}\label{varb}
Consider a function $f: \mathbb{R}^{p+2} \to \mathbb{R}$ with $\|f\|_{\infty} \leq \C_8$ and such that for all $i=1,...,n$ $\E[f(Y_{i1},\Z_{i1})] = 0$ and $Var(f(Y_{i1},\Z_{i1})) \leq \delta <1$. Further assume that (A0), (D1) hold. Then 
\begin{equation*}
\sigma_{q,i}(f) \leq \bar{\C}\delta ( 1 + | \log \delta|)
\end{equation*} 
for a constant $\bar{\C}$ depending on $\C_8$, $\C_\beta$ and $b_\beta$ only. 
\end{lemma}
\begin{proof}[Proof of Lemma~\ref{varb}] Throughout the proof use the short-hand notation $\xi_t := (Y_{it},\Z_{it})$ dropping the dependence on $i$ (note that all constants are independent of $i$). First the properties of the function $f$ lead to the following bounds ( holding for any fixed positive integer $j$) 
\begin{align*}
\E[|f(\xi_1)|^2] \E[|f(\xi_{1+j})|^2] &= Var(f(\xi_{1}))^2 \leq \delta^2,
\\
\E[|f(\xi_1) f(\xi_{1+j})|^2] &\leq \|f\|_{\infty}^2 Var(f(\xi_{1})) \leq \C_8^2 \delta. 
\end{align*}
Let $H_\delta = \max\{\delta^2, \C_8^2 \delta\}$. Apply Lemma C.1 in \cite{KatoGalvaoMontes-Rojas12} with their $\delta = 1$ to obtain 
\begin{equation*}
|Cov(f(\xi_1), f(\xi_{1+j}))| \leq 4 H_\delta^{1/2}(\beta(j))^{1/2}.
\end{equation*}
On the other hand, we also have 
\begin{equation*}
|Cov(f(\xi_1), f(\xi_{1+j}))| = |\E[f(\xi_1) f(\xi_{1+j})]| \leq Var(f(\xi_1)) \leq \delta.
\end{equation*}
Hence 
\begin{equation*}
|Cov(f(\xi_1), f(\xi_{1+j}))| \leq \min\{4 H_{\delta}^{1/2}(\beta(j))^{1/2}, \delta\}.
\end{equation*}
Now we know that 
\begin{equation*}
Var\Big(\frac{1}{\sqrt{q}}\sum_{t=1}^{q} f(\xi_t)\Big) = Var(f(\xi_1)) + 2\sum_{j=1}^{q-1} \Big(1-\frac{j}{q}\Big) Cov(f(\xi_1), f(\xi_{1+j})),
\end{equation*}
therefore, for any $j_0 \geq 1$,
\begin{align*}
Var\Big(\frac{1}{\sqrt{q}}\sum_{t=1}^{q} f(\xi_t)\Big)  & \leq \delta + 2 \sum_{j=1}^{q-1} \Big(1-\frac{j}{q}\Big) \min \{4 H_\delta^{1/2}(\beta(j))^{1/2}, \delta\}\\
& \leq \delta \Big (1 +2 \sum_{j=1}^{q-1}  \min \Big\{1, \frac{4H_\delta^{1/2}}{\delta} (\beta(j))^{1/2}\Big\}\Big)\\
& \leq \delta \Big ( 1+ 2\sum_{j=1}^{q-1} \min \Big \{ 1, 4\Big(\frac{\C_\beta (\C_8+1)b_\beta^{j_0}}{\delta}\Big)^{1/2} (b_\beta^{1/2})^{j-j_0}\Big\}\Big)\\
& \leq \bar{\C} \delta(1 + |\log \delta|),
\end{align*}
where $\bar{\C}$ only depends on $\C_8$, $\C_9$ and $b_\beta$. The second last inequality is obtained since 
\begin{equation*}
\frac{4H_{\delta}^{1/2}}{\delta}(\beta(j))^{1/2} \leq 4 c_{\beta}^{1/2} (b_{\beta}^{1/2})^{j} \Big(\max \{1, \frac{C_8}{\delta}\}\Big)^{1/2} \leq 4\Big(\frac{\C_\beta (\C_8+1)b_\beta^{j_0}}{\delta}\Big)^{1/2} (b_\beta^{1/2})^{j-j_0}.
\end{equation*}
The last inequality is obtained by choosing $j_0$ to be the smallest integer such that 
\begin{equation*}
4\Big(\frac{\C_\beta (\C_8+1)b_\beta^{j_0}}{\delta}\Big)^{1/2} \leq 1,
\end{equation*}
and thus for $C\leq 1$, 
\begin{equation*}
2\sum_{j=1}^{q-1} \min \Big \{ 1, C (b_\beta^{1/2})^{j-j_0}\Big\} \leq 2 \Big \{\Big(\sum_{j=1}^{j_0}1\Big) + C \sum_{j=j_0+1}^{\infty}(b_{\beta}^{1/2})^{j-j_0}\Big\}\lesssim j_0 + 1,
\end{equation*}
and since the chosen $j_0$ is of order $O(|\log \delta|)$, the last inequality holds. 
\end{proof}

\bigskip

\begin{lemma}
\label{lem:G1G2}
Consider the classes of functions 
\begin{align*}
\mathcal{G}_1 & := \Big\{ (y,z) \mapsto a^\top z (\1\{y\leq b^\top z\} - \eta)\1\{\|z\|\leq M\} \Big| b \in \mathbb{R}^{p+1},\eta \in \mathcal{T}, a\in \mathcal{S}^{p+1} \Big\},
\\
\mathcal{G}_2(\delta) &:= \Big\{ (y,z) \mapsto a^\top z (\1\{y\leq b_1^\top z\} - \1\{y\leq b_2^\top z\})\1\{\|z\|\leq M\} \Big| \|b_1-b_2\| \leq \delta, a\in \mathcal{S}^{p+1} \Big\},
\\
\mathcal{G}_{3,k,\ell}(\delta) &:= \Big\{ (y_1,z_1,y_2,z_2) \mapsto g_{b_1,k,\ell}(y_1,z_1,y_2,z_2)- g_{b_2, k, \ell}(y_1,z_1,y_2,z_2)\Big| \|b_1 - b_2 \|\leq \delta, b_1,b_2 \in \mathbb{R}^{p+1}\Big\},
\end{align*}
where
\begin{equation*}
g_{b,k,\ell}(y_1, z_1, y_2, z_2) := z_{1,k}z_{2,\ell} (\tau - \1\{y_1 \leq z_1^\top b \})(\tau - \1\{y_2 \leq z_2^\top b\}).
\end{equation*}
Under assumptions (A0)--(A3) and (D1) there exists a constant $c_0$ which is independent of $n, T, i$ such that for any $\kappa > 1$
\begin{align}
P\Big( \|\mathbb{P}_{i,T} - \mathbb{P}_i\|_{\mathcal{G}_1} \geq c_0 \kappa^{1/2} \Big( \frac{\log T}{T}\Big)^{1/2} \Big) &\leq T^{-\kappa}, \label{G1_beta}
\\
%P\Big( \|\mathbb{P}_{i,T} - \mathbb{P}_i\|_{\mathcal{G}_2(\delta)} \geq c_0 \kappa^2 \chi_T(\delta) \Big) &\leq T^{-\kappa},
%\\
P\Big(\sup_{0\leq \delta\leq 1}\frac{\|\mathbb{P}_{i,T} - \mathbb{P}_i\|_{\mathcal{G}_2(\delta)}}{\chi_T(\delta)} \geq c_0\kappa^2 \Big) &\leq T^{-\kappa}, \label{G2_beta} %\log_2 T
\end{align}
where
\begin{equation}
\chi_T(\delta) := T^{-1/2}\delta^{1/2}\log T + T^{-1}(\log T)^{2}. \label{def:chi}
\end{equation}
If moreover assumption (D2) holds, defining
\begin{equation} \label{STjkl}
S_{T,j,k,\ell}(\delta) := \sup_i \sup_{g \in \mathcal{G}_{3,k,\ell}(\delta)}\Big|\frac{1}{|T_j|}\sum_{t\in T_j} g(Y_{it},\Z_{it},Y_{ij+t},\Z_{ij+t}) - \E[ g(Y_{it},\Z_{it},Y_{ij+t},\Z_{ij+t})] \Big|,
\end{equation}
with $T_j:=\{t | 1\leq t \leq T, 1\leq t+j \leq T\}$, then for any $1\leq j \leq m_T, 1\leq k,\ell \leq p+1$,
\begin{equation}
\P\Big(S_{T,j,k,\ell}(\delta) \geq c_0\kappa^2 \Big( T^{-1/2}\delta^{1/2}(\log T)^{1/2}(m_T + \log T)^{1/2} + T^{-1}(\log T)(m_T + \log T) \Big) \Big) \leq T^{-\kappa}. \label{G3_beta}
\end{equation}
\end{lemma}

\begin{proof}[Proof of Lemma~\ref{lem:G1G2}] 
First, for functions $g_1$ belonging to the class $\mathcal{G}_1$, it is easy to show that $\|g_1\|_{\infty} \leq U_1$ and $\sup_i \sup_{g_1\in \mathcal{G}_1} Var(g_1(Y_{i1}, \Z_{i1})) \leq c_1$ for some constants $U_1,c_1 <\infty$. Applying Lemma~\ref{varb} to $g_1/(2c_1)$, we have for any integer $q \geq 1$ that $\sigma_q^2(g_1) \leq u_1$ for some constant $u_1 < \infty$. Finally, note that for any probability measure $Q$ and any $0<\epsilon<1$ we have
\begin{equation*}
N(\mathcal{G}_1,L_1(Q),\epsilon) \leq N(\mathcal{G}_1,L_2(Q),\epsilon) \leq (A/\epsilon)^v, 
\end{equation*}
for some constants $A,v < \infty$; here the first inequality follows by an application of the Cauchy-Schwarz inequality and the second inequality follows by similar arguments as given in the proof of Lemma C.3 in \cite{CVC}. Now apply Proposition C.2 of \cite{KatoGalvaoMontes-Rojas12} to find that for a constant $C$ independent of $T,n,i,q,\kappa$ we have for any $s_T,q_T$ such that $q_T^2\log q_T = o(T)$
\begin{equation*}
\P\Big(T \|\mathbb{P}_{i,T} - \mathbb{P}_{i}\|_{\mathcal{G}_1} \geq C \Big(\sqrt{T\log T}+ \sqrt{s_TT} + s_T q_T\Big) \Big) \leq 2e^{-s_T}+2T\beta(q_T).
%\P\Big(T \|\mathbb{P}_{i,T} - \mathbb{P}_{i}\|_{\mathcal{G}_1} \geq C \sqrt{\nu}\sqrt{T} \sigma_{q_1} \sqrt{\log \frac{\sqrt{q}\sqrt{2A}U}{\sigma_{q_1}}}+C \sigma_{q_1} \sqrt{sT} + s q C U\Big) \leq 2e^{-s}+2r\beta(q)
\end{equation*}
Now let $q_T = c_{q_1} \kappa\log T$, $s_T = c_{s_1} \kappa \log T$ where the constants $c_{q_1}, c_{s_1}$ are chosen such that $2e^{-s_T}+2T\beta(q_T) \leq T^{-\kappa}$. This shows~\eqref{G1_beta}. 

At the cost of changing constants we will prove~\eqref{G2_beta} for $g_2/(4 M^3  f_{\max})^{1/2}$ for functions $g_2 \in \mathcal{G}_2(\delta)$. We have $\|g_2/(4 M^3  f_{\max})^{1/2}\|_{\infty} \leq U_2$ under assumption (A1), and under Assumptions (A1), (A2), 
\begin{align*}
Var\Big(g_2(Y_{i1}, \Z_{i1})/(4 M^3  f_{\max})^{1/2}\Big) &\leq 4 M^2 \E[|F_{Y\mid \Z}(b_1^\top \Z_{i1}) - F_{Y\mid \Z}(b_2^\top \Z_{i1})|]/(4 M^3  f_{\max}) %\\
%& 
\leq \delta.
\end{align*}
We begin by assuming that $1 \geq \delta \geq 1/T$. Finally, note that for any probability measure $Q$ and any $0<\epsilon<1, 1 \geq \delta > 0$ we have
\begin{equation*}
N(\mathcal{G}_2(\delta),L_1(Q),\epsilon) \leq N(\mathcal{G}_2(1),L_1(Q),\epsilon) \leq N(\mathcal{G}_2(1),L_2(Q),\epsilon) \leq (A/\epsilon)^v, 
\end{equation*}
for some constants $A,v < \infty$; here the second inequality follows by an application of the Cauchy-Schwarz inequality and the third inequality follows by similar arguments as given in the proof of Lemma C.3 in \cite{CVC}. Apply Lemma~\ref{varb} to obtain that we have $\sigma_{q}^2(g_2)\leq \bar{C} \delta \log T$. Next apply Proposition C.2 of \cite{KatoGalvaoMontes-Rojas12} to the class $\mathcal{G}_2(\delta)$ to find that for a constant $C_1$ independent of $T,n,i,q,\kappa$ we have for any $s_T,q_T,\sigma_T^2(\delta)$ such that $\sigma_{T}^2(\delta) \geq \sup_{g_2 \in {\mathcal{G}}_2(\delta)} \sigma_q^2(g_2)$ and $q_T^2 \log T \leq \widetilde{c}T\sigma_T^2(\delta)$ for a fixed constant $\widetilde{c}$ which is independent of $n,T,i,\delta$ 
%, provides that $T\sigma_{q_2}^2(\delta) \geq q^2 \nu U^2 \log \frac{\sqrt{q}Au}{\sigma_{q_2}}$ and $\sigma_{q_2}^2(\delta) \geq \sup_{g_2 \in \widetilde{\mathcal{G}}_2(\delta)} \sigma_q^2(g_2(\delta))$, then for all $s > 0$, 
\begin{equation*}
\P\Big(T \|\mathbb{P}_{i,T}-\mathbb{P}_{i}\|_{\mathcal{G}_2(\delta)} \geq C_1 \Big( \sqrt{T} \sigma_{T}(\delta) \sqrt{\log T} + \sigma_{T}(\delta) \sqrt{s_T T} + s_T q_T \Big)\Big) \leq 2 e^{-s_T} + 2T\beta(q_T).
\end{equation*}
Pick $q_T = c_q \kappa \log T$ and $s_T = c_s \kappa \log T$ with $c_q, c_s$ such that $2 e^{-s_T} + 2T\beta(q_T) \leq T^{-(\kappa+1)}$ and let $\sigma_{T}^2(\delta) = \max \{\delta \log T, c_q^2\kappa^2(\log T)^3/(\widetilde{c}T)\}$. This shows the existence of a constant $\widetilde{c}_0$ such that 
\begin{equation}\label{G2_beta:prel}
\P\Big(T\|\mathbb{P}_{i,T} - \mathbb{P}_i\|_{\mathcal{G}_2(\delta)}\geq \widetilde{c}_0\kappa^2  \Big(T^{1/2}\delta^{1/2}\log T + (\log T)^{2} \Big)\Big) \leq T^{-(\kappa+1)}.
\end{equation}
Now to prove~\eqref{G2_beta} note that $\chi_T(u)$ is decreasing in $u$ and $\chi_T(u/2) \geq \chi_T(u)/2$, $\chi_T(0) \geq \chi_T(1/T)/2$ and note that
\begin{align*}
& \sup_{0\leq \delta\leq 1} \frac{\|\mathbb{P}_{i,T} - \mathbb{P}_i\|_{\mathcal{G}_2(\delta)}}{\chi_T(\delta)}
\\
& \leq \Big( \sup_{0\leq \delta\leq 1/T} \frac{\|\mathbb{P}_{i,T} - \mathbb{P}_i\|_{\mathcal{G}_2(\delta)}}{\chi_T(\delta)}\Big) \vee \Big(\max_{k: T^{-1} \leq 2^{-k} \leq 1} \sup_{2^{-k-1}\leq \delta\leq 2^{-k}} \frac{\|\mathbb{P}_{i,T} - \mathbb{P}_i\|_{\mathcal{G}_2(\delta)}}{\chi_T(\delta)}\Big)
\\
& \leq \Big(\frac{2\|\mathbb{P}_{i,T} - \mathbb{P}_i\|_{\mathcal{G}_2(1/T)}}{\chi_T(1/T)}\Big) \vee \Big(\max_{k: T^{-1} \leq 2^{-k} \leq 1} \frac{2\|\mathbb{P}_{i,T} - \mathbb{P}_i\|_{\mathcal{G}_2(2^{-k})}}{\chi_T(2^{-k})}\Big).
\end{align*}
Now the last line contains the maximum of $O(\log T)$ elements, and so by adjusting the constant $\widetilde{c}_0$ in~\eqref{G2_beta:prel} and applying the union bound we obtain~\eqref{G2_beta}.

Next we prove~\eqref{G3_beta}. Simple calculation shows that under (D2) for any $j \geq 1$
\begin{equation*}
\E\Big[\Big(g_{b_1,k,\ell}(Y_{i1}, \Z_{i1}, Y_{i1+j}, \Z_{i1+j}) - g_{b_2,k,\ell}(Y_{i1}, \Z_{i1}, Y_{i1+j}, \Z_{i1+j})\Big)^2\Big] \leq C\|b_1 - b_2\|,
\end{equation*}
for a constant $C$ independent of $i,j, k, \ell$. Next, note that there exist constants $A, v$ such that for all $k,\ell$
\begin{equation*}
N\Big(\mathcal{G}_{3,k,\ell}(\delta), L_1(Q), \epsilon\Big) \leq (A/\epsilon)^v.
\end{equation*}
To see this, observe that any function in $\mathcal{G}_{3,k,\ell}(\delta)$ can be expressed as through sums and products of functions from the classes $\mathcal{H}_1 := \{(y_1,z_1,y_2,z_2) \mapsto z_{1,k}z_{2,\ell}| 1 \leq k,\ell \leq p+1\}, \mathcal{H}_2 := \{(y_1,z_1,y_2,z_2) \mapsto \tau - \1\{y_1 \leq z_1^\top b \} | b \in \mathbb{R}^{p+1}\}$, $\mathcal{H}_3 := \{(y_1,z_1,y_2,z_2) \mapsto \tau - \1\{y_2 \leq z_2^\top b \} | b \in \mathbb{R}^{p+1}\}$ and that each of the three classes satisfies
\begin{equation*}
N(\mathcal{H}_j,L_2(Q),\epsilon) \leq (\widetilde{A}/ \epsilon)^{\widetilde{v}}, 
\end{equation*}
for all $0 < \epsilon\leq 1$ and some constants $\widetilde{A}, \widetilde{v} < \infty$. Hence, by the Cauchy-Schwarz inequality and Lemma 24 in \cite{BCCF} (note that the proof of this Lemma continues to hold for arbitrary probability measures, discreteness is not required), we find that
\begin{equation*}
N(\mathcal{G}_{3,k,\ell}(\delta),L_1(Q),\epsilon) \leq N(\mathcal{G}_{3,k,\ell}(1),L_1(Q),\epsilon)  \leq (A/\epsilon)^v, 
\end{equation*}
for some $A,v < \infty$. Finally, note that under (D1) the series of random vectors $\{(Y_{it}, \Z_{it}, Y_{it+j}, \Z_{it+j})\}_{t\in\mathbb{Z}}$ is $\beta$-mixing with mixing coefficients $\widetilde{\beta}(t)$ satisfying $\widetilde{\beta}(t) \leq \beta(0\vee (t-j))$. Now similar arguments as those used to prove Lemma \ref{varb} show that for $g\in \mathcal{G}_{3,k,\ell}(\delta)$
\begin{equation*}
\sigma_{q,i,j}^2(g) := Var \Big(\frac{1}{\sqrt{q}} \sum_{t=1}^q f(Y_{it},\Z_{it},Y_{it+j},\Z_{it+j})\Big),
\end{equation*}
we have for $1 \geq \delta \geq 1/T$ and a constant $C$ independent of $n,T,\delta$ 
\begin{equation*}
\max_{1\leq j \leq m_T} \sup_i \sup_{g\in \mathcal{G}_{3,k,\ell}(\delta)} \sup_{q \geq 1}\sigma_{q,i,j}^2(g) \leq C \delta (m_T + \log T).
\end{equation*}
Hence we can apply Proposition C.2 of \cite{KatoGalvaoMontes-Rojas12} by picking $q_T = c_q \kappa(m_T +  \log T)$ and by similar arguments as for deriving \eqref{G2_beta:prel} we obtain~\eqref{G3_beta}. 
\end{proof}

\begin{lemma}
\label{VCClemma_beta}
Under Assumptions (A0)--(A3) and (D1) 
\begin{equation*}
\widehat{\bm{\gamma}}_{i}(\tau) - \bm{\gamma}_{i0}(\tau) = -\frac{1}{T}B_i^{-1}\sum_{t=1}^{T} \Z_{it}(\1(Y_{it} \leq \Z_{it}^\top \bm{\gamma}_{i}(\tau)) - \tau)+R_{iT}^{(1)}(\tau) + R_{iT}^{(2)}(\tau),
\end{equation*}
with 
\begin{align}
\label{vcc1_beta}&\sup_{i}\sup_{\eta\in \T}\|R_{iT}^{(2)}(\eta)\| = O_p\Big(\frac{\log T}{T}\Big),\\
\label{vcc2_beta}&\sup_{i}\sup_{\eta\in \T}\|R_{iT}^{(1)}(\eta)\|  = O_p\Big(\Big(\frac{(\log T)^{5/4}}{T^{3/4}}\Big)\Big),\\
\label{vcc3_beta}&\sup_{i}\sup_{\eta\in \T}\|\E[R_{iT}^{(1)}(\eta)]\| = O\Big(\frac{(\log T)^2}{T}\Big),\\
\label{vcc4_beta}&\sup_{i}\sup_{\eta\in \T}\|R_{iT}^{(1)}(\eta)\| = O(1),\\
\label{vcc5_beta}&\sup_{i}\sup_{\eta\in \T}\Big\|\E\Big[\Big(R_{iT}^{(1)}(\eta)-\E[R_{iT}^{(1)}(\eta)]\Big)\Big(R_{iT}^{(1)}(\eta)-\E[R_{iT}^{(1)}(\eta)]\Big)^\top\Big]\Big\| = O\Big(\Big(\frac{(\log T)^{5/2}}{T^{3/2}}\Big)\Big).
\end{align}
\end{lemma}

\begin{proof}

Observe the decomposition
\begin{equation*}
\widehat{\bm{\gamma}}_{i}(\eta) - \bm{\gamma}_{i0}(\eta) = -\frac{1}{T}B_i^{-1}\sum_{t=1}^{T} \Z_{it}(\1(Y_{it} \leq q_{i,\eta}(\Z_{it})) - \eta) + r_{i,1}(\eta) + r_{i,2}(\eta) + r_{i,3}(\eta),
\end{equation*}
where
\begin{align*}
r_{i,1}(\eta) &:= \frac{1}{T}B_i^{-1}\sum_{t=1}^{T} \Z_{it}(\1(Y_{it} \leq \Z_{it}^\top \widehat{\bm{\gamma}}_{i}(\eta)) - \eta),
\\
r_{i,2}(\eta) &:= - \frac{1}{T}B_i^{-1} \sum_{t=1}^{T} \Big\{\Z_{it}\Big( \1(Y_{it} \leq \Z_{it}^\top \widehat{\bm{\gamma}}_{i}(\eta)) -\1(Y_{it} \leq \Z_{it}^\top \bm{\gamma}_{i0}(\eta)) \Big)
\\
& \quad \quad \quad \quad \quad \quad \quad \quad- \int z[F_{Y|Z}(z^\top \widehat{\bm{\gamma}}_{i}(\eta) \mid z) - F_{Y|Z}(z^\top {\bm{\gamma}}_{i0}(\eta)\mid z)] dP^{\Z_{i1}}(z)\Big\},
\\
r_{i,3}(\eta) &:= - B_i^{-1}\Big[ \int z[F_{Y|Z}(z^\top \widehat{\bm{\gamma}}_{i}(\eta) \mid z) - F_{Y|Z}(z^\top {\bm{\gamma}}_{i0}(\eta)\mid z)] dP^{\Z_{i1}}(z) - B_i (\widehat{\bm{\gamma}}_{i}(\eta) - \bm{\gamma}_{i0}(\eta))\Big].
\end{align*}
Let $R_{iT}^{(1)}(\eta) := r_{i,2}(\eta), R_{iT}^{(2)}(\eta) := r_{i,1}(\eta) + r_{i,3}(\eta)$.

Now similar arguments as in the proof of Theorem 5.1 in \cite{CVC} (noting that, in the notation of \cite{CVC} $g_n = 0, c_n(\gamma_n) = 0$, $\xi_m = M$ and $m = p+1$) using Lemma~\ref{lem:G1G2} in the present paper instead of Lemma C.3 in \cite{CVC} leads to~\eqref{vcc1_beta} and the same arguments show that with probability tending to one,
\begin{align*}
\sup_i\sup_{\eta \in \T} \|R_{iT}^{(1)}\| \leq \chi_T((\log T)^{1/2}/T^{1/2}) = O\Big(\frac{(\log T)^{5/4}}{T^{3/4}}\Big),
\end{align*} 
so \eqref{vcc2_beta} holds. 

%More precisely, we have $\sup_i \sup_{\eta \in \T}\|R_{iT}^{(2)}(\eta)\| \lesssim \|\widehat \gamma_i(\tau) - \gamma_i(\tau)\|^2 + O_p\Big(\frac{1}{T}\Big) = O_p\Big(\frac{\log T}{T}\Big)$ since $\|\widehat \gamma_i(\tau) - \gamma_i(\tau)\| = O_p\Big(\sqrt{\frac{\log T}{T}}\Big)$ by the first result in Lemma~\ref{lem:G1G2}. By the second result in Lemma~\ref{lem:G1G2}, we have $\sup_i \sup_{\eta \in \T} \|R_{iT}^{(1)}\| = O_p\Big(T^{-1/2}(\frac{\log T}{T})^{1/4} \log T + (\log T)^2\Big) = O_p\Big(\frac{(\log T)^{5/4}}{T^{3/4}}\Big).$

Moreover, \eqref{vcc4_beta} follows from the definition of $r_{i,2}(\eta)$ and assumption (A1), while \eqref{vcc5_beta} can be proved by the same arguments as \eqref{vcc5} (again using Lemma~\ref{lem:G1G2} in the present paper instead of Lemma C.3 in \cite{CVC}). 

%\newpage

It remains to prove~\eqref{vcc3_beta}. For $t = 1,...,T$ define $J_t := \{1\leq s\leq T: |s-t|\leq L \log T\}$ with a constant $L$ to be determined later. Let
\begin{equation*}
\widehat{\bm{\gamma}}_{i}^{(-J_s)}(\eta) := \argmin_{\bm{\gamma} \in \mathbb{R}^{p+1}} \sum_{t=1, t \notin J_s}^{T} \rho_{\eta}( Y_{it}- \Z_{it}\tr \bm{\gamma}).
\end{equation*}
Similarly as before, we have the decomposition
\begin{align*}
\widehat{\bm{\gamma}}_{i}^{(-J_s)}(\eta) - \bm{\gamma}_{i0}(\eta) =  -\frac{1}{T - |J_s|}B_i^{-1}\sum_{t \notin J_s} \Z_{it}(\1(Y_{it} \leq q_{i,\eta}(\Z_{it})) - \eta) + r_{i,1}^{(-J_s)}(\eta) + r_{i,2}^{(-J_s)}(\eta) + r_{i,3}^{(-J_s)}(\eta),
\end{align*}
where
\begin{align*}
r_{i,1}^{(-J_s)}(\eta) & := \frac{1}{T-|J_s|}B_i^{-1}\sum_{t \notin J_s} \Z_{it}(\1(Y_{it} \leq \Z_{it}^\top \widehat{\bm{\gamma}}^{(-J_s)}_i(\eta)) - \eta),
\\
r_{i,3}^{(-J_s)}(\eta) & :=  - B_i^{-1}\Big[ \int z[F_{Y|Z}(z^\top \widehat{\bm{\gamma}}^{(-J_s)}_{i}(\eta) \mid z) - F_{Y|Z}(z^\top {\bm{\gamma}}_{i0}(\eta)\mid z)] dP^{\Z_{i1}}(z) - B_i (\widehat{\bm{\gamma}}^{(-J_s)}_{i}(\eta) - \bm{\gamma}_{i0}(\eta))\Big],
\end{align*}
and
\begin{align*}
r_{i,2}^{(-J_s)}(\eta) := &  - \frac{1}{T-|J_s|}B_i^{-1} \sum_{t \notin J_s} \Big\{\Z_{it}\Big( \1(Y_{it} \leq \Z_{it}^\top \widehat{\bm{\gamma}}_{i}^{(-J_s)}(\eta)) -\1(Y_{it} \leq \Z_{it}^\top \bm{\gamma}_{i0}(\eta)) \Big) 
\\
& - \int z[F_{Y|Z}(z^\top \widehat{\bm{\gamma}}_{i}^{(-J_s)}(\eta) \mid z) - F_{Y|Z}(z^\top {\bm{\gamma}}_{i0}(\eta)\mid z)] dP^{\Z_{i1}}(z)\Big\}.
\end{align*}
By an application of Lemma C.2 in \cite{CVC} (note that the $g_n$ appearing in that Lemma is zero in our setting and set $t=2$ in that Lemma) we find that for constants $c_1 > 0,c_2 > 0$ independent of $i,T,n,J_t$
\begin{equation*}
\Big\{\sup_{\eta \in \mathcal{T}}\|\widehat{\bm{\gamma}}_{i}^{(-J_t)}(\eta) - \bm{\gamma}_{i0}(\eta)\| \leq c_1 s_{n,i}(t) \Big\} \supset \Big\{ s_{n,i}(t) \leq c_2 \Big\},
\end{equation*}
where
\begin{align*}
s_{n,i}(s) & := \sup_{\eta \in \mathcal{T}}\Big\| \frac{1}{T-|J_s|}\sum_{t \notin J_s} \Z_{it}(\1(Y_{it} \leq \Z_{it}^\top \bm{\gamma}_{i0}(\eta)) - \eta) \Big\|
\\
& \leq \sup_{\eta \in \mathcal{T}}\Big\| \frac{1}{T}\sum_{t =1}^T \Z_{it}(\1(Y_{it} \leq \Z_{it}^\top \bm{\gamma}_{i0}(\eta)) - \eta) \Big\| + \frac{C L \log T}{T},
\end{align*}
for a constant $C$ independent of $s,T,i$ and the inequality holds almost surely. Define the events 
\begin{equation*}
\Omega^{(1)}_{i,T}(\kappa) := \Big\{ \sup_{\eta \in \mathcal{T}}\Big\| \frac{1}{T}\sum_{t =1}^T \Z_{it}(\1(Y_{it} \leq \Z_{it}^\top \bm{\gamma}_{i0}(\eta)) - \eta) \Big\| \leq 2 c_0 \kappa^{1/2} \Big( \frac{\log T}{T}\Big)^{1/2} \Big\}.
\end{equation*}
An application of Lemma~\ref{lem:G1G2} shows that $\sup_{i} \P( \Omega^{(1)}_{i,T}(\kappa)) \geq 1 - T^{-\kappa}$. Now by properties of linear optimization (see for instance Lemma 34 on page 106 in \cite{BCCF}) we have under (A1)--(A3) 
\begin{equation*}
\sup_{s,\eta\in \mathcal{T}} \|r_{i,1}^{(-J_s)}(\eta)\| \leq \frac{\|B_i^{-1}\|M(p+1)}{T - |J_s|} \leq \frac{C_1}{T}.
\end{equation*}
Moreover, by Lemma C.1 of \cite{CVC} Assumptions (A1)--(A3) imply the existence of a constant $C_2$ independent of $i,s,T$ such that
\begin{equation*}
\sup_{\eta\in \mathcal{T}} \|r_{i,3}^{(-J_s)}(\eta)\| \leq C_2 \sup_{\eta\in \mathcal{T}} \|\widehat{\bm{\gamma}}_{i}^{(-J_s)}(\eta) - \bm{\gamma}_{i0}(\eta)\|^2.
\end{equation*}
Taken together, this shows that on $\Omega^{(1)}_{i,T}(\kappa)$ we have for $T$ large enough
\begin{equation}\label{eq:help1}
\sup_{\eta\in \mathcal{T}} \sup_t \Big\| r_{i,1}^{(-J_t)}(\eta) + r_{i,3}^{(-J_t)}(\eta) \Big\| \leq \kappa C_3 \frac{\log T}{T}, \quad \sup_{\eta\in \mathcal{T}} \Big\| r_{i,1}(\eta) + r_{i,3}(\eta) \Big\| \leq \kappa C_3 \frac{\log T}{T},
\end{equation}
where the second bound follows by similar arguments as the first one. Next consider the events
\begin{equation*}
\Omega^{(2)}_{i,T}(\kappa) := \Big\{ \sup_{0\leq \delta\leq 1}  \sup_{\|b_1-b_2\|\leq \delta}\sup_s \frac{\|\frac{1}{T-|J_s|}\sum_{t \notin J_s} W_{it}(b_1,b_2) - \E[W_{i1}(b_1,b_2)]\|}{\chi_T(\delta)} \leq 2 c_0 \kappa^2\Big\},
\end{equation*}
where
\begin{equation*}
W_{it}(b_1,b_2) := \Z_{it} (\1\{Y_{it}\leq b_1^\top \Z_{it}\} - \1\{Y_{it}\leq b_2^\top \Z_{it}\}).
\end{equation*}
Under (A1)--(A3) there exists a constant $C_5$ independent of $i,s,T,\delta$ such that for $T$ large enough 
\begin{align*}
& \sup_{\|b_1-b_2\|\leq \delta}\sup_s\Big\|\frac{1}{T-|J_s|}\sum_{t \notin J_s} W_{it}(b_1,b_2) - \E[W_{i1}(b_1,b_2)]\Big\|
\\
& \leq \sup_{\|b_1-b_2\|\leq \delta}\Big\|\frac{1}{T}\sum_{t=1}^T W_{it}(b_1,b_2) - \E[W_{i1}(b_1,b_2)]\Big\| + \frac{C_5 L \log T}{T},
\end{align*}
almost surely, and thus Lemma~\ref{lem:G1G2} implies that for $T$ large enough 
\begin{equation*}
\inf_{i} \P( \Omega^{(2)}_{i,T}(\kappa)) \geq \inf_{i} \P\Big(\sup_{0\leq \delta\leq 1}\frac{\|\mathbb{P}_{i,T} - \mathbb{P}_i\|_{\mathcal{G}_2(\delta)}}{\chi_T(\delta)} \leq c_0 \kappa^2 \Big) \geq 1-T^{-\kappa}.
\end{equation*}
Let $\delta_{i,T} := \sup_{\eta \in \mathcal{T}} \sup_t \|\widehat{\bm{\gamma}}_{i}^{(-J_t)}(\eta) - \widehat{\bm{\gamma}}_{i}(\eta)\|$. Now observe that by \eqref{eq:help1} and direct computations using (A1)--(A3)
\begin{align*}
&\sup_{\eta \in \mathcal{T}} \sup_s \|\widehat{\bm{\gamma}}_{i}^{(-J_s)}(\eta) - \widehat{\bm{\gamma}}_{i}(\eta)\|  = \sup_{\eta \in \mathcal{T}} \sup_s \|\widehat{\bm{\gamma}}_{i}^{(-J_s)}(\eta)-\bm{\gamma}_{i0}(\eta) -( \widehat{\bm{\gamma}}_{i}(\eta)-\bm{\gamma}_{i0}(\eta))\|
\\
& \leq \sup_{\eta \in \mathcal{T}} \sup_s \Big \| \frac{1}{T} B_i^{-1} \sum_{t=1}^{T} \Z_{it}(\1(Y_{it} \leq q_{i,\eta}(\Z_{it}))-\eta) - \frac{1}{T - |J_s|}B_i^{-1}\sum_{t\notin J_s}\Z_{it}(\1 (Y_{it} \leq q_{i,\eta}(\Z_{it}))-\eta)\Big\| 
\\
&\quad\quad  +\sup_{\eta \in \mathcal{T}} \sup_s  \|r_{i,1}^{(-J_s)}(\eta)+r_{i,3}^{(-J_s)}(\eta)-r_{i,1}(\eta)-r_{i,3}(\eta)\| +\sup_{\eta \in \mathcal{T}} \sup_s \|r_{i,2}^{(-J_s)}(\eta)-r_{i,2}(\eta)\|
\\
& \leq \Big(\frac{C_{6}L\log T}{T}\Big) +\sup_{\eta \in \mathcal{T}} \sup_s \|r_{i,2}^{(-J_s)}(\eta)-r_{i,2}(\eta)\|.
\end{align*}
Direct computations show that on $\Omega^{(1)}_{i,T}(\kappa) \cap \Omega^{(2)}_{i,T}(\kappa)$ (note that on $\Omega^{(1)}_{i,T}(\kappa)$ we have $\delta_{i,T} \leq 1$ for $T$ sufficiently large), %the last piece in the last line of the above decomposition can be bounded as 
\begin{align*}
 \sup_{t} \sup_{\eta \in \mathcal{T}} \|r_{i,2}^{(-J_t)}(\eta) - r_{i,2}(\eta)\| & \leq \frac{C_7L\log T}{T } +  \|\mathbb{P}_{i,T} - \mathbb{P}_{i}\|_{\mathcal{G}_2(\delta_{i,T})}
\\ 
& \leq \frac{C_7L\log T}{T } + c_0 \kappa^2(T^{-1/2}\delta_{i,T}^{1/2} \log T + T^{-1}(\log T)^2),
\end{align*}
where the second inequality follows by Lemma~\ref{lem:G1G2}. Combining the results obtained so far we find that on $\Omega^{(1)}_{i,T}(\kappa) \cap \Omega^{(2)}_{i,T}(\kappa)$ we have for $T$ sufficiently large
\begin{equation*}
\delta_{i,T} \leq C_8\kappa^2 \Big(T^{-1}(\log T)^2 + T^{-1/2}\delta_{i,T}^{1/2} \log T\Big).
\end{equation*}
A simple calculation shows that any non-negative $\delta_{i,T}$ satisfying this inequality automatically satisfies $\delta_{i,T} \leq 4 C_8^2 \kappa^4 T^{-1}(\log T)^2$. Applying the union bound over $i$ this shows the existence of a constant $C_9$ such that
\begin{equation}\label{eq:bminusJ}
\P\Big( \sup_{i} \sup_{t=1,...,T} \sup_{\eta \in \mathcal{T}} \|\widehat{\bm{\gamma}}_{i}^{(-J_t)}(\eta) - \bm{\gamma}_{i0}(\eta)\| \geq C_9 \kappa^4 \frac{(\log T)^2}{T}\Big) \leq 2 n T^{-\kappa}.
\end{equation}
To finalize the proof, consider the following decomposition
\begin{equation*}
r_{i,2}(\eta) = \frac{1}{T}B_i^{-1} \sum_{t=1}^{T} \Z_{it}(\1(Y_{it} \leq \Z_{it}^\top {\bm{\gamma}}_{i0}(\eta)) -\eta) + r_{i,2}^{(1)}(\eta) + r_{i,2}^{(2)}(\eta),
\end{equation*}
where 
\begin{align*}
r_{i,2}^{(1)}(\eta) &:= - \frac{1}{T}B_i^{-1} \sum_{t=1}^{T} \Big\{\Z_{it}(\1(Y_{it} \leq \Z_{it}^\top \widehat{\bm{\gamma}}_{i}^{(-J_t)}(\eta)) -\eta) - \int z[F_{Y|Z}(z^\top \widehat{\bm{\gamma}}_{i}^{(-J_t)}(\eta) \mid z) - \eta] dP^{\Z_{i1}}(z) \Big\},
\\
r_{i,2}^{(2)}(\eta) &:= \frac{1}{T}B_i^{-1} \sum_{t=1}^{T} \Big\{\Z_{it}\Big( \1(Y_{it} \leq \Z_{it}^\top \widehat{\bm{\gamma}}_{i}^{(-J_t)}(\eta)) -\1(Y_{it} \leq \Z_{it}^\top \widehat{\bm{\gamma}}_{i}(\eta)) \Big) 
\\
& \quad \quad \quad \quad\quad \quad\quad \quad- \int z[F_{Y|Z}(z^\top \widehat{\bm{\gamma}}_{i}^{(-J_t)}(\eta) \mid z) - F_{Y|Z}(z^\top \widehat{\bm{\gamma}}_{i}(\eta)\mid z)] dP^{\Z_{i1}}(z)\Big\}.
\end{align*} 
Clearly $\E[r_{i,2}(\eta)] = \E[r_{i,2}^{(1)}(\eta)+r_{i,2}^{(2)}(\eta)]$. Now given~\eqref{eq:bminusJ}, exactly the same arguments as in the proof of Theorem S.6.1 of \cite{VCC} (see page 43 of the supplementary material of \cite{VCC}) yield (note that the $m, \xi_m$ of \cite{VCC} are fixed constants in our setting and use~\eqref{eq:bminusJ} instead of Lemma S.6.3 in \cite{VCC}) 
\begin{equation*}
\sup_i \|\E[r_{i,2}^{(2)}(\eta)]\| = O\Big(\frac{(\log T)^2}{T}\Big).
\end{equation*}
To bound $\E[r_{i,2}^{(1)}(\eta)]$, note that for any $t = 1,...,T$ the quantity $\widehat{\bm{\gamma}}_i^{(-J_t)}(\eta))$ does not contain elements of $\{(Y_{it},\Z_{it}): t\in J_t\}$. Hence, by a property of $\beta$-mixing (see for instance Lemma 2.6 in \cite{Yu1994}) we find that 
\begin{equation*}
\sup_{i,t} \Big\|\E\Big[ \Z_{it}(\1(Y_{it} \leq \Z_{it}^\top \widehat{\bm{\gamma}}_i^{(-J_t)}(\eta)) -\eta) - \int z[F_{Y|Z}(z^\top \widehat{\bm{\gamma}}_{i}^{(-J_t)}(\eta) \mid z) - \eta] dP^{\Z_{i1}}(z) \Big]\Big\| \leq M\beta(L\log T).
\end{equation*}
Under assumption (D1) we can choose the constant $L$ large enough to ensure that $M\beta(L\log T) \leq (\log T)^2/T$, and thus the proof of~\eqref{vcc3_beta} is complete. 

\end{proof}

%\newpage

\subsection{Proofs of Theorem~\ref{th1FE} and Theorem~\ref{th2FE}}

Since the proofs of both results are very similar, we will present both proofs together. Before proceeding, we note the following fact that will be used on several occasions:
\[
\mbox{\textbf{fact 1:}} \quad \quad 0 < \delta < a + b\delta^{1/2} \Rightarrow \delta \leq 4\max(a,b^2).
\]
More details will be given in the dependent case and we will point out necessary adjustments in the i.i.d. case along the way. Recall the definition of $g_i, \Gamma_n$ in~\eqref{eq:def-gi},~\eqref{eq:def-gamman} and define
\begin{align*}
\H_{ni}^{(1)}(\alpha,\zb) &:= \frac{1}{T}\sum_{t=1}^T (\tau - \1\{Y_{it} \leq \X_{it}^\top \zb+ \alpha\}), \quad H_{ni}^{(1)}(\alpha,\zb) := \E[\H_{ni}^{(1)}(\alpha,\zb)]
\\
\H_{ni}^{(2)}(\alpha,\zb) &:= \frac{1}{T}\sum_{t=1}^T \X_{it}(\tau - \1\{Y_{it} \leq \X_{it}^\top \zb+ \alpha\}), \quad H_{ni}^{(2)}(\alpha,\zb) := \E[\H_{ni}^{(2)}(\alpha,\zb)]
\\
\H_{ni}^{(3)}(\alpha,\zb) &:= -g_i\H_{ni}^{(1)}(\alpha,\zb) + \H_{ni}^{(2)}(\alpha,\zb), \quad H_{ni}^{(3)}(\alpha,\zb) := \E[\H_{ni}^{(3)}(\alpha,\zb)].
\end{align*}

Throughout this section we will drop the index $FE$ from the notation $\widehat\zb_{FE}$ since there is no MD-QR estimator in this section. Define
\begin{equation}\label{eq:alphatilde}
\widetilde\alpha_i := \argmin_{a \in R} \sum_{t=1}^T \rho_\tau(Y_{it} - \X_{it}^\top \zb_0 - a).
\end{equation}
It can further be shown by standard arguments that there exists a constant $c$ such that both the dependent and i.i.d. setting we have for all $\kappa>1$
\begin{equation}\label{eq:alphatilde-tail}
\sup_i P\Big( |\widetilde\alpha_i-\alpha_{i0}| > c\kappa^{1/2}T^{-1/2}\log T\Big) = O(T^{-\kappa}).
\end{equation}

Next we state a useful technical Lemma which will be used in the following proof, a proof of this Lemma is provided at the end of this section.

\begin{lemma} \label{lem:checkalpha}
Let assumptions (A0)-(A3) hold. We have
\[
\max_{i=1,...,n} |\widetilde\alpha_i - \widehat\alpha_i| = O_P\Big(r_{n,T}(\|\widehat\zb-\zb_0\|)\Big)
\]
%Define
%\[
%\check\alpha_i(\zb) := \argmin_{a \in R} \sum_{t=1}^T \rho_\tau(Y_{it} - \X_{it}^\top \zb - a).
%\]
%Then we have for $\zb-\zb_0$ sufficiently small
%\[
%\max_{i=1,...,n} |\check\alpha_i(\zb_0) - \check\alpha_i(\zb)| = O_P\Big(r_{n,T}(\|\zb-\zb_0\|)\Big)
%\]
where 
\[
r_{n,T}(\delta) = \delta + T^{-1}\log T%\delta + T^{-1/2}\delta^{1/2}(\log T)^{1/2} + T^{-1}\log T
\] 
under (I) and 
\[
r_{n,T}(\delta) = \delta + T^{-1}(\log T)^2 %+ T^{-1/2}\delta^{1/2}\log T + T^{-1}(\log T)^2
\] 
under (D1) and (D2).
\end{lemma}

We are ready to prove Theorem~\ref{th1FE} and Theorem~\ref{th2FE}. Begin by observing that the results in~\cite{KatoGalvaoMontes-Rojas12} imply that under the assumptions of both theorems we have
\begin{equation}\label{eq:ratealpha}
\sup_{i} |\widehat\alpha_i - \alpha_{i0}| = O_P(T^{-1/2}(\log T)^{1/2})
\end{equation}
and $\|\widehat\zb - \zb_0\| = O_P(T^{-1/2})$.

Following the arguments that lead to the expansion (A.5) in~\cite{KatoGalvaoMontes-Rojas12}, we obtain the following representation (recall the definition of $\Gamma_n$ in~\eqref{eq:def-gamman}; note that the following representation holds in the i.i.d. as well as in the dependent case since all $O_P, o_P$ terms result from Taylor expansions and not from empirical process arguments, note that the following representation makes use of~\eqref{eq:ratealpha})
\begin{align*}
&\widehat\zb - \zb_0 + o_P(\|\widehat\zb - \zb_0\|)  
\\
=~& \Gamma_n^{-1}\frac{1}{n}\sum_{i=1}^n \H_{ni}^{(3)}(\alpha_{i0},\zb_0)%\Big\{\gamma_i\H_{ni}^{(1)}(\alpha_{i0},\zb_0) + \H_{ni}^{(2)}(\alpha_{i0},\zb_0)\Big\}
%\\
+ \Gamma_n^{-1} \frac{1}{n} \sum_{i=1}^n \Big[\H_{ni}^{(3)}(\widehat\alpha_i,\widehat\zb) - \H_{ni}^{(3)}(\alpha_{i0},\zb_0) - \Big\{H_{ni}^{(3)}(\widehat\alpha_i,\widehat\zb) - H_{ni}^{(3)}(\alpha_{i0},\zb_0)\Big\}\Big]
\\
&~+ O_P(T^{-1}\log T).
\end{align*}
Next we further expand the middle term. In what follows we will discuss the proof under (D1) in detail and point out important differences in the i.i.d. case where necessary. 

%Take $\delta_{n,T}$ such that $\|\widehat\zb - \zb_0\| = O_P(\delta_n)$. Then by Lemma~\ref{lem:checkalpha} we have
%\[
%\max_i |\widehat\alpha_i - \widetilde \alpha_i| = O_P(\Delta_{n,T})
%\]
%where we defined $\Delta_{n,T} := (\log T)^2 T^{-1} + \delta_{n,T}$.

Observe the decomposition
\begin{align*}
&\frac{1}{n}\sum_{i=1}^n \H_{ni}^{(3)}(\widehat\alpha_i,\widehat\zb) - \H_{ni}^{(3)}(\alpha_{i0},\zb_0) - \Big\{H_{ni}^{(3)}(\widehat\alpha_i,\widehat\zb) - H_{ni}^{(3)}(\alpha_{i0},\zb_0)\Big\}
\\
= & \frac{1}{n}\sum_{i=1}^n \H_{ni}^{(3)}(\widehat\alpha_i,\widehat\zb) - \H_{ni}^{(3)}(\widetilde\alpha_i,\zb_0) - \Big\{H_{ni}^{(3)}(\widehat\alpha_i,\widehat\zb) - H_{ni}^{(3)}(\widetilde\alpha_i,\zb_0)\Big\}
\\
& + \frac{1}{n}\sum_{i=1}^n \H_{ni}^{(3)}(\widetilde\alpha_i,\zb_0) - \H_{ni}^{(3)}(\alpha_{i0},\zb_0) - \Big\{H_{ni}^{(3)}(\widetilde\alpha_i,\zb_0) - H_{ni}^{(3)}(\alpha_{i0},\zb_0)\Big\}.
\end{align*}

By similar arguments as is the proof of~\eqref{vcc3_beta} (see Lemma~\ref{VCClemma_beta}) under (D1) we obtain 
\[
\Big\|\E\Big[\frac{1}{n}\sum_{i=1}^n \H_{ni}^{(3)}(\widetilde\alpha_i,\zb_0) - \H_{ni}^{(3)}(\alpha_{i0},\zb_0) - \Big\{H_{ni}^{(3)}(\widetilde\alpha_i,\zb_0) - H_{ni}^{(3)}(\alpha_{i0},\zb_0)\Big\} \Big]\Big\| = O((\log T)^2/T)
\]
while in the i.i.d. case we have by similar arguments as in the proof of Theorem S.6.1 of~\cite{VCC} (see also the discussion in the beginning of the proof of Lemma~\ref{VCClemma})
\[
\Big\|\E\Big[\frac{1}{n}\sum_{i=1}^n \H_{ni}^{(3)}(\widetilde\alpha_i,\zb_0) - \H_{ni}^{(3)}(\alpha_{i0},\zb_0) - \Big\{H_{ni}^{(3)}(\widetilde\alpha_i,\zb_0) - H_{ni}^{(3)}(\alpha_{i0},\zb_0)\Big\} \Big]\Big\| = O((\log T)/T).
\]
Moreover we note that the vectors 
\[
\xi_{i,n,T} := \H_{ni}^{(3)}(\widetilde\alpha_i,\zb_0) - \H_{ni}^{(3)}(\alpha_{i0},\zb_0) - \Big\{H_{ni}^{(3)}(\widetilde\alpha_i,\zb_0) - H_{ni}^{(3)}(\alpha_{i0},\zb_0)\Big\}
\] 
are independent across $i$, and that 
\begin{align*}
%&\Big| \H_{ni}^{(3)}(\widetilde\alpha_i,\zb_0) - \H_{ni}^{(3)}(\alpha_{i0},\zb_0) - \Big\{H_{ni}^{(3)}(\widetilde\alpha_i,\zb_0) - H_{ni}^{(3)}(\alpha_{i0},\zb_0)\Big\}\Big|
\|\xi_{i,n,T}\| \leq (\|g_i\|+1) \|\mathbb{P}_{i,T} - \mathbb{P}_i\|_{\mathcal{G}_2(|\widetilde\alpha_i-\alpha_{i0}|)}
\end{align*}
where $\mathcal{G}_2(\delta)$ is defined in Lemma~\ref{lem:G1G2}. Recalling~\eqref{eq:alphatilde-tail} and applying~\eqref{G2_beta} we find that for the constant $c$ from~\eqref{eq:alphatilde-tail}
\begin{align*}
&P\Big( \sup_i\|\xi_{i,n,T}\| > T^{-2/3} \Big) 
\\
\leq &P\Big( \sup_i |\widetilde\alpha_i-\alpha_{i0}| > 2cT^{-1/2}\log T \Big) + P\Big( (\|g_i\|+1) \|\mathbb{P}_{i,T} - \mathbb{P}_i\|_{\mathcal{G}_2(2cT^{-1/2}\log T)} > T^{-2/3} \Big)
\\
= & O(T^{-2}) 
\end{align*}
where we used~\eqref{eq:alphatilde-tail} to bound the first piece in the sum and~\eqref{G2_beta} in combination with $\chi_T(2cT^{-1/2}\log T) = o(T^{-2/3})$ for the second piece.
%
 %and that for a suitable constant $c$ we have $P(\sup_i |\widetilde\alpha_i-\alpha_{i0}| > cT^{-1/2}\log T) = o(T^{-2})$ \textbf{think of adding details}, which after some simple calculations implies
%\[
%P\Big(\sup_i \Big| \H_{ni}^{(3)}(\widetilde\alpha_i,\zb_0) - \H_{ni}^{(3)}(\alpha_{i0},\zb_0) - \Big\{H_{ni}^{(3)}(\widetilde\alpha_i,\zb_0) - H_{ni}^{(3)}(\alpha_{i0},\zb_0)\Big\}\Big| > T^{-2/3} \Big) = o(T^{-2}).
%\]
Since also almost surely
\[
\sup_i \Big\| \H_{ni}^{(3)}(\widetilde\alpha_i,\zb_0) - \H_{ni}^{(3)}(\alpha_{i0},\zb_0) - \Big\{H_{ni}^{(3)}(\widetilde\alpha_i,\zb_0) - H_{ni}^{(3)}(\alpha_{i0},\zb_0)\Big\}\Big\| \leq K 
\]
for a $K < \infty$ this yields, denoting by $\xi_{i,n,T}^{(j)}$ the j'th entry of the vector $\xi_{i,n,T}$
\begin{align*}
\sup_{i,j} Var(\xi_{i,n,T}^{(j)}) &\leq \sup_i \E[(\xi_{i,n,T}^{(j)})^2] = \sup_i \E[(\xi_{i,n,T}^{(j)})^2(\1\{|\xi_{i,n,T}^{(j)}|> T^{-2/3}\} + \1\{|\xi_{i,n,T}^{(j)}|\leq T^{-2/3}\} )]
\\
&\leq K^2 \sup_i P(|\xi_{i,n,T}^{(j)}|> T^{-2/3}) + T^{-4/3} = O(T^{-4/3}).
\end{align*}
Hence by independence across $i$
\[
\frac{1}{n}\sum_{i=1}^n \xi_{i,n,T} %\H_{ni}^{(3)}(\widetilde\alpha_i,\zb_0) - \H_{ni}^{(3)}(\alpha_{i0},\zb_0) - \Big\{H_{ni}^{(3)}(\widetilde\alpha_i,\zb_0) - H_{ni}^{(3)}(\alpha_{i0},\zb_0)\Big\} 
= O_P\Big(\sup_{i,j} \Big(|\E[\xi_{i,n,T}^{(j)}]| + \sqrt{Var(\xi_{i,n,T}^{(j)})}/\sqrt{n}\Big)\Big) = O_P(n^{-1/2}T^{-2/3} + (\log T)^2/T)
\]
under (D1), (C2). In the i.i.d. case similar arguments show a similar bound with $O_P(n^{-1/2}T^{-2/3} +T^{-1}\log T)$. Combining all results so far we have
\begin{align*}
\widehat\zb-\zb_0 =& \Gamma_n^{-1}\frac{1}{n}\sum_{i=1}^n \H_{ni}^{(3)}(\alpha_{i0},\zb_0)
\\
&+ \Gamma_n^{-1}\Big[\frac{1}{n}\sum_{i=1}^n \H_{ni}^{(3)}(\widehat\alpha_i,\widehat\zb) - \H_{ni}^{(3)}(\widetilde\alpha_i,\zb_0) - \Big\{H_{ni}^{(3)}(\widehat\alpha_i,\widehat\zb) - H_{ni}^{(3)}(\widetilde\alpha_i,\zb_0)\Big\}\Big]
\\
&+ O_P(T^{-1}(\log T)^2) + o_P(\|\widehat\zb-\zb_0\|) + o_P((nT)^{-1/2})
\end{align*}
under (D1) and the remainder is $O_P(T^{-1}\log T) + o_P(\|\widehat\zb-\zb_0\|) + o_P((nT)^{-1/2})$ in the i.i.d. case.

Next, note that for $\mathcal{G}_2(\delta)$ from Lemma~\ref{lem:G1G2}
\begin{align*}
&\Big| \H_{ni}^{(3)}(\widehat\alpha_i,\widehat\zb) - \H_{ni}^{(3)}(\widetilde\alpha_i,\zb_0) - \Big\{H_{ni}^{(3)}(\widehat\alpha_i,\widehat\zb) - H_{ni}^{(3)}(\widetilde\alpha_i,\zb_0)\Big\}\Big|
\\
\leq& (\|g_i\|+1) \|\mathbb{P}_{i,T} - \mathbb{P}_i\|_{\mathcal{G}_2(\|\widehat \zb-\zb_0\|+\sup_i |\widetilde\alpha_i-\widehat\alpha_i|)}
\end{align*}
Combining this with~\eqref{G2_beta} under (D1) %and with Lemma S.6.4 in Volgushev, Chao, and Cheng (2018) in the i.i.d. case 
it follows that
\begin{multline*}
\sup_i \Big|  \H_{ni}^{(3)}(\widehat\alpha_i,\widehat\zb) - \H_{ni}^{(3)}(\widetilde\alpha_i,\zb_0) - \Big\{H_{ni}^{(3)}(\widehat\alpha_i,\widehat\zb) - H_{ni}^{(3)}(\widetilde\alpha_i,\zb_0)\Big\}\Big|
\\ 
= O_P\Big(\{\|\widehat\zb-\zb_0\|+\sup_i |\widetilde\alpha_i-\widehat\alpha_i|\}^{1/2}T^{-1/2}\log T + T^{-1}(\log T)^2\Big),
\end{multline*}
while an application of Lemma~\ref{lem:checkalpha} further shows %(note that in the notation of that Lemma $\widetilde\alpha_i = \check\alpha_i(\zb_0)$ and $\widehat \alpha_i = \check\alpha_i(\widehat \zb)$)
\begin{multline*}
O_P\Big(\{\|\widehat\zb-\zb_0\|+\sup_i |\widetilde\alpha_i-\widehat\alpha_i|\}^{1/2}T^{-1/2}\log T + T^{-1}(\log T)^2\Big)
\\ 
= O_P\Big(\{\|\widehat\zb-\zb_0\|+T^{-1}(\log T)^2\}^{1/2}T^{-1/2}\log T + T^{-1}(\log T)^2\Big)
\\
= O_P\Big(\|\widehat\zb-\zb_0\|^{1/2}T^{-1/2}\log T + T^{-1}(\log T)^2\Big).
\end{multline*}
Finally note that $\sum_{i=1}^n H_{ni}^{(3)}(\alpha_{i0},\zb_0) = 0$, and by the results in \cite{KatoGalvaoMontes-Rojas12} we have 
\[
\frac{1}{n}\sum_{i=1}^n \H_{ni}^{(3)}(\alpha_{i0},\zb_0) = \frac{1}{n}\sum_{i=1}^n \Big\{\H_{ni}^{(3)}(\alpha_{i0},\zb_0) - H_{ni}^{(3)}(\alpha_{i0},\zb_0) \Big\} = O_P((nT)^{-1/2}).
\]
Combining all bounds we obtained so far thus yields
\begin{align}
\widehat\zb-\zb_0 =& \Gamma_n^{-1}\frac{1}{n}\sum_{i=1}^n \H_{ni}^{(3)}(\alpha_{i0},\zb_0) + O_P\Big(T^{-1}(\log T)^2\Big) + \|\widehat\zb-\zb_0\|^{1/2}O_P(T^{-1/2}\log T) \nonumber
\\
& \quad \quad + o_P(\|\widehat\zb-\zb_0\|+(nT)^{-1/2}) \label{eq:expbeta1}
\\
=& O_P\Big((nT)^{-1/2} + T^{-1}(\log T)^2\Big) + \|\widehat\zb-\zb_0\|^{1/2}O_P(T^{-1/2}\log T) + o_P(\|\widehat\zb-\zb_0\|). \nonumber
\end{align}
Recalling \textbf{fact 1} this implies
\[
\|\widehat\zb-\zb_0 \| =  O_P\Big((nT)^{-1/2} + T^{-1}(\log T)^2\Big).
\]
Plugging this into~\eqref{eq:expbeta1} yields
\begin{align*}
\widehat\zb-\zb_0 =& \Gamma_n^{-1}\frac{1}{n}\sum_{i=1}^n \H_{ni}^{(3)}(\alpha_{i0},\zb_0) + O_P\Big(\{T^{-1/2}\log T + (nT)^{-1/4}\}T^{-1/2}\log T\Big).
\end{align*}
Similar arguments in the i.i.d. case show that
\begin{align*}
\widehat\zb-\zb_0 =& \Gamma_n^{-1}\frac{1}{n}\sum_{i=1}^n \H_{ni}^{(3)}(\alpha_{i0},\zb_0)  + O_P\Big(\{T^{-1/2}(\log T)^{1/2} + (nT)^{-1/4}\}T^{-1/2}(\log T)^{1/2}\Big).
\end{align*}
The proof is complete. \hfill $\Box$
%By Proposition B.1 in~\cite{KatoGalvaoMontes-Rojas12} \textbf{and bounds on certain covering numbers} we further obtain
%\[
%\frac{1}{n}\sum_{i=1}^n \H_{ni}^{(3)}(\widehat\alpha_i,\widehat\zb) - \H_{ni}^{(3)}(\widetilde\alpha_i,\zb_0) - \Big\{H_{ni}^{(3)}(\widehat\alpha_i,\widehat\zb) - H_{ni}^{(3)}(\widetilde\alpha_i,\zb_0)\Big\} = O_P\Big(T^{-1/2}(\Delta_{n,T} + \delta_{n,T})^{1/2}\log T \Big).
%\]

\bigskip

%\newpage

\noindent \textbf{Proof of Lemma \ref{lem:checkalpha}}
Define %\textbf{Note that $\widehat \F_{n,T}^{(i)}(y;\zb) = \H_{ni}^{(1)}(y,\zb)$. Remove double notation?}
\[
\widehat \F_{n,T}^{(i)}(y;\zb) := \frac{1}{T}\sum_{t=1}^T \1\{Y_{it} - \X_{it}^\top \zb \leq y\}, \quad F_{n,T}^{(i)}(y;\zb) := \E[\widehat \F_{n,T}^{(i)}(y;\zb)].
\]
Under (D1) we have by Lemma~\ref{lem:G1G2} (in the definition of the function class $\mathcal{G}_2(\delta)$, set $a = (1,0,...,0)^\top \in \mathcal{S}^{p}$, $b_1 = (y,\zb^\top)^{\top}, b_2 = (y,\zb^\top_0)^{\top}$)
\begin{align*}
\sup_{i=1,...,n}\Big|\widehat \F_{n,T}^{(i)}(y;\zb) - \widehat \F_{n,T}^{(i)}(y;\zb_0)\Big| &\leq \sup_{i=1,...,n} \|\mathbb{P}_{i,T} - \mathbb{P}_i\|_{\mathcal{G}_2(\|\zb-\zb_0\|)} + \sup_{i,y}\Big| F_{n,T}^{(i)}(y;\zb) - F_{n,T}^{(i)}(y;\zb_0)\Big|
\\
&\leq \sup_{i=1,...,n} \|\mathbb{P}_{i,T} - \mathbb{P}_i\|_{\mathcal{G}_2(\|\zb-\zb_0\|)} + Mf_{max}\|\zb-\zb_0\|.
%\\
%&= O_P\Big(\chi_T(\|\zb-\zb_0\|) + \|\zb-\zb_0\|\Big)
\end{align*}
Similarly we have (in the definition of the function class $\mathcal{G}_2(\delta)$, set $a = (1,0,...,0)^\top \in \mathcal{S}^{p}$, $b_1 = (y,\zb^\top_0)^{\top}, b_2 = (y',\zb^\top_0)^{\top}$) 
\[
\sup_i \Big|\widehat \F_{n,T}^{(i)}(y;\zb_0) - \widehat \F_{n,T}^{(i)}(y';\zb_0) - \{F_{n,T}^{(i)}(y;\zb_0) - F_{n,T}^{(i)}(y';\zb_0)\}\Big| \leq \sup_i \|\mathbb{P}_{i,T} - \mathbb{P}_i\|_{\mathcal{G}_2(|y-y'|)}.
\]
Finally, note that $\widetilde\alpha_i$ is the empirical $\tau-$quantile of $\widehat \F_{n,T}^{(i)}(\cdot;\zb_0)$, the empirical cdf of $\{Y_{it}-X_{it}^\top\zb_0: t=1,..,T\}$, while $\widehat \alpha_i$ is the empirical $\tau-$quantile of $\widehat \F_{n,T}^{(i)}(\cdot;\widehat\zb)$, the empirical cdf of $\{Y_{it}-X_{it}^\top\widehat\zb: t=1,...,T\}$. Since the data are in general position (in the dependent case this follows from the existence of conditional densities imposed in (D2)) it further follows that both samples $\{Y_{it}-X_{it}^\top\zb_0: t=1,..,T\}, \{Y_{it}-X_{it}^\top\widehat\zb: t=1,...,T\}$ have no two observations with exactly the same value, so that
\begin{align*}
2/T 
&\geq \Big|\widehat \F_{n,T}^{(i)}(\widehat\alpha_i;\widehat\zb) - \widehat \F_{n,T}^{(i)}(\widetilde\alpha_i;\zb_0) \Big|
\\
&\geq \Big|\widehat \F_{n,T}^{(i)}(\widehat\alpha_i;\zb_0) - \widehat \F_{n,T}^{(i)}(\widetilde\alpha_i;\zb_0) \Big| - \|\mathbb{P}_{i,T} - \mathbb{P}_i\|_{\mathcal{G}_2(\|\zb-\zb_0\|)} - Mf_{max}\|\zb-\zb_0\|
\\
&\geq \Big|F_{n,T}^{(i)}(\widehat\alpha_i;\zb_0) - F_{n,T}^{(i)}(\widetilde\alpha_i;\zb_0) \Big| - \|\mathbb{P}_{i,T} - \mathbb{P}_i\|_{\mathcal{G}_2(\|\zb-\zb_0\|)} - Mf_{max}\|\zb-\zb_0\|
\\
& \quad - \|\mathbb{P}_{i,T} - \mathbb{P}_i\|_{\mathcal{G}_2(\sup_i|\widetilde\alpha_i-\widehat\alpha_i|)} 
%\\
%&\geq f_{min}\Big|\check\alpha_i(\zb) - \check\alpha_i(\zb_0)\Big| - O_P\Big(\chi_T(\|\zb-\zb_0\|)+\|\zb-\zb_0\|\Big) - 2S_{n,1} \chi_T(|\check\alpha_i(\zb)-\check\alpha_i(\zb_0)|)
\end{align*}
Next, observe that under (A2) and (A3) there exists $\varepsilon>0$ such that for all $|y-\alpha_{i0}|\vee |y'-\alpha_{i0}| \leq \varepsilon$ we have 
\[
\Big|F_{n,T}^{(i)}(y;\zb_0)-F_{n,T}^{(i)}(y';\zb_0)\Big| \geq f_{min}|y-y'|/2.
\]
Note that by~\eqref{eq:alphatilde-tail} and~\eqref{eq:ratealpha} 
\begin{equation}\label{eq:helpprob}
P(\Omega_{n,T}) := P\Big(\sup_i |\widehat\alpha_i-\alpha_{i0}|\vee |\widetilde\alpha_i-\alpha_{i0}| \leq \varepsilon \Big) \to 1.
\end{equation}
We now first discuss the dependent case. Recall the definition of $\chi_T$ in~\eqref{def:chi} and let 
\[
S_{n,1} := \sup_{0\leq \delta\leq 1}\frac{\|\mathbb{P}_{i,T} - \mathbb{P}_i\|_{\mathcal{G}_2(\delta)}}{\chi_T(\delta)}.
\]
Combining all bounds so far we obtain that on the event $\Omega_{n,T}$ from~\eqref{eq:helpprob}
\begin{align*}
\sup_i |\widehat \alpha_i - \widetilde\alpha_i| 
&\leq \frac{2}{f_{min}} \sup_i \Big|F_{n,T}^{(i)}(\widehat\alpha_i;\zb_0) - F_{n,T}^{(i)}(\widetilde\alpha_i;\zb_0) \Big|
\\
&\leq \frac{2}{f_{min}}\Big( 2/T + Mf_{max}\|\widehat\zb-\zb_0\| + S_{n,1}\{\chi_T(\|\widehat\zb-\zb_0\|) + \chi_T(\sup_i|\widehat\alpha_i-\widetilde\alpha_i|)\}\Big)
\\
&\lesssim S_{n,1}T^{-1/2}(\log T)\{\sup_i|\widehat\alpha_i-\widetilde\alpha_i|\}^{1/2} + (S_{n,1}+1)(\log T)^2/T + (S_{n,1}+1)\|\widehat\zb-\zb_0\|. 
\end{align*}
Applying \textbf{fact 1} with $\delta = \sup_i |\widehat \alpha_i - \widetilde\alpha_i|$ this shows that on $\Omega_{n,T}$
\[
\sup_i |\widehat \alpha_i - \widetilde\alpha_i| \lesssim (S_{n,1}+1)^2\{T^{-1}(\log T)^2+\|\widehat\zb-\zb_0\|\}. 
\]
Since $S_{n,1} = O_P(1)$ by~\eqref{G2_beta} this completes the proof in the dependent case. The proof in the i.i.d. case is exactly the same after replacing $\chi_T$ in $S_{n,1}$ with 
\[
\widetilde\chi_T(\delta) := T^{-1/2}\delta^{1/2}(\log T)^{1/2} + T^{-1}\log T
\] 
and using Lemma S.6.4 in~\cite{VCC} instead of~\eqref{G2_beta}. \hfill $\Box$

\subsection{Proof of Theorem~\ref{th1} and Lemma~\ref{lemma:CovHK}}

We will make use of the following notation: 
\begin{align*}
\psi_{i, \tau} (\Z, Y) &= \Z (\1(Y \leq q_{i,\tau}(\Z))-\tau),
\\
f_{it} &:= \frac{2d_T}{q_{i,\tau+d_T}(\Z_{it})-q_{i,\tau-d_T}(\Z_{it})}, \quad e_{it} := 1/f_{it},
\\
%A_i &= \tau (1-\tau) \E[\Z_{i1}\Z_{i1}^\top], \quad B_i = \E[f_{Y\mid \Z}(q_{i,\tau}(\Z_{i1})\mid \Z_{i1})\Z_{i1}\Z_{i1}^\top],
%\\
B_{iT} &= \frac{1}{T}\sum_{t=1}^{T} f_{it} \Z_{it} \Z_{it}^\top,
\\
V_{iT} &= \E[B_{iT}]^{-1} A_i\E[B_{iT}]^{-1},
\\
a_T &= \frac{(\log T)^{1/2}}{(Td_T)^{1/2}}
\end{align*}
where $q_{i,\eta}(\mathbf{z})$ is the conditional $\eta$ quantile of $Y_{i1}$ given $\Z_{i1} = \mathbf{z}$. Moreover, denote by $W_i$ is the lower $p\times p$ matrix of $V_i$, by $W_{iT}$ is the lower $p\times p$ matrix of $V_{iT}$, and by $\widehat{W}_{iT}$ the  lower $p\times p$ matrix of $\widehat{V}_{iT}$. We begin by stating and proving several technical lemmas that are useful in the proof of Theorem \ref{th1}.

\begin{lemma}
\label{lemma1}
Under Assumptions (I), (A0)--(A2) we have  $\E[B_{iT}] - B_i = o(1)$ and $\|W_{iT}-W_i\| = o(1)$ uniformly in $i$. 
\end{lemma}

\bigskip

%From Assumption \ref{ass0},  uniformly in $i$. 
%\section{Asymptotics for $\widehat \beta_{MD}$}

\bigskip

\begin{lemma}
\label{lemma2}
Let Assumptions (I), (A0)--(A3), (MI) hold. Let $\widehat{e}_{it} := \widehat{f}_{it}^{-1}$, then $\sup_{i,t}\left |\widehat{e}_{it} - e_{it}\right | = O_p(a_T)$ and 
\begin{equation*}
\sup_{i,t}\left |\widehat{f}_{it} - f_{it} -\frac{e_{it} - \widehat{e}_{it}}{e_{it}^2}\right| = O_p(a_T^2).
\end{equation*}
Moreover,
\begin{equation}\label{psiit}
\sup_i \Big\| \frac{1}{T}\sum_{t=1}^{T} \frac{\psi_{i,\tau+d_T}(\Z_{it}, Y_{it}) - \psi_{i,\tau - d_T}(\Z_{it}, Y_{it})}{2d_T} \Big\| = O_p(a_T). 
\end{equation}
\end{lemma}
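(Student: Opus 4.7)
I would tackle the three claims in reverse order, since the third (concentration for the differenced score) is the key engine driving the other two.

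First, to establish~\eqref{psiit}, set $\xi_{it} := \psi_{i,\tau+d_T}(\Z_{it},Y_{it}) - \psi_{i,\tau-d_T}(\Z_{it},Y_{it})$ and note $\E[\xi_{it}]=0$ while $\|\xi_{it}\|\leq 3M$ by (A1). The crucial observation is a reduction in variance: conditional on $\Z_{it}$, the difference of indicators $\1(Y_{it}\leq q_{i,\tau+d_T}(\Z_{it})) - \1(Y_{it}\leq q_{i,\tau-d_T}(\Z_{it}))$ has expectation and second moment both equal to $2d_T$, so $\|\E[\xi_{it}\xi_{it}^\top]\|\leq C d_T$. Applying a vector Bernstein inequality for i.i.d.\ sums (justified by (A4)) gives $\P(\|T^{-1}\sum_t \xi_{it}\| > x) \leq c_1 \exp(-c_2 T x^2/(d_T + x))$. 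Choosing $x = C\sqrt{d_T \log T/T}$ for large $C$ and taking a union bound over $i=1,\dots,n$ (permitted because $n=o(T)$ follows from (A4)) yields $\max_i \|T^{-1}\sum_t \xi_{it}\| = O_p(\sqrt{d_T \log T/T})$; dividing by $2d_T$ produces the claim.

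Second, to control $\widehat{e}_{it}-e_{it}$, I would invoke Lemma~\ref{VCClemma} at the two quantile levels $\tau\pm d_T$ and subtract the two expansions. Using the identity $\Z_{it}^\top[\bm{\gamma}_{i0}(\tau+d_T)-\bm{\gamma}_{i0}(\tau-d_T)]/(2d_T) = e_{it}$ coming from the linear model~\eqref{eq:model}, the resulting expansion reads
\begin{equation*}
\widehat{e}_{it}-e_{it} = -\Z_{it}^\top B_i^{-1}\Big(\frac{1}{2d_T T}\sum_t \xi_{it}\Big) + \frac{1}{2d_T}\Z_{it}^\top\Delta R_{iT},
\end{equation*}
where $\Delta R_{iT}$ collects the differenced remainder terms from \eqref{vcc1}--\eqref{vcc2}. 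The first term is $O_p(a_T)$ by step one together with the uniform bound $\|B_i^{-1}\|\leq (f_{\min}c_\lambda)^{-1}$ available from (A1)--(A3). For the second term, the uniform remainder rate is $O_p((\log T/T)^{3/4}/d_T)$, and verifying that this is $o(a_T)$ reduces to $(\log T)/(Td_T^2) = o(1)$, which is precisely part of (A4). Taking the sup over $i,t$ produces claim (1).

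Third, claim (2) follows from a first-order Taylor expansion of $1/x$ around $e_{it}$: the algebraic identity $1/\widehat{e}_{it}-1/e_{it}-(e_{it}-\widehat{e}_{it})/e_{it}^2 = (\widehat{e}_{it}-e_{it})^2/(e_{it}^2\widehat{e}_{it})$, together with the two-sided bound $e_{it}\in[1/f_{\max},1/f_{\min}]$ (from (A2)--(A3)) and $\widehat{e}_{it} = e_{it} + o_p(1)$ (from step two, since $a_T=o(1)$ under (A4)), yields the desired bound. The main obstacle is the variance-reduction step: one must exploit that the random fluctuation in $\xi_{it}$ comes only from a difference of nearby indicators and therefore has variance of order $d_T$ rather than $1$. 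A naive Bernstein bound would instead produce a rate $O_p(\sqrt{\log T/T})/(2d_T)$, which is worse by a factor $d_T^{-1/2}$ and would be insufficient to absorb the Bahadur remainder under (A4).
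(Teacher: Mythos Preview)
Your proposal is correct and follows essentially the same route as the paper's proof: both arguments combine the Bahadur expansion from Lemma~\ref{VCClemma} at the perturbed levels $\tau\pm d_T$, a Bernstein inequality that exploits the $O(d_T)$ variance of the differenced indicator, and the elementary identity for $1/\widehat e_{it}-1/e_{it}$ together with the lower bound $e_{it}\geq 1/f_{\max}$. The only differences are cosmetic---you prove the three claims in reverse order and invoke a vector Bernstein bound, whereas the paper works componentwise and proves~\eqref{psiit} last by remarking that it follows from the same Bernstein calculation already used for $\widehat e_{it}-e_{it}$.
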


\bigskip

\begin{lemma}\label{lem:aitbit}%(Decomposition of $\widehat B_{iT}$)
Under Assumptions (I), (A0)--(A3), (MI)
\begin{equation*}
\widehat{B}_{iT} - B_{iT} = \frac{1}{T}\sum_{t=1}^{T} \sum_{k=1}^{p+1} \frac{\psi_{i,\tau+d_T,k}(\Z_{it}, Y_{it}) - \psi_{i,\tau - d_T,k}(\Z_{it}, Y_{it})}{2d_T} \E\Big[\frac{(B_i^{-1}\Z_{i1})_k \Z_{i1}\Z_{i1}^\top}{e_{i1}^2} \Big] + R_{2i},
\end{equation*}
with $\psi_{i,\tau+d_T,k}(\Z_{it}, Y_{it})$ denoting the k-th element of the vector $\psi_{i,\tau+d_T}(\Z_{it}, Y_{it})$ and
\begin{equation} \label{Bit2}
\sup_{i} \Big\|\frac{1}{T}\sum_{t=1}^{T} \sum_{k=1}^{p+1} \frac{\psi_{i,\tau+d_T,k}(\Z_{it}, Y_{it}) - \psi_{i,\tau - d_T,k}(\Z_{it}, Y_{it})}{2d_T} \E\Big[\frac{(B_i^{-1}\Z_{i1})_k \Z_{i1}\Z_{i1}^\top}{e_{i1}^2} \Big]\Big\| = O_p(a_T),
\end{equation} 
and
\begin{equation} \label{Bit3}
\sup_{i} \|R_{2i}\| = O_p\Big(a_T^2 + a_T \Big(\frac{\log T}{T}\Big)^{1/2} + \frac{(\log T)^{3/4}}{d_T T^{3/4}}\Big).
\end{equation}
Moreover,
\begin{equation} \label{Bit1}
\sup_i \|B_{iT}^{-1}-\E[B_{iT}]^{-1}\| = O_p\Big(\Big(\frac{\log T}{T}\Big)^{1/2}\Big),
\end{equation}
and 
\begin{equation} \label{Ait1}
\sup_i \|\widehat{A}_{iT}-A_i\| = O_p\Big(\Big(\frac{\log T}{T}\Big)^{1/2}\Big).
\end{equation}
\end{lemma}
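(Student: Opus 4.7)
The plan is to derive the expansion for $\widehat{B}_{iT} - B_{iT}$ by chaining the two linearizations already available in this subsection: the Taylor expansion for $\widehat f_{it} - f_{it}$ in Lemma \ref{lemma2}, which reduces the problem to controlling $\widehat e_{it} - e_{it}$, and the Bahadur representation for $\widehat{\bm\gamma}_i(\tau \pm d_T)$ in Lemma \ref{VCClemma}, which linearizes the latter. First I would write, using the second statement of Lemma \ref{lemma2} and $\|\Z_{is}\| \leq M$,
\[
\widehat{B}_{iT} - B_{iT} \;=\; \frac{1}{T}\sum_{s=1}^T(\widehat f_{is} - f_{is})\Z_{is}\Z_{is}^\top \;=\; \frac{1}{T}\sum_{s=1}^T \frac{e_{is} - \widehat e_{is}}{e_{is}^2}\Z_{is}\Z_{is}^\top + O_p(a_T^2),
\]
uniformly in $i$. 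Substituting the Bahadur expansion from Lemma \ref{VCClemma} at $\tau \pm d_T$ gives
\[
\widehat e_{is} - e_{is} \;=\; -\frac{\Z_{is}^\top B_i^{-1}}{2Td_T}\sum_{t=1}^T \Delta_{it} \;+\; \frac{\Z_{is}^\top}{2d_T}\bigl(R_{iT}(\tau+d_T) - R_{iT}(\tau-d_T)\bigr),
\]
with $\Delta_{it} := \psi_{i,\tau+d_T}(\Z_{it},Y_{it}) - \psi_{i,\tau-d_T}(\Z_{it},Y_{it})$ and $R_{iT}(\eta) := R_{iT}^{(1)}(\eta) + R_{iT}^{(2)}(\eta)$.

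Plugging this back in and interchanging the summations in $s$ and $t$ via $\Z_{is}^\top B_i^{-1}\Delta_{it} = \sum_k (\Delta_{it})_k (B_i^{-1}\Z_{is})_k$ (using symmetry of $B_i$) yields the key representation
\[
\sum_{k=1}^{p+1}\Bigl(\frac{1}{T}\sum_t \frac{(\Delta_{it})_k}{2d_T}\Bigr)\cdot\Bigl(\frac{1}{T}\sum_s\frac{(B_i^{-1}\Z_{is})_k\Z_{is}\Z_{is}^\top}{e_{is}^2}\Bigr),
\]
after which I replace the inner empirical average over $s$ by its expectation $\E\bigl[(B_i^{-1}\Z_{i1})_k\Z_{i1}\Z_{i1}^\top/e_{i1}^2\bigr]$, absorbing the difference into $R_{2i}$. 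The leading-term bound \eqref{Bit2} then follows: \eqref{psiit} in Lemma \ref{lemma2} delivers $\sup_i\|\tfrac{1}{T}\sum_t\Delta_{it}/(2d_T)\| = O_p(a_T)$, and the expectation factor is uniformly bounded since $\sup_i\|B_i^{-1}\|$ is finite under (A1), (A3), $\|\Z_{i1}\| \leq M$, and $e_{i1} \geq 1/f_{\max}$ from \eqref{eitlowb}.

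The remainder $R_{2i}$ in \eqref{Bit3} decomposes into three pieces, each controlled uniformly in $i$: (i) the $O_p(a_T^2)$ Taylor remainder from Lemma \ref{lemma2}; (ii) the cross term from replacing the inner empirical average over $s$ by its expectation, which is $O_p(a_T)\cdot O_p\bigl((\log T/T)^{1/2}\bigr) = O_p\bigl(a_T(\log T/T)^{1/2}\bigr)$, where the deviation of the bounded empirical average is handled via Bernstein applied entrywise together with a union bound over $i$ (and the finite number of entries), using $\log n \leq C\log T$ from (A4); and (iii) the Bahadur-remainder contribution, which is $O_p\bigl((\log T)^{3/4}/(d_T T^{3/4})\bigr)$ after inserting $\sup_i\sup_\eta\|R_{iT}(\eta)\| = O_p\bigl((\log T/T)^{3/4}\bigr)$ from \eqref{vcc1}--\eqref{vcc2} and dividing by $2d_T$. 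Summing these three contributions matches the stated bound.

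Finally, \eqref{Bit1} and \eqref{Ait1} are standard concentration statements. For \eqref{Bit1}, each entry of $B_{iT} - \E[B_{iT}]$ is an average of i.i.d.\ real variables bounded by $f_{\max}M^2$, so entrywise Bernstein plus union bound over $i$ (and the finite number of entries) gives the $O_p\bigl((\log T/T)^{1/2}\bigr)$ rate uniformly. The same argument for $\widehat A_{iT}$ yields $\sup_i\|\widehat A_{iT} - A_i\| = O_p\bigl((\log T/T)^{1/2}\bigr)$; combining with the resolvent identity $\widehat A_{iT}^{-1} - A_i^{-1} = -\widehat A_{iT}^{-1}(\widehat A_{iT} - A_i)A_i^{-1}$ and the uniform bound $\sup_i\|A_i^{-1}\|\leq 1/(\tau(1-\tau)c_\lambda)$ from (A1) transfers the rate to the inverses. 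The main technical obstacle is the careful bookkeeping of the two-level dependence of $\widehat B_{iT}$ on the data (the outer $t$-sum, and the quantile estimates at $\tau \pm d_T$ embedded in each $\widehat f_{it}$); the double-sum rearrangement that isolates the factor $\frac{1}{T}\sum_t \Delta_{it}/(2d_T)$ is what makes \eqref{Bit2} sharp enough to feed into the proof of Theorem \ref{th1}.
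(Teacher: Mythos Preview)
Your proposal is correct and follows essentially the same route as the paper's proof: linearize $\widehat f_{it}-f_{it}$ via Lemma~\ref{lemma2}, substitute the Bahadur representation from Lemma~\ref{VCClemma} at $\tau\pm d_T$, split off the leading term by centering the inner empirical average (the paper calls the centered matrix $N_{it,k}$ and applies Bernstein to it), and handle \eqref{Bit1}, \eqref{Ait1} by Bernstein plus the resolvent identity. The decomposition of $R_{2i}$ into the three pieces you list matches the paper's decomposition into $R_{1i}$, the $N_{it,k}$ cross term, and the Bahadur-remainder term; your rate calculations for each piece are the same as the paper's.
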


%\begin{lemma}(Decomposition for $\widehat A_{iT}^{-1}$)
%Uniformly in $i$, 
%
%\end{lemma}

\bigskip

\noindent \textbf{Proof of Theorem \ref{th1}.}
We are ready to prove the main result. Observe that $\sup_i \|A_i\| = O(1), \sup_{i} \|\E[B_{iT}]^{-1}\| = O(1)$ and thus 
\begin{align}
\widehat{V}_{iT} - V_{iT}
& = (\widehat{B}_{iT}^{-1}- \E[B_{iT}]^{-1}) A_i \E[B_{iT}]^{-1} + \E[B_{iT}]^{-1} (\widehat{A}_{iT}-A_i)\E[B_{iT}]^{-1}  \nonumber
\\
& \quad + \E[B_{iT}]^{-1}A_i (\widehat{B}_{iT}^{-1} - \E[B_{iT}]^{-1}) + O\Big(\|\widehat{B}_{iT}^{-1} - \E[B_{iT}]^{-1}\|^2+\|\widehat{A}_{iT}-A_{i}\|^2\Big) \nonumber
\\
& = \E[B_{iT}]^{-1} A_i \E[B_{iT}]^{-1}(\E[B_{iT}]-\widehat{B}_{iT})\E[B_{iT}]^{-1} + \E[B_{iT}]^{-1}( \widehat{A}_{iT} - A_i) \E[B_{iT}]^{-1}
\nonumber
\\
& \quad + \E[B_{iT}]^{-1} (\E[B_{iT}]-\widehat{B}_{iT})\E[B_{iT}]^{-1}A_i \E[B_{iT}]^{-1} + R_{3i} \nonumber\\
& =:  \frac{1}{T} \sum_{t} \eta_{iT}(\Z_{it}, Y_{it}) + R_{3i}, \label{Vit1}
\end{align}
where by Lemma \ref{lem:aitbit} we obtain after some simplifications
\begin{equation*}
\sup_i \|R_{3i}\| = O_p(a_T^2)
%O_p\Big(a_n^2 + \frac{(\log T)^{3/4}}{d_T T^{3/4}} + a_n \frac{\log T^{1/2}}{T^{1/2}}+ \frac{\log T}{T}\Big)
\end{equation*}
and  
\begin{align*}
& \eta_{iT}(\Z_{it}, Y_{it})  
\\
:=& - \E[B_{iT}]^{-1}A_i\E[B_{iT}]^{-1} \Big\{\sum_{k=1}^{p+1}  \frac{\psi_{i,\tau+d_T,k}(\Z_{it}, Y_{it}) - \psi_{i,\tau - d_T,k}(\Z_{it}, Y_{it})}{2d_T} \E\Big[\frac{(B_i^{-1}\Z_{i1})_k \Z_{i1}\Z_{i1}^\top}{e_{i1}^2} \Big] \Big\} \E[B_{iT}]^{-1}
\\
&- \E[B_{iT}]^{-1} \Big\{\sum_{k=1}^{p+1}  \frac{\psi_{i,\tau+d_T,k}(\Z_{it}, Y_{it}) - \psi_{i,\tau - d_T,k}(\Z_{it}, Y_{it})}{2d_T} \E\Big[\frac{(B_i^{-1}\Z_{i1})_k \Z_{i1}\Z_{i1}^\top}{e_{i1}^2} \Big] \Big\} \E[B_{iT}]^{-1}A_i\E[B_{iT}]^{-1}
\\
& + \E[B_{iT}]^{-1} \frac{1}{T}\sum_{t=1}^T \tau(1-\tau) (\Z_{it}\Z_{it}^\top - \E[\Z_{it}\Z_{it}^\top]) \E[B_{iT}]^{-1}.
\end{align*}
Note that by definition, $\eta_{iT}(\Z_{it}, Y_{it})$ are independent across $i$ with $\E[\eta_{iT}(\Z_{it}, Y_{it})] = 0$,  
{Since $\widehat{W}_{iT}$ is the lower $p \times p$ sub-matrix of $\widehat{V}_{iT}$ and since $\sup_{i=1,,...,n}\|W_{iT}^{-1}\| = O_P(1)$ under the assumptions made, we have after some simple algebra
\begin{align}
	\widehat{W}_{iT}^{-1} - W_{iT}^{-1} &= - W_{iT}^{-1}(\widehat{W}_{iT} - W_{iT}) W_{iT}^{-1} + O\Big( \|\widehat{W}_{iT} - W_{iT}\|^2\Big) \notag \\
	& := \frac{1}{T} \sum_{t=1}^T \widetilde{\eta}_{iT}(\Z_{it}, Y_{it}) + \widetilde{R}_{3i} \label{Vit1-new}
\end{align}
where 
\begin{equation} \label{eq:iid-tildab}
\sup_i \|\widetilde{R}_{3i}\| = O_p(a_T^2), \quad \sup_i \Big\|\frac{1}{T} \sum_{t=1}^T \widetilde{\eta}_{iT}(\Z_{it}, Y_{it})\Big\| = O_p(a_T)
\end{equation}
by Lemma~\ref{lemma2} and Lemma~\ref{lem:aitbit}. Moreover, the $\widetilde{\eta}_{iT}(\Z_{it}, Y_{it})$ are $p \times p$ matrices with entries $\theta_{k,\ell}(i,t)$ satisfying for some constant $C$
\begin{equation} \label{eq:thetakl-iid}
\E[\theta_{k,\ell}(i,t)] = 0, \quad |\theta_{k,\ell}(i,t)| \leq C/d_T, \quad \E|\theta_{k,\ell}(i,t)| \leq C.
\end{equation}
Let $\phi_{i,\tau}(\Z_{it}, Y_{it})$ denote the last $p$ entries of $B_i^{-1} \psi_{i,\tau}(\Z_{it}, Y_{it})$. Then by Lemma \ref{VCClemma}
\begin{align*}
\widehat{\bm{\beta}}_{i}(\tau) - \bm{\beta}_{0}(\tau) & = \frac{1}{T}\sum_{t=1}^{T} \phi_{i,\tau} (\Z_{it}, Y_{it}) + \widetilde{R}_{iT}^{(1)}(\tau) + \widetilde{R}_{iT}^{(2)}(\tau),
\end{align*}
where $\widetilde{R}_{iT}^{(k)}(\tau), k=1,2$ denote the vectors that contain the last $p$ entries of $R_{iT}^{(k)}(\tau), k=1,2$ from Lemma \ref{VCClemma}. Next note that
\begin{align*}
\frac{1}{n} \sum_{i} \widehat{W}_{iT}^{-1} (\widehat{\bm{\beta}}_{i}(\tau) - \bm{\beta}_{0}(\tau)) & = \frac{1}{n}\sum_{i=1}^{n} \Big(\frac{1}{T} \sum_{t=1}^{T} \widetilde{\eta}_{iT}(\Z_{it}, Y_{it}) \Big) \Big ( \frac{1}{T} \sum_{t=1}^{T} \phi_{i,\tau}(\Z_{it}, Y_{it}) \Big)\\
& + \frac{1}{n} \sum_{i=1}^{n} W_{iT}^{-1} \Big( \frac{1}{T}\sum_{t=1}^{T} \phi_{i,\tau}(\Z_{it}, Y_{it}) + \widetilde{R}_{iT}^{(1)}(\tau)\Big) \\
& + \frac{1}{n} \sum_{i=1}^{n} W_{iT}^{-1} \widetilde{R}_{iT}^{(2)} + \frac{1}{n} \sum_{i=1}^{n} \Big( \frac{1}{T}\sum_{t=1}^{T} \widetilde{\eta}_{iT}(\Z_{it}, Y_{it}) +\widetilde{R}_{3i}\Big) \Big(\widetilde{R}_{iT}^{(1)}(\tau) + \widetilde{R}_{iT}^{(2)}(\tau)\Big)\\
&  + \frac{1}{n} \sum_{i=1}^{n} \widetilde{R}_{3i} \Big (\frac{1}{T}\sum_{t=1}^{T} \phi_{i,\tau}(\Z_{it}, Y_{it})\Big),
\end{align*}
where 
\begin{align*}
&\frac{1}{n} \sum_{i=1}^{n} W_{iT}^{-1} \widetilde{R}_{iT}^{(2)}(\tau)= O_p\Big(\frac{\log T}{T}\Big),\\
& \frac{1}{n} \sum_{i=1}^{n} \widetilde{R}_{3i} \Big (\frac{1}{T}\sum_{t=1}^{T} \phi_{i,\tau}(\Z_{it}, Y_{it})\Big) = O_p\Big(a_T^2\frac{(\log T)^{1/2}}{T^{1/2}}\Big),\\
&  \frac{1}{n} \sum_{i=1}^{n} \Big( \frac{1}{T}\sum_{t=1}^{T} \widetilde{\eta}_{iT}(\Z_{it}, Y_{it}) + \widetilde{R}_{3i}\Big) \Big(\widetilde{R}_{iT}^{(1)}(\tau) + \widetilde{R}_{iT}^{(2)}(\tau)\Big) = O_p\Big(a_T \frac{(\log T)^{3/4}}{T^{3/4}}\Big),
\end{align*}
where the first line is a direct consequence of Lemma \ref{VCClemma} and the second line follows by~\eqref{Vit1-new},~\eqref{eq:iid-tildab} and the bound $\sup_i \Big|\frac{1}{T}\sum_{t=1}^{T} \phi_{i,\tau}(\Z_{it}, Y_{it})\Big| = O_p\Big(\frac{(\log T)^{1/2}}{T^{1/2}}\Big)$. The last line is a combination of Lemma~\ref{VCClemma}, Lemma~\ref{lem:aitbit} and~\eqref{Vit1-new},~\eqref{eq:iid-tildab}.  

Thus by (MI) 
\begin{align}
\frac{1}{n} \sum_{i} \widehat{W}_{iT}^{-1} (\widehat{\bm{\beta}}_{i}(\tau) - \bm{\beta}_{0}(\tau)) \nonumber
& = \frac{1}{n}\sum_{i=1}^{n} \Big(\frac{1}{T} \sum_{t=1}^{T} \widetilde{\eta}_{iT}(\Z_{it}, Y_{it}) \Big) \Big ( \frac{1}{T} \sum_{t=1}^{T} \phi_{i,\tau}(\Z_{it}, Y_{it}) \Big)\\
& + \frac{1}{n} \sum_{i=1}^{n} W_{iT}^{-1} \Big( \frac{1}{T}\sum_{t=1}^{T} \phi_{i,\tau}(\Z_{it}, Y_{it}) + \widetilde{R}_{iT}^{(1)}(\tau)\Big) + o_p\Big(\frac{1}{\sqrt{nT}}\Big). \label{dec1-iid}
\end{align}
%\end{lemma}
For the first term, note that $\E\Big[\widetilde{\eta}_{iT}(\Z_{it}, Y_{it}) \Big] = 0$ and $\E\Big[\phi_{i,\tau}(\Z_{it}, Y_{it}) \Big]=0$ and $\widetilde{\eta}_{iT}(\Z_{it}, Y_{it})$ and $\phi_{i,\tau}(\Z_{it}, Y_{it})$ are independent across $t$ individuals. We further note that the entries of the vectors $\phi_{i,\tau}(\Z_{it}, Y_{it})$ are uniformly bounded. Combining this with~\eqref{eq:thetakl-iid} we obtain uniformly in $i$ 
\begin{equation*}
\E\Big[\Big(\frac{1}{T} \sum_{t=1}^{T} \widetilde{\eta}_{iT}(\Z_{it}, Y_{it}) \Big)\Big ( \frac{1}{T} \sum_{t=1}^{T} \phi_{i,\tau}(\Z_{it}, Y_{it})\Big) \Big] = \frac{1}{T} \E\Big[\widetilde{\eta}_{iT}(\Z_{i1}, Y_{i1}) \phi_{i, \tau}(\Z_{i1}, Y_{i1})\Big] = O\Big(\frac{1}{T}\Big),
\end{equation*}
and we have uniformly in $i$ 
\begin{align*}
& Var\Big[\Big(\frac{1}{T} \sum_{t=1}^{T} \widetilde{\eta}_{iT} (\Z_{it}, Y_{it}) \Big) \Big ( \frac{1}{T} \sum_{t=1}^{T} \phi_{i,\tau}(\Z_{it}, Y_{it})\Big) \Big]
\\
&= \E\Big[\Big(\frac{1}{T^2}\sum_{t=1}^{T} \widetilde{\eta}_{iT} (\Z_{it}, Y_{it}) \phi_{i, \tau}(\Z_{it}, Y_{it}) \Big)\Big(\frac{1}{T^2}\sum_{t=1}^{T} \widetilde{\eta}_{iT} (\Z_{it}, Y_{it}) \phi_{i, \tau}(\Z_{it}, Y_{it}) \Big)^\top\Big] + O\Big(\frac{1}{T^2}\Big)\\
& = \E\Big[\frac{1}{T^4} \sum_{t_1, t_2}\sum_{s_1, s_2} \widetilde{\eta}_{iT} (\Z_{it_1}, Y_{it_1})\phi_{i, \tau}(\Z_{is_1}, Y_{is_1}) \Big(\widetilde{\eta}_{iT} (\Z_{it_2}, Y_{it_2})\phi_{i, \tau}(\Z_{is_1}, Y_{is_1}) \phi_{i, \tau}(\Z_{is_2}, Y_{is_2})\Big)^\top\Big] + O\Big(\frac{1}{T^2}\Big)\\
& = O\Big(\frac{1}{d_T T^3} + \frac{1}{d_T T^2} + \frac{1}{T^2}\Big) = O\Big(\frac{1}{T^2 d_T}\Big)
\end{align*}
where we used~\eqref{eq:thetakl-iid} in the last step. Since $\sum_t\widetilde{\eta}_{iT}(\Z_{it}, Y_{it})$ and $\sum_t\phi_{i,\tau}(\Z_{it}, Y_{it})$ are independent across $i$, we have by assumption (MI)
\begin{align*}
& \E\Big[\frac{1}{n}\sum_{i=1}^{n} \Big(\frac{1}{T} \sum_{t=1}^{T} \widetilde{\eta}_{iT}(\Z_{it}, Y_{it}) \Big) \Big ( \frac{1}{T} \sum_{t=1}^{T} \phi_{i,\tau}(\Z_{it}, Y_{it}) \Big)\Big] = O\Big(\frac{1}{T}\Big)= o\Big(\frac{1}{\sqrt{nT}}\Big),\\
& Var\Big[\frac{1}{n}\sum_{i=1}^{n} \Big(\frac{1}{T} \sum_{t=1}^{T} \widetilde{\eta}_{iT}(\Z_{it}, Y_{it}) \Big) \Big ( \frac{1}{T} \sum_{t=1}^{T} \phi_{i,\tau}(\Z_{it}, Y_{it}) \Big)\Big] = O\Big(\frac{1}{n}\frac{1}{T^2 d_T}\Big) = o\Big(\frac{1}{nT}\Big).
\end{align*}

For the second term in~\eqref{dec1-iid} note first that since $Var(\widetilde{R}_{iT}^{(1)}(\tau)) = o\Big(\frac{1}{T}\Big)$ by \eqref{vcc5} in Lemma~\ref{VCClemma} and since $W_{iT}^{-1}$ is deterministic and by independence across individuals, we have $Var\Big(\frac{1}{n}\sum_{i=1}^{n} W_{iT}^{-1} \widetilde{R}_{iT}^{(1)}(\tau)\Big) = o\Big(\frac{1}{nT}\Big)$. Additionally, by Lemma~\ref{VCClemma}, $\sup_{i}\sup_{\eta \in \T}\|\E[\widetilde{R}_{iT}^{(1)}(\eta)]\| = O\Big(\frac{\log T}{T}\Big)$, hence $\E\Big(\frac{1}{n}\sum_{i=1}^{n} W_{iT}^{-1} \widetilde{R}_{iT}^{(1)}(\tau)\Big) = O\Big(\frac{\log T}{T}\Big)$. Putting together, by the Chebyshev inequality and under Assumption (MI),
\begin{equation*}
\frac{1}{n}\sum_{i=1}^{n} W_{iT}^{-1} \widetilde{R}_{iT}^{(1)}(\tau) = O\Big(\frac{\log T}{T}\Big) + o_p\Big(\frac{1}{\sqrt{nT}}\Big) = o_p\Big(\frac{1}{\sqrt{nT}}\Big).
\end{equation*}
Next, 
\begin{align*}
 \frac{1}{n}\sum_{i=1}^{n} W_{iT}^{-1} \frac{1}{T}\sum_{t=1}^{T} \phi_{i,\tau}(\Z_{it}, Y_{it})= \frac{1}{n} \sum_{i=1}^{n} (W_{iT}^{-1} - W_i^{-1}) \frac{1}{T}\sum_{t=1}^{T} \phi_{i,\tau}(\Z_{it}, Y_{it})  + \frac{1}{n} \sum_{i=1}^{n} W_i^{-1} \frac{1}{T}\sum_{t=1}^{T} \phi_{i,\tau}(\Z_{it}, Y_{it})
\end{align*}
and by Lemma~\ref{lemma1}, $\|W_{iT}^{-1} - W_i^{-1}\|= o(1)$ uniformly in $i$ and $Var\Big(\frac{1}{T}\sum_{t=1}^{T} \phi_{i,\tau}(\Z_{it}, Y_{it})\Big) = O\Big(\frac{1}{T}\Big)$, hence 
\begin{equation*}
Var\Big(\frac{1}{n} \sum_{i=1}^{n} (W_{iT}^{-1} - W_i^{-1}) \frac{1}{T}\sum_{t=1}^{T} \phi_{i,\tau}(\Z_{it}, Y_{it})\Big) = o\Big(\frac{1}{nT}\Big).
\end{equation*}
Since $W_{iT}^{-1}$ and $W_i^{-1}$ are deterministic, we have 
\begin{equation*}
\E\Big(\frac{1}{n} \sum_{i=1}^{n} (W_{iT}^{-1} - W_i^{-1}) \frac{1}{T}\sum_{t=1}^{T} \phi_{i,\tau}(\Z_{it}, Y_{it})\Big) =0.
\end{equation*}
Therefore, by Chebyshev inequality,
\begin{equation*}
\frac{1}{n} \sum_{i=1}^{n} (W_{iT}^{-1} - W_i^{-1}) \frac{1}{T}\sum_{t=1}^{T} \phi_{i,\tau}(\Z_{it}, Y_{it})= o_p\Big(\frac{1}{\sqrt{nT}}\Big).
\end{equation*}
Putting together, we have 
\begin{align}\label{eqalmost}
\frac{1}{n} \sum_{i} \widehat{W}_{iT}^{-1} (\widehat{\bm{\beta}}_{i}(\tau) -\bm{\beta}_{0}(\tau)) = \frac{1}{n} \sum_{i=1}^{n} W_i^{-1} \frac{1}{T}\sum_{t=1}^{T} \phi_{i,\tau}(\Z_{it}, Y_{it}) + o_p\Big(\frac{1}{\sqrt{nT}}\Big).
\end{align}
Next, note that by Lemma~\ref{lemma1}, \eqref{Vit1-new} and the definition of $\widehat{W}_{iT}, W_{i}$ 
\begin{equation*}
\frac{1}{n} \sum_{i=1}^n\widehat{W}_{iT}^{-1} = \frac{1}{n} \sum_{i=1}^{n} W_i^{-1} + \frac{1}{nT} \sum_{i=1}^n \sum_{t=1}^{T} \widetilde{\eta}_{iT}(\Z_{it},Y_{it})
+ o_P(1) = \frac{1}{n} \sum_{i=1}^{n} W_i^{-1} + o_P(1),
\end{equation*}
where the last equation follows from the properties of $\widetilde{\eta}_{iT}(\Z_{it},Y_{it})$. Plugging this into the definition of $\widehat{\bm{\beta}}_{MD}(\tau)$ in combination with \eqref{eqalmost}, the result in \eqref{eq:main} follows by an application of the Lindeberg CLT after some straightforward computations. \hfill $\Box$

\bigskip

\noindent \textbf{Proof of Lemma \ref{lemma:CovHK}.}
The proof follows directly from Lemma \ref{lem:aitbit}.
$\Box$

%
%\textbf{Dependent case}: 
%
%\newpage

\subsection{Proof of Theorem~\ref{th2} and Lemma~\ref{lemma:CovD}}

\bigskip

\begin{lemma} \label{lem:pisiit_beta}
Under assumptions (A0)--(A3), (D1), and if $\log n \leq \C_4 \log T$ for some constant $\C_4$ we have 
\begin{equation}\label{psiit2}
\sup_i \Big\| \frac{1}{T}\sum_{t=1}^{T} \frac{\psi_{i,\tau+d_T}(\Z_{it}, Y_{it}) - \psi_{i,\tau - d_T}(\Z_{it}, Y_{it})}{2d_T} \Big\| = O_p\Big(\frac{\log T}{(Td_T)^{1/2} }\Big). 
\end{equation}
\end{lemma}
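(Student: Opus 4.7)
The plan is to reduce the centered sum to an empirical process indexed by the class $\mathcal{G}_2(\delta)$ from Lemma~\ref{lem:G1G2} and then invoke the maximal inequality~\eqref{G2_beta}. Since the model specifies $q_{i,\eta}(\Z_{it}) = \bm{\gamma}_{i0}(\eta)^\top \Z_{it}$, the first task is to show that $\eta \mapsto \bm{\gamma}_{i0}(\eta)$ is Lipschitz uniformly in $i$ on a neighborhood of $\tau$, so that $b_1^{(i)} := \bm{\gamma}_{i0}(\tau + d_T)$ and $b_2^{(i)} := \bm{\gamma}_{i0}(\tau - d_T)$ satisfy $\sup_i \|b_1^{(i)} - b_2^{(i)}\| \leq C d_T$ for some constant $C$. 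This follows by implicit differentiation of the population first-order condition $\E[\Z_{i1}(\1\{Y_{i1} \leq \bm{\gamma}_{i0}(\eta)^\top \Z_{i1}\} - \eta)] = 0$, since the associated Jacobian $B_i(\eta) = \E[f_{Y \mid \Z}(\bm{\gamma}_{i0}(\eta)^\top \Z_{i1} \mid \Z_{i1})\Z_{i1}\Z_{i1}^\top]$ has eigenvalues bounded below by $c_\lambda f_{\min}$ uniformly in $i$ and $\eta \in \mathcal{T}$ by (A1) and (A3).

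Setting $G_i(y,z) := z(\1\{y \leq {b_1^{(i)}}^\top z\} - \1\{y \leq {b_2^{(i)}}^\top z\})$ and using the identity $\E[G_i(Y_{i1},\Z_{i1})] = 2 d_T \E[\Z_{i1}]$, one gets
\begin{equation*}
\frac{1}{T}\sum_t \frac{\psi_{i,\tau+d_T}(\Z_{it},Y_{it}) - \psi_{i,\tau-d_T}(\Z_{it},Y_{it})}{2d_T}
= \frac{1}{2d_T}\Big(\frac{1}{T}\sum_t G_i(Y_{it},\Z_{it}) - \E[G_i(Y_{i1},\Z_{i1})]\Big) - \Big(\frac{1}{T}\sum_t \Z_{it} - \E[\Z_{i1}]\Big).
\end{equation*}
Projecting onto an arbitrary unit vector, each coordinate of the first difference is dominated (up to the constant $C$) by an element of $\mathcal{G}_2(C d_T)$; the truncation $\1\{\|z\| \leq M\}$ in the definition of $\mathcal{G}_2$ is harmless under (A1). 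Applying~\eqref{G2_beta} with $\kappa = \C_4 + 2$ and taking a union bound over $i = 1,\dots,n$ using $\log n \leq \C_4 \log T$ yields $\sup_i \|\mathbb{P}_{i,T} - \mathbb{P}_i\|_{\mathcal{G}_2(C d_T)} = O_p(\chi_T(C d_T))$. The centering correction $(1/T)\sum_t \Z_{it} - \E[\Z_{i1}]$ is of uniform order $O_p((\log T/T)^{1/2})$ by a Bernstein bound for bounded $\beta$-mixing sequences (or by invoking~\eqref{G1_beta} on a suitable restricted class), which is of smaller order than the stated rate.

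Finally, $\chi_T(C d_T)/(2d_T) = O(\log T/(Td_T)^{1/2}) + O((\log T)^2/(Td_T))$. In the regime $Td_T^2 \gg \log T$ relevant to the applications of this lemma, one has $(Td_T)^2 \gg T \log T \gg (\log T)^4$, hence $Td_T \gg (\log T)^2$, so the first term dominates and we recover the claimed rate. The main obstacle is the reduction step: one must exploit that the linearity $q_{i,\eta}(\Z) = \bm{\gamma}_{i0}(\eta)^\top \Z$ combined with the uniform Lipschitz property of $\bm{\gamma}_{i0}(\cdot)$ places the relevant functions inside the $Cd_T$-ball defining $\mathcal{G}_2(Cd_T)$, and then choose $\kappa$ in~\eqref{G2_beta} large enough so that the $T^{-\kappa}$ tail absorbs the union bound over $n$ individuals via $\log n \leq \C_4 \log T$. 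The remaining rate computation is routine.
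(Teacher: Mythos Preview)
Your argument is correct but follows a different route from the paper. The paper proceeds by the probability integral transform $U_{it} := F_{Y\mid \Z}(Y_{it}\mid \Z_{it})$, which turns the difference of indicators into $\1\{U_{it}\leq \tau+d_T\}-\1\{U_{it}\leq \tau-d_T\}-2d_T$, a function that no longer depends on $i$. It then bounds the block variance directly via Lemma~\ref{varb} (yielding $\sigma_q^2\lesssim d_T|\log d_T|$) and applies a Bernstein inequality for $\beta$-mixing sequences coordinate by coordinate. Your approach instead keeps the original indicators, establishes the uniform Lipschitz property of $\eta\mapsto \bm{\gamma}_{i0}(\eta)$ to place the relevant functions in $\mathcal{G}_2(Cd_T)$, and then invokes the ready-made maximal inequality~\eqref{G2_beta} of Lemma~\ref{lem:G1G2} as a black box. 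This is a legitimate reuse of machinery (Lemma~\ref{lem:G1G2} is proved under exactly (A0)--(A3), (B1)), and the decomposition with the centering correction $\frac{1}{T}\sum_t \Z_{it}-\E[\Z_{i1}]$ is clean. The trade-off is that the paper's PIT trick sidesteps the implicit-function argument entirely and yields a slightly tighter remainder, whereas your route leaves the additional term $(\log T)^2/(Td_T)$ that must be absorbed by an extra condition such as $\log T/(Td_T^2)\to 0$. You correctly flag this, and in fairness the paper's own proof also tacitly uses polynomial decay of $d_T$ (to control $|\log d_T|$ and the $s_Tq_T/T$ Bernstein term), so the gap between the two sets of side conditions is minor.
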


\bigskip

For the next result we use the following additional notation. Let 
\begin{align*}
%\widehat w_{it} &:= \Z_{it} (\tau - \1(Y_{it} \leq \widehat \gamma_i(\tau) ^\top \Z_{it})),
%\\
w_{it} &:= \Z_{it} (\tau - \1(Y_{it} \leq \bm{\gamma}_{i0}(\tau) ^\top \Z_{it})),
\\
\mu_{i,j}(b) &:= \E[(\Z_{i1}\Z_{i1+j}^\top + \Z_{i1+j}\Z_{i1}^\top)(\tau - \1(Y_{i1} \leq \Z_{i1}^\top b))(\tau - \1(Y_{i1+j}\leq \Z_{i1+j}^\top b))].
\end{align*} 
Under Assumptions (A1), (D2) there exist matrices $D_{i,j,k}(\tau) \in \R^{(p+1)\times(p+1)}$ with bounded elements such that 
\begin{equation}\label{muij}
\sup_{i,j} \Big\|\mu_{i,j}(\widehat{\bm{\gamma}}_{i}(\tau)) - \mu_{i,j}(\bm{\gamma}_{i0}(\tau)) - \sum_{k=1}^{p+1}D_{i,j,k}(\tau) (\widehat{\bm{\gamma}}_{i,k}(\tau) - \bm{\gamma}_{i0,k}(\tau))\Big\| = O\Big(\sup_i \|\widehat{\bm{\gamma}}_{i}(\tau) - \bm{\gamma}_{i0}(\tau)\|^2\Big).
\end{equation}
Define
\begin{align*}
%\widehat A_{iT}& := \frac{\tau(1-\tau)}{T}\sum_{t=1}^{T} \Z_{it} \Z_{it}^\top + \sum_{1 \leq j \leq m_T}\Big(1- \frac{j}{T}\Big) \Big(\frac{1}{T}\sum_{t\in T_j} (\widehat w_{it}\widehat w_{it+j}^\top+\widehat w_{it+j}\widehat w_{it}^\top)\Big),
%\\
%A_i & := \tau (1-\tau) \E[\Z_{i1}\Z_{i1}^\top] + \sum_{j=1}^{\infty} \E(w_{i1}w_{i1+j}^\top + w_{i1+j} w_{i1}^\top),
%\\
A_{iT} & := \frac{\tau(1-\tau)}{T}\sum_{t=1}^{T} \Z_{it} \Z_{it}^\top  + \sum_{1\leq j \leq m_T}\Big(1-\frac{j}{T}\Big) \Big(\frac{1}{T}\sum_{t\in T_j}(w_{it}w_{it+j}^\top + w_{it+j}w_{it}^\top) \Big).
%\\
%\E[A_{iT}]& := \tau(1-\tau) \E[\Z_{i1}\Z_{i1}^\top] + \sum_{1\leq j \leq m_T} \Big(1-\frac{j}{T}\Big)\frac{|T_j|}{T} \E[w_{i1}w_{i1+j}^\top + w_{i1+j}w_{i1}^\top].
\end{align*}

\begin{lemma}\label{lem:ait_beta}
Under Assumptions (A0)--(A3) and (D1)--(D2), (MD) 
\begin{equation*}
\widetilde{A}_{iT}- A_{iT} = \sum_{1\leq j \leq m_T}\Big(1-\frac{j}{T}\Big)\frac{|T_j|}{T}\sum_{k=1}^{p+1} D_{i,j,k}(\tau) \frac{1}{T}\sum_{t=1}^{T} (B_{i}^{-1}\psi_{i,\tau}(\Z_{it}, Y_{it}))_k + R_{4i},
\end{equation*}
with
\begin{equation}\label{Ait3}
\sup_i \|R_{4i}\| = O_P\Big(m_T (\log T + m_T)^{1/2}\Big(\frac{\log T}{T}\Big)^{3/4}\Big),
\end{equation}
and 
\begin{equation}\label{Ait4}
\sup_i \Big\|A_{iT} - \E[A_{iT}] \Big\|= O_p\Big( m_T\Big(\frac{m_T \log T}{T}\Big)^{1/2} \Big),
\end{equation}
and 
\begin{equation}\label{Ait4.1}
\sup_i \Big \|\widetilde{A}_{iT} - \E[A_{iT}]\Big\|=O_p\Big(m_T \Big(\frac{m_T\log T}{T}\Big)^{1/2}\Big),
\end{equation}
and provided that $m_T\to \infty$, 
\begin{equation} \label{Ait5}
\sup_i \Big\| \E[A_{iT}] - \widetilde{A}_i \Big\| = o(1).
\end{equation}
\end{lemma}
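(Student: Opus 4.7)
The plan is to prove each of the four claims in turn, with the bulk of the work concentrated on the Bahadur-type expansion~\eqref{Ait3}. First, note that the $j=0$ piece $\tau(1-\tau)T^{-1}\sum_t \Z_{it}\Z_{it}\tr$ cancels from $\widetilde{A}_{iT}-A_{iT}$, so for each $j\in\{1,\ldots,m_T\}$ I would form the symmetrized kernel
\begin{equation*}
G_b(y_1,z_1,y_2,z_2) := (z_1 z_2\tr + z_2 z_1\tr)(\tau - \1\{y_1 \leq z_1\tr b\})(\tau - \1\{y_2 \leq z_2\tr b\}),
\end{equation*}
and write the $j$-th block of $\widetilde{A}_{iT}-A_{iT}$ as $(|T_j|/T)$ times an empirical average of $G_{\widehat{\bm\gamma}_i(\tau)} - G_{\bm\gamma_{i0}(\tau)}$ over the paired observations. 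Adding and subtracting the corresponding population quantity splits this block into a ``bias'' piece $(|T_j|/T)[\mu_{i,j}(\widehat{\bm\gamma}_i(\tau)) - \mu_{i,j}(\bm\gamma_{i0}(\tau))]$ and an empirical-process residual of the form studied in~\eqref{STjkl} and controlled by~\eqref{G3_beta}.

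For the bias piece I would apply~\eqref{muij} to linearize and then substitute the Bahadur representation from Lemma~\ref{VCClemma_beta} for $\widehat{\bm\gamma}_i(\tau) - \bm\gamma_{i0}(\tau)$. Lemma~\ref{VCClemma_beta} gives $\sup_i\|\widehat{\bm\gamma}_i(\tau) - \bm\gamma_{i0}(\tau)\| = O_P(\sqrt{\log T/T})$, so the quadratic remainder in~\eqref{muij} contributes $O_P(m_T \log T/T)$ after summing over $j$, and the $R_{iT}^{(1)}, R_{iT}^{(2)}$ contributions of Lemma~\ref{VCClemma_beta} sum to $O_P(m_T\cdot T^{-3/4}(\log T)^{5/4})$, both absorbed in the target remainder. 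For the empirical-process residual I would invoke~\eqref{G3_beta} with $\delta\asymp\sqrt{\log T/T}$ and with $\kappa$ chosen large enough that the union bound over $i\leq n$ and $j\leq m_T$ still produces a probability tending to one; assumption (B3) guarantees $\log n \lesssim \log T$, making this feasible. This uniformization over both $i$ and $j$ is the main obstacle, and it is precisely why~\eqref{G3_beta} was proved with explicit $\kappa^2$ dependence. The per-block rate $T^{-3/4}(\log T)^{3/4}(m_T+\log T)^{1/2}$, multiplied by the $m_T$ blocks, delivers~\eqref{Ait3}.

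For~\eqref{Ait4} I would apply the Bernstein inequality for $\beta$-mixing sequences (Corollary C.1 of \cite{KatoGalvaoMontes-Rojas12}) entrywise to $T^{-1}\sum_{t\in T_j}(w_{it}w_{it+j}\tr + w_{it+j}w_{it}\tr)$ for each $j$. The lagged process $\{(Y_{it},\Z_{it},Y_{it+j},\Z_{it+j})\}_t$ inherits $\beta$-mixing with coefficient $\widetilde\beta(\ell)\leq\beta((\ell-j)\vee 0)$, and arguing as in the proof of~\eqref{G3_beta} the long-run variance satisfies $\sigma_{q,i,j}^2 \leq C(m_T+\log T)$. Bernstein with block length $q_T\asymp m_T+\log T$ at deviation level $\sqrt{(m_T+\log T)\log T/T}$, followed by a union bound over $i\leq n$, $j\leq m_T$ and the fixed number of matrix entries, shows that each $j$-th block deviates by $O_P(\sqrt{m_T\log T/T})$, and summing over $j$ yields~\eqref{Ait4}. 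Then~\eqref{Ait4.1} follows from the triangle inequality together with~\eqref{Ait3} and~\eqref{Ait4}, since under (B3) the rate in~\eqref{Ait3} is dominated by that in~\eqref{Ait4}.

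Finally, for the bias bound~\eqref{Ait5}, stationarity yields
\begin{equation*}
\E[A_{iT}] - \widetilde A_i = \sum_{j=1}^{m_T}\Big(\tfrac{|T_j|}{T}\big(1-\tfrac{j}{T}\big) - 1\Big)\E\big[w_{i1}w_{i1+j}\tr + w_{i1+j}w_{i1}\tr\big] - \sum_{j > m_T}\E\big[w_{i1}w_{i1+j}\tr + w_{i1+j}w_{i1}\tr\big].
\end{equation*}
Davydov's covariance inequality for $\beta$-mixing sequences applied to the bounded vectors $w_{i1},w_{i1+j}$ gives $\|\E[w_{i1}w_{i1+j}\tr]\|\leq C\,C_\beta b_\beta^{j}$ uniformly in $i$. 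Hence the tail sum is $O(b_\beta^{m_T})=o(1)$ since $m_T\to\infty$, while $|(|T_j|/T)(1-j/T)-1|\leq 3j/T$ bounds the first sum by $(3C/T)\sum_{j\geq 1} j\, b_\beta^j = O(1/T)$. Both contributions vanish uniformly in $i$, establishing~\eqref{Ait5}.
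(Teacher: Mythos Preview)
Your proposal is correct and follows essentially the same route as the paper: the same bias/empirical-process split of $\widetilde A_{iT}-A_{iT}$ via $\mu_{i,j}$, the same use of~\eqref{muij} and Lemma~\ref{VCClemma_beta} for the bias piece and of~\eqref{G3_beta} on the event $\{\sup_i\|\widehat{\bm\gamma}_i-\bm\gamma_{i0}\|\lesssim(\log T/T)^{1/2}\}$ for the residual, and the same Bernstein-for-$\beta$-mixing argument on the lagged process for~\eqref{Ait4}. One small point: in deducing~\eqref{Ait4.1} by the triangle inequality you also need a bound on the \emph{leading} term of the expansion (not just $R_{4i}$ from~\eqref{Ait3}); this is immediate from~\eqref{G1_beta} and the uniform boundedness of $D_{i,j,k}$, giving $O_P(m_T(\log T/T)^{1/2})$, which is dominated by the rate in~\eqref{Ait4}.
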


\bigskip

Now we state a lemma that is parallel to Lemma \ref{lemma2} under $\beta$-mixing sequences. 
\begin{lemma}
\label{lemma2_new}
Under Assumptions (A0)--(A3) and (D1)--(D2), (MD)  
\begin{equation*}
\sup_{i,t}|\widehat{e}_{it}-e_{it}|= O_p\Big(\frac{\log T}{(Td_T)^{1/2}}\Big),
\end{equation*}
%where 
%\[
%a_T = \frac{\log T}{(Td_T)^{1/2}} + \frac{(\log T)^{5/4}}{d_T T^{3/4}}
%\]
and 
\begin{equation*}
\sup_{i,t}\Big|\widehat{f}_{it}-f_{it}-\frac{e_{it}-\widehat{e}_{it}}{e_{it}^2}\Big| = O_p\Big(\frac{(\log T)^2}{Td_T}\Big).
\end{equation*}
\end{lemma}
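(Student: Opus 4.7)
The argument mirrors the i.i.d. case (Lemma~\ref{lemma2}), replacing the Bernstein step with Lemma~\ref{lem:pisiit_beta} and Lemma~\ref{VCClemma} with its $\beta$-mixing analogue Lemma~\ref{VCClemma_beta}. I would begin by applying the Bahadur representation of Lemma~\ref{VCClemma_beta} at both $\tau+d_T$ and $\tau-d_T$, subtracting and dividing by $2d_T$, to obtain
\[
\widehat{e}_{is} - e_{is} = -\frac{\Z_{is}^\top B_i^{-1}}{T} \sum_{t=1}^T \frac{\psi_{i,\tau+d_T}(\Z_{it},Y_{it}) - \psi_{i,\tau-d_T}(\Z_{it},Y_{it})}{2d_T} + \frac{\Z_{is}^\top(\widetilde R_{iT}^{(1)} + \widetilde R_{iT}^{(2)})}{2d_T},
\]
where $\widetilde R_{iT}^{(k)} := R_{iT}^{(k)}(\tau+d_T) - R_{iT}^{(k)}(\tau-d_T)$.

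For the linear (leading) term, assumptions (A1) and (A3) imply $\sup_{i,s} \|\Z_{is}^\top B_i^{-1}\| = O(1)$, so a direct application of \eqref{psiit2} in Lemma~\ref{lem:pisiit_beta} yields a uniform bound of order $O_p(\log T/(Td_T)^{1/2})$. For the remainder, the uniform bounds \eqref{vcc1_beta} and \eqref{vcc2_beta} from Lemma~\ref{VCClemma_beta} (together with $\|\Z_{is}\| \leq M$) give
\[
\sup_{i,s}\Big|\frac{\Z_{is}^\top(\widetilde R_{iT}^{(1)} + \widetilde R_{iT}^{(2)})}{2d_T}\Big| = O_p\Big(\frac{(\log T)^{5/4}}{d_T T^{3/4}}\Big).
\]
Under (B3), which imposes $\log T/(Td_T^2) = o(1)$, this remainder is dominated by the leading term, since their ratio is $(\log T)^{1/4}/(d_T^{1/2} T^{1/4}) = o(1)$. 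Combining yields the first bound.

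For the second statement, I would Taylor expand $\widehat f_{it} - f_{it} = \widehat e_{it}^{-1} - e_{it}^{-1}$. By \eqref{eitlowb}, $e_{it} \geq 1/f_{\max}$ uniformly, and since $\sup_{i,t}|\widehat e_{it} - e_{it}| = o_p(1)$ under (B3) by the first part, $\widehat e_{it}$ is bounded away from zero with probability tending to one uniformly in $i,t$. Hence
\[
\Big|\widehat f_{it} - f_{it} - \frac{e_{it} - \widehat e_{it}}{e_{it}^2}\Big| = O\big(|\widehat e_{it} - e_{it}|^2\big) = O_p\Big(\frac{(\log T)^2}{Td_T}\Big),
\]
uniformly in $i,t$. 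The only nontrivial step is bookkeeping of the rate arithmetic under (B3) to verify that the Bahadur remainder is subdominant; there is no new probabilistic difficulty beyond what is already encoded in Lemmas~\ref{lem:pisiit_beta} and \ref{VCClemma_beta}.
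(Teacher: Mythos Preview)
Your proposal is correct and follows essentially the same route as the paper's proof: both use the Bahadur representation from Lemma~\ref{VCClemma_beta} at $\tau\pm d_T$, bound the leading difference term via Lemma~\ref{lem:pisiit_beta}, check that the $O_p((\log T)^{5/4}/(d_T T^{3/4}))$ remainder is dominated under (B3), and then obtain the second statement by the same Taylor expansion and lower bound \eqref{eitlowb} as in Lemma~\ref{lemma2}.
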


\bigskip

Now we present a lemma parallel to Lemma \ref{lem:aitbit} for the $\beta$-mixing sequences. 
\begin{lemma}\label{lem:aitbit2}
Under assumptions (A0)--(A3) and (D1)--(D2), (MD), 
\begin{equation*}
\widehat{B}_{iT} - B_{iT} = \frac{1}{T}\sum_{t=1}^{T} \sum_{k=1}^{p+1} \frac{\psi_{i,\tau+d_T,k}(\Z_{it}, Y_{it}) - \psi_{i,\tau - d_T,k}(\Z_{it}, Y_{it})}{2d_T} \E\Big[\frac{(B_i^{-1}\Z_{i1})_k \Z_{i1}\Z_{i1}^\top}{e_{i1}^2} \Big] + R_{2i},
\end{equation*}
with
\begin{equation} \label{Bit2_beta}
\sup_{i} \Big\|\frac{1}{T}\sum_{t=1}^{T} \sum_{k=1}^{p+1} \frac{\psi_{i,\tau+d_T,k}(\Z_{it}, Y_{it}) - \psi_{i,\tau - d_T,k}(\Z_{it}, Y_{it})}{2d_T} \E\Big[\frac{(B_i^{-1}\Z_{i1})_k \Z_{i1}\Z_{i1}^\top}{e_{i1}^2} \Big]\Big\| = O_p\Big(\frac{\log T}{(Td_T)^{1/2}}\Big),
\end{equation} 
and 
\begin{equation} \label{Bit3_beta}
\sup_{i} \|R_{2i}\| = O_p\Big(\frac{(\log T)^2}{Td_T} + \Big(\frac{\log T}{(Td_T)^{1/2}}\Big)\Big(\frac{\log T}{T}\Big)^{1/2}+ \frac{(\log T)^{5/4}}{d_T T^{3/4}}\Big).
\end{equation}
Moreover,
\begin{equation} \label{Bit1_beta}
\sup_i \|B_{iT}-\E[B_{iT}]\| = O_p\Big(\frac{(\log T)^{1/2}}{T^{1/2}}\Big).
\end{equation}
\end{lemma}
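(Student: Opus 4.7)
The proof closely follows that of Lemma~\ref{lem:aitbit}, with the i.i.d.\ Bernstein and Bahadur tools replaced by their $\beta$-mixing counterparts already developed in this section: Lemma~\ref{VCClemma_beta} (Bahadur representation for $\widehat{\bm{\gamma}}_i$), Lemma~\ref{lem:pisiit_beta} (concentration of the $\psi$-differences), Lemma~\ref{lemma2_new} (Taylor expansion of $\widehat{f}_{it}-f_{it}$), Lemma~\ref{varb} (variance bound for $\beta$-mixing partial sums), and Corollary~C.1 of \cite{KatoGalvaoMontes-Rojas12} (Bernstein inequality for $\beta$-mixing sequences).

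First, by Lemma~\ref{lemma2_new} and the uniform boundedness of $\|\Z_{it}\Z_{it}^{\top}\|$ from (A1),
\[
\widehat{B}_{iT}-B_{iT} = -\frac{1}{T}\sum_{t=1}^{T}\frac{\widehat{e}_{it}-e_{it}}{e_{it}^{2}}\Z_{it}\Z_{it}^{\top}+O_{p}\!\Big(\frac{(\log T)^{2}}{Td_{T}}\Big),
\]
uniformly in $i$, contributing the first term of~\eqref{Bit3_beta}. Inserting the Bahadur representation from Lemma~\ref{VCClemma_beta} for $\widehat{\bm{\gamma}}_i(\tau\pm d_T)-\bm{\gamma}_{i0}(\tau\pm d_T)$ into $\widehat{e}_{it}-e_{it}$ rewrites the right-hand side as
\[
\sum_{k=1}^{p+1}\Big(\frac{1}{T}\sum_{s=1}^{T}\frac{\psi_{i,\tau+d_T,k}(\Z_{is},Y_{is})-\psi_{i,\tau-d_T,k}(\Z_{is},Y_{is})}{2d_{T}}\Big)\Big(\frac{1}{T}\sum_{t=1}^{T}\frac{(B_i^{-1}\Z_{it})_{k}\,\Z_{it}\Z_{it}^{\top}}{e_{it}^{2}}\Big)
\]
plus a contribution from the Bahadur remainders $R_{iT}^{(1)}(\tau\pm d_T)+R_{iT}^{(2)}(\tau\pm d_T)=O_p((\log T)^{5/4}/T^{3/4})$; after division by $d_T$ and multiplication by the $O_p(1)$ matrix $T^{-1}\sum_t \Z_{it,k}\Z_{it}\Z_{it}^{\top}$, this yields the third term of~\eqref{Bit3_beta}.

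To separate the leading term in the stated decomposition from the remainder, define the centered matrix
\[
N_{it,k}:=\frac{(B_i^{-1}\Z_{it})_{k}\Z_{it}\Z_{it}^{\top}}{e_{it}^{2}}-\E\Big[\frac{(B_i^{-1}\Z_{i1})_{k}\Z_{i1}\Z_{i1}^{\top}}{e_{i1}^{2}}\Big].
\]
Under (A1), (A3) and~\eqref{eitlowb}, both $\|B_i^{-1}\|$ and $1/e_{it}^{2}$ are uniformly bounded, so each entry of $N_{it,k}$ is bounded; Lemma~\ref{varb} then gives $\sup_{i,k,j,\ell,q}\sigma_q^2(N_{it,k,j,\ell})=O(1)$. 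Applying the $\beta$-mixing Bernstein inequality (Corollary~C.1 of \cite{KatoGalvaoMontes-Rojas12}) with block size $q_T=c\log T$ and deviation parameter $s_T=c'\log T$, chosen so that $2e^{-s_T}+2T\beta(q_T)\leq T^{-\kappa}$ for large $\kappa$ (feasible by the exponential decay of $\beta(\cdot)$ in (B1)), and union-bounding over $i\leq n$ using $\log n=O(\log T)$ from (B3), yields $\sup_{i,k}\|T^{-1}\sum_tN_{it,k}\|=O_p((\log T/T)^{1/2})$. Combined with the bound $O_p(\log T/(Td_T)^{1/2})$ from Lemma~\ref{lem:pisiit_beta} on the $\psi$-factor, the centered cross-term produces the middle term of~\eqref{Bit3_beta}, while the uncentered remainder coincides with the leading expression in the stated decomposition; its bound~\eqref{Bit2_beta} follows from Lemma~\ref{lem:pisiit_beta} combined with the uniform boundedness of $\E[(B_i^{-1}\Z_{i1})_{k}\Z_{i1}\Z_{i1}^{\top}/e_{i1}^{2}]$. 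Finally,~\eqref{Bit1_beta} is obtained by an entrywise application of the same $\beta$-mixing Bernstein inequality to $B_{iT}-\E[B_{iT}]=T^{-1}\sum_{t}(f_{it}\Z_{it}\Z_{it}^{\top}-\E[f_{i1}\Z_{i1}\Z_{i1}^{\top}])$, whose summands are bounded and whose normalized partial-sum variance is $O(1)$ by Lemma~\ref{varb}.

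The only genuinely new ingredient compared to the proof of Lemma~\ref{lem:aitbit} is the uniform-in-$i$ concentration of $T^{-1}\sum_tN_{it,k}$ under $\beta$-mixing, which I anticipate to be the main obstacle; however, it is resolved in a standard way once the variance bound of Lemma~\ref{varb} and the exponential-decay assumption (B1) are combined with logarithmic choices of the block and deviation sizes in the Bernstein inequality, followed by the union bound over $i\leq n$ enabled by (B3). All other steps mirror the i.i.d.\ case verbatim.
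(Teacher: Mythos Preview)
Your proposal is correct and follows essentially the same approach as the paper's own proof: both start from Lemma~\ref{lemma2_new} to linearize $\widehat{f}_{it}-f_{it}$, insert the $\beta$-mixing Bahadur representation from Lemma~\ref{VCClemma_beta}, introduce the same centered matrices $N_{it,k}$, and control them (and $B_{iT}-\E[B_{iT}]$) via the $\beta$-mixing Bernstein inequality with logarithmic block and deviation parameters plus a union bound over $i$. The only cosmetic difference is that the paper cites Lemma~C.1 of \cite{KatoGalvaoMontes-Rojas12} directly for the $O(1)$ variance bound, whereas you route it through Lemma~\ref{varb}; after a harmless rescaling to satisfy the $\delta<1$ hypothesis of Lemma~\ref{varb}, the two are equivalent.
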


%\textbf{Proof}

\bigskip 
\bigskip

\noindent
\textbf{Proof of Theorem \ref{th2}.}
We use the following notations for various matrix objects: 
\begin{align*}
 \widetilde{V}_{iT} & = \widehat{B}_{iT}^{-1}\widetilde{A}_{iT}\widehat{B}_{iT}^{-1}\\
V_{iT} & = \E[B_{iT}]^{-1}\E[A_{iT}] \E[B_{iT}]^{-1}\\
V_i & = B_i^{-1} \widetilde{A}_iB_{i}^{-1},
\end{align*}
and let $\widetilde{W}_{iT}$, $W_{iT}$ and $W_{i}$ denote the lower $p\times p$ submatrix of $\widetilde{V}_{iT}$, $V_{iT}$ and $V_i$ respectively. Observe that 
\begin{align}\label{Vit1_beta}
\widetilde{V}_{iT} - V_{iT}  & = -\E[B_{iT}]^{-1}\E[A_{iT}]\E[B_{iT}]^{-1} (\widehat{B}_{iT} - \E[B_{iT}])\E[B_{iT}]^{-1} \nonumber
\\
&\quad -\E[B_{iT}]^{-1} (\widehat{B}_{iT} - \E[B_{iT}]) \E[B_{iT}]^{-1}\E[A_{iT}]\E[B_{iT}]^{-1} \nonumber
\\
& \quad  + O\Big(\|\widehat{B}_{iT} - \E[B_{iT}]\|^2+\|\widetilde{A}_{iT}-(\E[A_{iT}])\|\Big) \nonumber
\\
%& = 2 \E[B_{iT}] (\E[A_{iT}])^{-1} (\widehat B_{iT}-\E[B_{iT}]) + R_{3i} \nonumber
%\\
& =:  \frac{1}{T} \sum_{t=1}^T \eta_{iT}(\Z_{it}, Y_{it}) + R_{3i}, 
\end{align}
where  
\begin{align*}
& \eta_{iT}(\Z_{it}, Y_{it})  
\\
:=& - \E[B_{iT}]^{-1}\E[A_{iT}]\E[B_{iT}]^{-1} \Big\{\sum_{k=1}^{p+1}  \frac{\psi_{i,\tau+d_T,k}(\Z_{it}, Y_{it}) - \psi_{i,\tau - d_T,k}(\Z_{it}, Y_{it})}{2d_T} \E\Big[\frac{(B_i^{-1}\Z_{i1})_k \Z_{i1}\Z_{i1}^\top}{e_{i1}^2} \Big] \Big\} \E[B_{iT}]^{-1}
\\
&- \E[B_{iT}]^{-1} \Big\{\sum_{k=1}^{p+1}  \frac{\psi_{i,\tau+d_T,k}(\Z_{it}, Y_{it}) - \psi_{i,\tau - d_T,k}(\Z_{it}, Y_{it})}{2d_T} \E\Big[\frac{(B_i^{-1}\Z_{i1})_k \Z_{i1}\Z_{i1}^\top}{e_{i1}^2} \Big] \Big\} \E[B_{iT}]^{-1}\E[A_{iT}]\E[B_{iT}]^{-1}. 
%\\
%& +  2 \E[B_{iT}] (\E[A_{iT}])^{-1}\Big(f_{it} \Z_{it} \Z_{it}^\top - \E[f_{i1} \Z_{i1}\Z_{i1}^\top]\Big).
%\\
%& +  \sum_{j=1}^{m_T} \frac{|T_j|^2}{T^2}\E[B_{iT}](\E[A_{iT}])^{-1}\Big( \sum_{k=1}^{p+1} D_{i,j,k}(\tau) (B_{i}^{-1}\psi_{i,\tau}(\Z_{it}, Y_{it}))_k \Big) (\E[A_{iT}])^{-1}\E[B_{iT}]
%\\
%& +  \tau(1-\tau)\E[B_{iT}](\E[A_{iT}])^{-1} \Z_{it}\Z_{it}^\top (\E[A_{iT}])^{-1}\E[B_{iT}]
\end{align*}
%and
%\begin{align*}
%\eta_{iT}^{(2)}(\Z_{it}, Y_{it}, \Z_{i t+j}, Y_{it+j}) :=~&
%\E[B_{iT}](\E[A_{iT}])^{-1}\Big( w_{it} w_{it+j}^\top + w_{it+j} w_{it}^\top  \Big) (\E[A_{iT}])^{-1}\E[B_{iT}].
%%\\
%%&~+ \E[B_{iT}](\E[A_{iT}])^{-1}( A_{iT} - \E[A_{iT}]) (\E[A_{iT}])^{-1}\E[B_{iT}].
%\end{align*} 
and by Lemma~\ref{lem:aitbit2} and Lemma~\ref{lem:ait_beta} we obtain that under (MD) 
\begin{equation*}
\sup_i \|R_{3i}\|= O_p\Big(\frac{(\log T)^2}{Td_T} + \Big(\frac{m_T^3 \log T}{T} \Big)^{1/2}\Big).
\end{equation*}
Since $\widetilde{W}_{iT}$ are the lower $p\times p$ sub-matrix of $\widetilde{V}_{iT}$, we have 
\begin{align}
\widetilde{W}_{iT}^{-1} - W_{iT}^{-1} &= - W_{iT}^{-1} (\widetilde{W}_{iT} - W_{iT}) W_{iT}^{-1} + O\Big( \|\widetilde{W}_{iT}- W_{iT}\|^2\Big)\\
& =: \frac{1}{T}\sum_{t=1}^T \widetilde{\eta}_{iT}(\Z_{it}, Y_{it}) + \widetilde{R}_{3i}
\end{align}
and the terms $\widetilde{\eta}_{iT}(\Z_{it}, Y_{it}) $ are independent across $i$ and centered and 
\[
\sup_i\|\widetilde{R}_{3i}\| =O_p\Big(\frac{(\log T)^2}{Td_T} + \Big(\frac{m_T^3 \log T}{T} \Big)^{1/2}\Big).
\]
Let $\phi_{i,\tau}(\Z_{it}, Y_{it})$ denote the last $p$ entries of $-B_i^{-1} \psi_{i,\tau}(\Z_{it}, Y_{it})$. Then by Lemma \ref{VCClemma_beta}
\begin{align*}
\widehat{\bm{\beta}}_{i}(\tau) - \bm{\beta}_{0}(\tau) & = \frac{1}{T}\sum_{t=1}^{T} \phi_{i,\tau} (\Z_{it}, Y_{it}) + \widetilde{R}_{iT}^{(1)}(\tau) + \widetilde{R}_{iT}^{(2)}(\tau),
\end{align*}
where $\widetilde{R}_{iT}^{(k)}(\tau), k=1,2$ denote the vectors that contain the last $p$ entries of $R_{iT}^{(k)}(\tau), k=1,2$ from Lemma \ref{VCClemma_beta}. Note that
\begin{align*}
&\frac{1}{n}\sum_{i=1}^{n} \widetilde{W}_{iT}^{-1} \Big( \widehat{\bm{\beta}}_{i}(\tau) - \bm{\beta}_{0}(\tau) \Big) 
\\
=~& \frac{1}{n}\sum_{i=1}^{n} \Big(\frac{1}{T} \sum_{t=1}^{T} \widetilde{\eta}_{iT}(\Z_{it}, Y_{it}) \Big) \Big ( \frac{1}{T} \sum_{t=1}^{T} \phi_{i,\tau}(\Z_{it}, Y_{it}) \Big)
 + \frac{1}{n} \sum_{i=1}^{n} W_{iT}^{-1} \Big( \frac{1}{T}\sum_{t=1}^{T} \phi_{i,\tau}(\Z_{it}, Y_{it}) + \widetilde{R}_{iT}^{(1)}(\tau)\Big) 
\\ 
& + \frac{1}{n} \sum_{i=1}^{n} W_{iT}^{-1} \widetilde{R}_{iT}^{(2)} + \frac{1}{n} \sum_{i=1}^{n} \Big( \frac{1}{T}\sum_{t=1}^{T} \widetilde{\eta}_{iT}(\Z_{it}, Y_{it}) +\widetilde{R}_{3i}\Big) \Big(\widetilde{R}_{iT}^{(1)}(\tau) + \widetilde{R}_{iT}^{(2)}(\tau)\Big)
\\
&  + \frac{1}{n} \sum_{i=1}^{n} \widetilde{R}_{3i} \Big (\frac{1}{T}\sum_{t=1}^{T} \phi_{i,\tau}(\Z_{it}, Y_{it})\Big),
\end{align*}
and 
\begin{align*}
&\frac{1}{n} \sum_{i=1}^{n} W_{iT}^{-1} \widetilde{R}_{iT}^{(2)}(\tau)= O_p\Big(\frac{\log T}{T}\Big),
\\
& \frac{1}{n} \sum_{i=1}^{n} \widetilde{R}_{3i} \Big(\frac{1}{T}\sum_{t=1}^{T} \phi_{i,\tau}(\Z_{it}, Y_{it})\Big) = O_p\Big(\Big(\frac{(\log T)^2}{Td_T} + \Big(\frac{m_T^3 \log T}{T} \Big)^{1/2}\Big)\frac{(\log T)^{1/2}}{T^{1/2}}\Big),
\\
&  \frac{1}{n} \sum_{i=1}^{n} \Big( \frac{1}{T}\sum_{t=1}^{T} \widetilde{\eta}_{iT}(\Z_{it}, Y_{it}) + \widetilde{R}_{3i}\Big) \Big(\widetilde{R}_{iT}^{(1)}(\tau) + \widetilde{R}_{iT}^{(2)}(\tau)\Big) = O_p\Big(\Big(\frac{\log T}{(Td_T)^{1/2}}+\Big(\frac{m_T^3 \log T}{T} \Big)^{1/2}\Big) \frac{(\log T)^{5/4}}{T^{3/4}}\Big),
\end{align*}
where the first line is a direct consequence of Lemma \ref{VCClemma_beta} and the second line follows by \eqref{Vit1_beta} and the bound $\sup_i \Big|\frac{1}{T}\sum_{t=1}^{T} \phi_{i,\tau}(\Z_{it}, Y_{it})\Big| = O_p\Big(\frac{(\log T)^{1/2}}{T^{1/2}}\Big)$. The last line is a combination of Lemma~\ref{VCClemma_beta}, Lemma~\ref{lem:aitbit2}, Lemma~\ref{lem:ait_beta} and~\eqref{Vit1_beta}.  

Thus under (MD) %\textcolor{red}{\textbf{double-check rate calculations}}
\begin{align*}
\frac{1}{n} \sum_{i} \widehat{W}_{iT}^{-1} (\widehat{\bm{\beta}}_{i}(\tau) - \bm{\beta}_{0}(\tau)) 
& = \frac{1}{n}\sum_{i=1}^{n} \Big(\frac{1}{T} \sum_{t=1}^{T} \widetilde{\eta}_{iT}(\Z_{it}, Y_{it}) \Big) \Big( \frac{1}{T} \sum_{t=1}^{T} \phi_{i,\tau}(\Z_{it}, Y_{it}) \Big)\\
&\quad + \frac{1}{n} \sum_{i=1}^{n} W_{iT}^{-1} \Big( \frac{1}{T}\sum_{t=1}^{T} \phi_{i,\tau}(\Z_{it}, Y_{it}) + \widetilde{R}_{iT}^{(1)}(\tau)\Big) + o_p\Big(\frac{1}{\sqrt{nT}}\Big).
\end{align*}
%\end{lemma}
To bound the first term, denote by $\theta_t$ the $(k, \ell)$'s entry of the matrix $\widetilde{\eta}_{iT}(\Z_{it}, Y_{it})$ and let $\xi_{t}$ denote the $j$-th entry of the vector $\phi_{i,\tau}(\Z_{it}, Y_{it})$ (the dependence on $i$ is dropped for notational convenience). Note that for arbitrary $j,k,\ell$ we have $\E[\theta_t] = 0, \E[\xi_t] = 0$. 

Also note that there exists a constant $C$ independent of $j,k,\ell$ such that for all sufficiently large $T$ we have $|\xi_t| \leq C$ almost surely, $|\theta_t |\leq Cd_T^{-1}$, $\E|\theta_t|\leq C$ and thus $\E|\xi_t\theta_t|\leq C \E|\theta_t| \leq C^2$, $\E[\theta_{t}^2] \leq Cd_T^{-1} \E|\theta_t| \leq C^2 d_T^{-1}$ and $Var(\xi_t) \leq C^2$, which implies for any $t_1,t_2$
\begin{equation*}
\E[\xi_{t_1}^2\theta_{t_2}^2] \leq C^2\E[\theta_{t_2}^2] \leq C^3 d_T^{-1}.
\end{equation*}
Applying Lemma C.1 of \cite{KatoGalvaoMontes-Rojas12} with $\delta = 1$ we find that $|\E[\theta_{t_1} \xi_{t_2}]| \leq \widetilde{C} d_T^{-1/2} \beta(|t_1-t_2|)^{1/2}$ and thus we have uniformly in $i,k,\ell,j$
\begin{align*}
\Big| \E\Big[\Big(\frac{1}{T} \sum_{t=1}^{T} \theta_t \Big)\Big ( \frac{1}{T} \sum_{t=1}^{T} \xi_t\Big) \Big]\Big| %&= \frac{1}{T^2} \E\Big[\sum_{t_1,t_2}\theta_t \xi_t\Big] \\
& \leq \frac{1}{T^2}\sum_{t_1}\sum_{t_2} |\E[\theta_{t_1} \xi_{t_2}]|
\\
& \leq \frac{1}{T^2}\sum_{t_1} \sum_{t_2} \min \Big\{C^2, \widetilde{C}d_T^{-1/2}\beta(|t_1-t_2|)^{1/2}\Big\}
\\
&  \leq \frac{2}{T}\sum_{t=1}^\infty \min \Big\{C^2, \widetilde{C}d_T^{-1/2}\beta(t)^{1/2}\Big\} = O\Big(\frac{\log T}{T}\Big),
\end{align*}
where the last inequality follows by similar arguments as the proof of Lemma~\ref{varb} since $|\log d_T| = O(\log T)$.

Next note that, since the variables $\theta_t,\xi_t$ are centered, we have by elementary properties of cumulants
\begin{align*}
\E[ \theta_{t_1}\theta_{t_2}\xi_{s_1}\xi_{s_2}] =& cum(\theta_{t_1},\theta_{t_2},\xi_{s_1},\xi_{s_2}) + \E[\theta_{t_1}\theta_{t_2}]\E[\xi_{s_1}\xi_{s_2}] + \E[\theta_{t_1}\xi_{s_2}]\E[\xi_{s_1}\theta_{t_2}]
+ \E[\theta_{t_1}\xi_{s_1}]\E[\theta_{t_2}\xi_{s_2}].
\end{align*}
Let $m(t_1,t_2,s_1,s_2)$ denote the maximal distance between any two of the indices $s_1,s_2,t_1,t_2$.

Then by similar arguments as is the proof of Proposition 3.1 in \cite{KVDH2016} one can show that for some constant $C$ independent of $i,T$ (note that $d_T \theta_t$ is uniformly bounded and that cumulants are linear in each of their arguments and apply Lemma 2.6 of \cite{Yu1994} instead of $\alpha$-mixing to bound terms of the form $\E[\psi_1\psi_2] - \E[\psi_1]\E[\psi_2]$ where $\psi_1,\psi_2$ denote random variables that can be products of the $\theta_{t_1},\xi_{s_i}$) to obtain
\begin{equation*}
|cum(\theta_{t_1},\theta_{t_2},\xi_{s_1},\xi_{s_2})| \leq C d_T^{-2} \beta(\lfloor m(t_1,t_2,s_1,s_2)/4\rfloor).
\end{equation*}
Moreover, by applying Lemma 2.6 of \cite{Yu1994} we find that 
\begin{align*}
|\E[\theta_{t_1}\theta_{t_2}]| &\leq C d_T^{-2} \beta(|t_1-t_2|),
\\
|\E[\xi_{t_1}\xi_{t_2}]| &\leq C\beta(|t_1-t_2|).
\end{align*} 
Thus
\begin{multline*}
Var\Big[\Big(\frac{1}{T} \sum_{t=1}^{T} \theta_t \Big) \Big ( \frac{1}{T} \sum_{t=1}^{T} \xi_t\Big) \Big] 
\leq  \E\Big[\Big(\frac{1}{T^2}\sum_{t_1}\sum_{t_2} \theta_{t_1} \xi_{t_2} \Big)^2\Big] 
= \frac{1}{T^4} \sum_{t_1}\sum_{ t_2}\sum_{s_1}\sum_{s_2}\E[ \theta_{t_1}\theta_{t_2}\xi_{s_1}\xi_{s_2}]
\\
\leq  \frac{1}{T^4} \sum_{t_1}\sum_{ t_2}\sum_{s_1}\sum_{s_2} C (d_T^{-2} \beta(\lfloor m(t_1,t_2,s_1,s_2)/4\rfloor) + O\Big(\frac{1}{T^2d_T^2}\Big)
= O\Big(\frac{1}{T^2d_T^2}\Big),
\end{multline*}
since
\begin{align*}
\frac{1}{T^4} \sum_{t_1,t_2,s_1,s_2} C \beta(\lfloor m(t_1,t_2,s_1,s_2)/4\rfloor) &\leq \frac{C}{T^4} \sum_{m \in \mathbb{Z}} |\{(s_1,s_2,t_1,t_2): m(t_1,t_2,s_1,s_2) = m\}|\beta(\lfloor m/4 \rfloor)
\\
&\leq \frac{C}{T^4} \sum_{m \in \mathbb{Z}} T (2m+1)^3\beta(\lfloor m/4 \rfloor) = O\Big(\frac{1}{T^3}\Big).
\end{align*}
Now note that by assumptions the collections of random variables $\{\widetilde{\eta}_{iT}(\Z_{it}, Y_{it}), \phi_{i,\tau}(\Z_{it}, Y_{it}): t=1,...,T\}$ are independent for different values of $i$ and thus we have under (MD)
\begin{align*}
& \E\Big[\frac{1}{n}\sum_{i=1}^{n} \Big(\frac{1}{T} \sum_{t=1}^{T} \widetilde{\eta}_{iT}(\Z_{it}, Y_{it}) \Big) \Big ( \frac{1}{T} \sum_{t=1}^{T} \phi_{i,\tau}(\Z_{it}, Y_{it}) \Big)\Big] = O\Big(\frac{\log T}{T}\Big)= o\Big(\frac{1}{\sqrt{nT}}\Big),
\\
& Var\Big[\frac{1}{n}\sum_{i=1}^{n} \Big(\frac{1}{T} \sum_{t=1}^{T} \widetilde{\eta}_{iT}(\Z_{it}, Y_{it}) \Big) \Big ( \frac{1}{T} \sum_{t=1}^{T} \phi_{i,\tau}(\Z_{it}, Y_{it}) \Big)\Big] = O\Big(\frac{1}{n}\Big) O\Big(\frac{1}{T^2d_T^2}\Big) = o\Big(\frac{1}{nT}\Big).
\end{align*}
This shows
\begin{equation*}
\frac{1}{n}\sum_{i=1}^{n} \Big(\frac{1}{T} \sum_{t=1}^{T} \widetilde{\eta}_{iT}(\Z_{it}, Y_{it}) \Big) \Big ( \frac{1}{T} \sum_{t=1}^{T} \phi_{i,\tau}(\Z_{it}, Y_{it}) \Big) = o_p\Big(\frac{1}{\sqrt{nT}}\Big).
\end{equation*}
The fact that
\begin{equation*}
\frac{1}{n} \sum_{i=1}^{n} W_{iT}^{-1}\widetilde{R}_{iT}^{(1)}(\tau) = o_p\Big(\frac{1}{\sqrt{nT}}\Big)
\end{equation*}
follows by exactly the same arguments as in the independent case (see the proof of Theorem~\ref{th1}) once we replace the use of Lemma~\ref{VCClemma} there by Lemma~\ref{VCClemma_beta} and note that $\sup_i \|W_{iT}\| = O(1)$ with $W_{iT}$ being deterministic.
\begin{equation*}
\frac{1}{n} \sum_{i} \widetilde{W}_{iT}^{-1} (\widehat{\bm{\beta}}_{i}(\tau) - \bm{\beta}_{0}(\tau)) = \frac{1}{n} \sum_{i=1}^{n} W_{iT}^{-1} \frac{1}{T}\sum_{t=1}^{T} \phi_{i,\tau}(\Z_{it}, Y_{it}) + o_p\Big(\frac{1}{\sqrt{nT}}\Big).
\end{equation*}
The weak convergence in \eqref{eq:main_dep} follows by similar arguments as in the independent case (see the proof of Theorem \ref{th1}) once we observe that $\sup_i \| W_{iT}^{-1} - W_i^{-1}\| = o(1)$, details are omitted for the sake of brevity. \hfill $\Box$

\bigskip

\noindent \textbf{Proof of Lemma \ref{lemma:CovD}.}
The proof follows directly from Lemma \ref{lem:ait_beta}.
$\Box$

%bibiliography

%\newpage

\section{Proofs of technical results}

\subsection{Proofs of technical results related to section~\ref{sec:indep}}

\begin{proof}[Proof of Lemma~\ref{lemma1}]
Under Assumptions (A1), (A2), we have
\begin{align*}
\|\E[B_{iT}] - B_i \|& =\Big \|\E\Big[\Big(f_{i1} - f_{Y\mid \Z}(q_{i, \tau}(\Z_{i1})\mid\Z_{i1})\Big)\Z_{i1}\Z_{i1}^\top\Big]\Big\|\\
& = \Big\|\E\Big[\Big(\frac{2d_T}{q_{i,\tau+d_T}(\Z_{i1})-q_{i,\tau-d_T}(\Z_{i1})} - f_{Y\mid \Z}(q_{i, \tau}(\Z_{i1})\mid\Z_{i1})\Big)\Z_{i1}\Z_{i1}^\top\Big]\Big\|\\
&\leq M^2 \E\Big|\frac{2d_T}{q_{i,\tau+d_T}(\Z_{i1})-q_{i,\tau-d_T}(\Z_{i1})} - f_{Y\mid \Z}(q_{i, \tau}(\Z_{i1})\mid\Z_{i1})\Big|= o(1).
\end{align*}
The statement for $W_{iT}$ follows from its definition.
\end{proof}

\medskip

\begin{proof}[Proof of Lemma~\ref{lemma2}]
By definition, 
\begin{equation*}
\widehat{e}_{is} - e_{is} = \Z_{is}^\top \Big(\big(\widehat{\bm{\gamma}}_{i}(\tau + d_T) - \bm{\gamma}_{i0}(\tau + d_T) \big)- \big(\widehat{\bm{\gamma}}_{i}(\tau - d_T) - \bm{\gamma}_{i0}(\tau - d_T) \big)\Big)/2d_T.
\end{equation*}
We know from Lemma \ref{VCClemma} that 
\begin{equation*}
\Z_{is}^\top (\widehat{\bm{\gamma}}_{i}(\tau \pm d_T) - \bm{\gamma}_{i0}(\tau \pm d_T)) = -\frac{1}{T} \Z_{is}^\top B_i^{-1}\sum_{t=1}^{T}  \Z_{it} \Big(\1 \{Y_{it} \leq q_{i, \tau \pm d_T}(\Z_{it})\} - (\tau \pm d_T)\Big) + O_p\Big(\frac{(\log T)^{3/4}}{T^{3/4}}\Big).
\end{equation*}
Hence with $U_{it} := F_{Y|\Z}(Y_{it}|\Z_{it}) \sim U[0,1]$ independent of $\Z_{it}$, 
\begin{equation}\label{eq:eit1}
\widehat{e}_{is} - e_{is} = -\frac{1}{2Td_T} \Z_{is}^\top B_i^{-1}\sum_{t=1}^{T} \Z_{it} \Big( \1 \{U_{it} \leq \tau + d_T\} - \1 \{U_{it} \leq \tau - d_T\} - 2d_T\Big)+ O_p\Big( \frac{(\log T)^{3/4}}{d_T T^{3/4}}\Big).
\end{equation}
Define the vectors 
\begin{equation*}
M_{it} :=  \Z_{it} \Big( \1 \{U_{it} \leq \tau + d_T\} - \1 \{U_{it} \leq \tau - d_T\} - 2d_T\Big)/2d_T.
\end{equation*}
Fix an arbitrary $k \in \{1,...,p+1\}$ and let $M_{it,k}$ denote the $k$-th entry of the vector $M_{it}$. It is clear that $\E[M_{it,k}] = 0$ and $Var[M_{it,k}] = \frac{\mathcal{C}_1}{d_T}$ for some constant $\mathcal{C}_1$  under Assumption (A1). Under Assumption (A1), we also have $\sup_{i,t,k} |M_{it,k}| \leq \mathcal{C}_2/d_T$ for some constant $\mathcal{C}_2>0$. Apply the Bernstein inequality to obtain 
\begin{align*}
\P\Big(\Big|\sum_{t=1}^{T} M_{it,k}\Big| > T\epsilon\Big) 
&\leq 2\exp \Big(-\frac{\frac{1}{2}T^2\epsilon^2}{\sum_{t=1}^{T} \E[M_{it,k}^2] + \frac{1}{3}\mathcal{C}_2 (d_T)^{-1} T\epsilon}\Big)
\\
& = 2\exp\Big(-\frac{\frac{1}{2}T^2\epsilon^2}{\mathcal{C}_1T(d_T)^{-1}+ \frac{1}{3}\mathcal{C}_2 (d_T)^{-1} T\epsilon}\Big).
\end{align*}
Take $\epsilon = \mathcal{C}_3 T^{-1/2}d_T^{-1/2}(\log T)^{1/2}$ for a constant $\mathcal{C}_3$ which will be determined later. Under assumption (MI), $\frac{\log T}{Td_T} \to 0$, so that $\epsilon\to 0$ and the right hand side of the inequality becomes
\begin{equation*}
2\exp\Big(-\frac{1}{2}\frac{\mathcal{C}_3^2d_T^{-1}\log T}{\mathcal{C}_1d_T^{-1} + \frac{1}{3}\mathcal{C}_2 d_T^{-1}T^{-1/2}d_T^{-1/2}(\log T)^{1/2}}\Big)\leq 2\exp\Big(-\frac{1}{4} \mathcal{C}_3^2 \log T/\mathcal{C}_1\Big)
\end{equation*}
where the last inequality holds for $T$ sufficiently large.% since $T^{-1/2}d_T^{-1/2}(\log T)^{1/2} \to 0$ and thus $\frac{1}{3}\mathcal{C}_2 d_T^{-1}T^{-1/2}d_T^{-1/2}(\log T)^{1/2} \leq \mathcal{C}_1 d_T^{-1}$ eventually. 

Now under Assumption (I) we have $\log n \leq \mathcal{C}_4 \log T$ for some constant $\mathcal{C}_4>0$, so that
\begin{align*}
\P\Big(\sup_k\sup_i \Big|\frac{1}{T}\sum_{t=1}^{T} M_{it,k}\Big|>\epsilon \Big) &\leq \sum_{i} \P\Big( \Big|\frac{1}{T}\sum_{t=1}^{T} M_{it,k}\Big|>\epsilon \Big)
\\
& \leq 2n\exp\Big( -\frac{\mathcal{C}_3^2}{4\mathcal{C}_1}  \log T\Big) \to 0,
\end{align*}
by taking $\mathcal{C}_3^2 > 4\mathcal{C}_1\mathcal{C}_4$. Hence 
\begin{equation*}
\sup_i \Big\|\frac{1}{T}\sum_{t=1}^{T} M_{it}\Big\| = O_p\Big( T^{-1/2}d_T^{-1/2}(\log T)^{1/2}\Big),
\end{equation*}
and combined with~\eqref{eq:eit1} and the fact that $\sup_i \|B_i^{-1}\| = O(1)$ (here $\|\cdot\|_{op}$ denotes the operator norm) we obtain 
\begin{equation*}
\sup_{i,s}|\widehat{e}_{is} - e_{is}| = O_p\Big(\frac{(\log T)^{1/2}}{(Td_T)^{1/2}} + \frac{(\log T)^{3/4}}{d_T T^{3/4}}\Big) = O_p\Big(\frac{(\log T)^{1/2}}{(Td_T)^{1/2}}\Big),
\end{equation*}
where the last result follows from assumption (MD) and we recall the definition $a_T = \frac{(\log T)^{1/2}}{(Td_T)^{1/2}}$. 

The second statement follows from the Taylor expansion
\begin{equation*}
\widehat{f}_{it} - f_{it} = \widehat{e}_{it}^{-1} -e_{it}^{-1} = \frac{e_{it} - \widehat{e}_{it}}{e_{it}^2} + O\Big(|\widehat{e}_{it} - e_{it}|^2\Big),
\end{equation*}
where the remainder term is uniform in $i,t$ since under (A2)
\begin{align}
\inf_{i,t} e_{i,t} &=  \nonumber
\inf_{i,t} \frac{q_{i,\tau + d_T}(\Z_{it}) - q_{i,\tau - d_T}(\Z_{it})}{2d_T} 
\geq \inf_{i,t}\inf_{|\eta - \tau| 
\leq d_T} \frac{1}{f_{Y\mid \Z}(q_{i,\eta}(\Z_{it})\mid \Z_{it})}
\\
&= \frac{1}{\sup_{i,t}\sup_{\eta, \Z}f_{Y\mid\Z}(q_{i,\eta}(\Z_{it})\mid \Z_{it})} \geq 1/f_{max}, \label{eitlowb}
\end{align}
almost surely. Finally, the bound in~\eqref{psiit} follows by similar arguments as the bound for $\widehat{e}_{it} - e_{it}$ and we omit the details.
%Hence 
%\[
%\sup_{i,t}|\widehat f_{it} - f_{it} - \frac{e_{it} - \widehat e_{it}}{e_{it}^2}| = O_p(a_n^2).
%\]
\end{proof}

\begin{proof}[Proof of Lemma~\ref{lem:aitbit}]
First, note that 
%\begin{align*}
%\widehat B_{iT} - \E[B_{iT}]& =\widehat B_{iT} - B_{iT} + B_{iT}-\E[B_{iT}]\\
%& =  \frac{1}{T}\sum_{t=1}^{T} (\widehat f_{it} -f_{it})\Z_{it}\Z_{it}^\top + \frac{1}{T}\sum_{t=1}^{T} f_{it} \Z_{it} \Z_{it}^\top - \E[f_{i1} \Z_{i1}\Z_{i1}^\top]
%\end{align*}
%and 
\begin{align*}
&\widehat{B}_{iT} - B_{iT} = \frac{1}{T}\sum_{t=1}^{T} (\widehat{f}_{it} -f_{it})\Z_{it}\Z_{it}^\top  
\\
&=  - \frac{1}{T}\sum_{t=1}^{T} \frac{\Z_{it}^\top[\widehat{\bm{\gamma}}_{i}(\tau + d_T)- \bm{\gamma}_{i0}(\tau + d_T) - \{ \widehat{\bm{\gamma}}_{i}(\tau-d_T) - \bm{\gamma}_{i0}(\tau-d_T)\}]}{2e_{it}^2d_T}  (\Z_{it}\Z_{it}^\top)
%\frac{4 d_T^2 \Z_{it} (\Z_{it}\Z_{it}^\top)}{\left (q_{i,\tau+d_T}(\Z_{it}) - q_{i,\tau-d_T}(\Z{it})\right)^2}
+  R_{1i}
\\
& = \frac{1}{T}\sum_{t=1}^{T} \sum_{k=1}^{p+1} \frac{\psi_{i,\tau+d_T,k}(\Z_{it}, Y_{it}) - \psi_{i,\tau - d_T,k}(\Z_{it}, Y_{it})}{2d_T} \E\Big[\frac{(B_i^{-1}\Z_{i1})_k }{e_{i1}^2} \Z_{i1}\Z_{i1}^\top\Big]
% \E \left [ \frac{4d_T^2 \Z_{i1}(\Z_{i1}\Z_{i1}^\top)}{\left (q_{i,\tau+d_T}(\Z_{i1}) - q_{i,\tau-d_T}(\Z_{i1})\right)^2}\right] 
+ R_{2i},
\end{align*}
where by Lemma~\ref{lemma2} and Assumption (A1)
\begin{equation*}
\sup_{i} \|R_{1i} \| = O_p(a_T^2).
\end{equation*}
Next, observe that \eqref{Bit2} follows from the uniform boundedness of $B_i^{-1}$ (by (A1), (A3)), the uniform boundedness of $\E\Big[\frac{(B_i^{-1}\Z_{i1})_k \Z_{i1}\Z_{i1}^\top}{e_{i1}^2} \Big]$ which follows by Assumption (A1) and~\eqref{eitlowb}, and~\eqref{psiit}. Now we will handle $R_{2i}$. First define the matrix
%let $N_{it,j,k,\ell}$ denote the $j$'th element of the vector 
\begin{equation*}
N_{it,k} :=  \frac{(B_i^{-1}\Z_{i1})_k \Z_{i1}\Z_{i1}^\top}{e_{i1}^2}  - \E\Big[\frac{(B_i^{-1}\Z_{i1})_k \Z_{i1}\Z_{i1}^\top}{e_{i1}^2} \Big],
%N_{it} :=  \frac{\Z_{it} (\Z_{it}\Z_{it}^\top)}{e_{it}^2} - \E\Big[\frac{1}{e_{i1}^2} \Z_{i1} (\Z_{i1}\Z_{i1}^\top)\Big],
%\frac{4 d_T^2 \Z_{it} (\Z_{it}\Z_{it}^\top)}{\left (q_{i,\tau+d_T}(\Z_{it}) - q_{i,\tau-d_T}(\Z{it})\right)^2} - \E \left [ \frac{4d_T^2 \Z_{i1}(\Z_{i1}\Z_{i1}^\top)}{\left (q_{i,\tau+d_T}(\Z_{i1}) - q_{i,\tau-d_T}(\Z_{i1})\right)^2}\right], 
\end{equation*}
then $\E[N_{it,k}] = 0$ and, denoting by $N_{it,k,j,\ell}$ the $(j,\ell)$-th entry of $N_{it,k}$,  
%since 
%\[
%\frac{\Z_{it}^\top (\gamma_i(\tau + d_T) - \gamma_i(\tau - d_T))}{2d_T} \geq \underset{|\eta - \tau| \leq d_T}{\inf} 1/f_{Y\mid \Z}(q_{i,\eta}(\Z_{it})\mid \Z_{it}) = 1/\sup_{\eta, \Z}f_{Y\mid\Z}(q_{i,\eta}(\Z)\mid \Z) = 1/f_{max}
%\]
by \eqref{eitlowb} and Assumption (A1) $\sup_{i,t,k,j,\ell}|N_{it,k,j,\ell}| \leq \mathcal{C}_5$ and $\sup_{i,t,k,j,\ell}Var[N_{it,k,j,\ell}] \leq \mathcal{C}_6$ for some constants $\mathcal{C}_5$, $\mathcal{C}_6 > 0$. Apply the Bernstein inequality to obtain 
\begin{align*}
\P\Big(\Big|\sum_{t=1}^{T} N_{it,k,j,\ell}\Big| > T \epsilon_2\Big) & \leq 2\exp \Big(-\frac{\frac{1}{2}T^2 \epsilon_2^2}{\sum_{t=1}^{T} \E[N_{it,k,j,\ell}^2] + \frac{1}{3}\mathcal{C}_5 T \epsilon_2}\Big)\\
& \leq 2\exp\Big(-\frac{\frac{1}{2}T^2 \epsilon_2^2}{T\mathcal{C}_6 + \frac{1}{3}\mathcal{C}_5 T\epsilon_2}\Big).
\end{align*}
Take $\epsilon_2 = \mathcal{C}_7 T^{-1/2}(\log T)^{1/2}$ for some constant $\mathcal{C}_7>0$ to be determined later and the right hand side of the inequality becomes, for $T$ sufficiently large,
\begin{equation*}
2\exp\Big(-\frac{1}{2} \frac{\mathcal{C}_7^2\log T}{\mathcal{C}_6 +\frac{1}{3}\mathcal{C}_5 T^{-1/2}(\log T)^{1/2}}\Big)\leq 2\exp\Big(-\frac{\mathcal{C}_7^2\log T}{4\mathcal{C}_6} \Big).
\end{equation*}
Now under Assumption (I) $\log n \leq \mathcal{C}_4 \log T$. Choose $\mathcal{C}_7^2 > 4\mathcal{C}_6\mathcal{C}_4$ to obtain for every $j,k,\ell$, 
\begin{equation*}
\P\Big(\sup_i \Big|\frac{1}{T}\sum_{t=1}^{T} N_{it,k,j,\ell}\Big|> \epsilon_2) \leq \sum_{i=1}^{n} \P\Big(\Big|\frac{1}{T}\sum_{t=1}^{T} N_{it,k,j,\ell}\Big|>\epsilon_2\Big) = 2n \exp\Big(-\frac{\mathcal{C}_7^2\log T}{4\mathcal{C}_6} \Big)\to 0,
\end{equation*}
and thus $\sup_{i,k} \Big\| \frac{1}{T}\sum_{t=1}^{T} N_{it,k}\Big\| = O_p\Big((\log T)^{1/2}/T^{1/2}\Big)$. 

Now using the notations used in Lemma \ref{VCClemma} we have 
\begin{align*}
R_{2i}  =&~~ R_{1i} + \sum_{k=1}^{p+1} \Big( \frac{1}{T}\sum_{s=1}^T \frac{\psi_{i,\tau+d_T,k}(\Z_{is}, Y_{is}) - \psi_{i,\tau - d_T,k}(\Z_{is}, Y_{is})}{2d_T}\Big)  \Big(\frac{1}{T}\sum_{t=1}^{T} N_{it,k}\Big)\\
&~~ - \sum_{k=1}^{p+1} \Big (\frac{R_{iT,k}^{(1)}(\tau + d_T) + R_{iT,k}^{(2)}(\tau + d_T)}{2d_T}\Big) \Big ( \frac{1}{T} \sum_{t=1}^T Z_{it,k} \Z_{it}\Z_{it}^\top\Big)\\
&~~ + \sum_{k=1}^{p+1} \Big (\frac{R_{iT,k}^{(1)}(\tau - d_T) + R_{iT,k}^{(2)}(\tau - d_T)}{2d_T}\Big) \Big( \frac{1}{T} \sum_{t=1}^T Z_{it,k} \Z_{it}\Z_{it}^\top\Big). %\Big ( \E\Big[\frac{(B_i^{-1}\Z_{i1})_k \Z_{i1}\Z_{i1}^\top}{e_{i1}^2} \Big]  + \frac{1}{T}\sum_{t=1}^{T} N_{it,k}\Big)
\end{align*}
Since $\sup_{i}\sup_{ \eta\in \T} (\|R_{iT}^{(1)}(\eta)\| + \|R_{iT}^{(2)}(\eta)\|)= O_p\Big(\Big(\frac{\log T}{T}\Big)^{3/4}\Big)$ by Lemma \ref{VCClemma}, $\sup_{i,k} \Big\| \frac{1}{T} \sum_{t=1}^T Z_{it,k} \Z_{it}\Z_{it}^\top\Big\| = O_p(1)$, and $\sup_i \|R_{1i}\| = O_p(a_n^2)$ as proved in Lemma \ref{lemma2}, and using~\eqref{psiit} combined with the bound on $\sup_{i,k} \Big\| \frac{1}{T}\sum_{t=1}^{T} N_{it,k}\Big\|$ derived above we obtain 
\begin{equation*}
\sup_i \|R_{2i}\| = O_p\Big(a_T^2 + a_T \Big(\frac{\log T}{T}\Big)^{1/2} + \frac{(\log T)^{3/4}}{d_T T^{3/4}}\Big).
\end{equation*}
The statement in~\eqref{Ait1} follows by an application of the Bernstein inequality which is similar to the one given above. Finally, for~\eqref{Bit1} note that by (A1) and (A2) $\sup_i \|\E[B_{iT}]^{-1}\| < \infty$ and 
	\[
{B}_{iT}^{-1} - \E[B_{iT}]^{-1} = - \E[B_{iT}]^{-1}({B}_{iT} - \E[B_{iT}])\E[B_{iT}]^{-1} + O\Big(\|{B}_{iT} - \E[B_{iT}]\|^2\Big)
	\] 
	and
\begin{align*}
\sup_i \|B_{iT} - \E[B_{iT}]\| = O_p\Big(\frac{(\log T)^{1/2}}{T^{1/2}}\Big)
\end{align*}
by an application of the Bernstein inequality which is similar to the one given above. 
\end{proof}

\subsection{Proofs of technical results related to section~\ref{sec:dep}}

\begin{proof}[Proof of Lemma~\ref{lem:pisiit_beta}]
By definition, we have for $U_{it} := F_{Y|\Z}(Y_{it}|\Z_{it})\sim U[0,1]$ independent of $\Z_{it}$, 
\begin{align*}
\frac{1}{T}\sum_{t=1}^{T} \frac{\psi_{i,\tau+d_T}(\Z_{it}, Y_{it}) - \psi_{i,\tau - d_T}(\Z_{it}, Y_{it})}{2d_T} & = \frac{1}{T}\sum_{t} \frac{\Z_{it} \Big( \1 (U_{it} \leq \tau + d_T) - \1(U_{it} \leq \tau - d_T)-2d_T\Big)}{2d_T}.
\end{align*}
Consider the function $f_{k,T}(u,z) := z_k\{ \1 (u \leq \tau + d_T) - \1(u \leq \tau - d_T)-2d_T\}$. Under Assumptions (A1), (A2) there exist constants $C_1,C_2$ independent of $i,T$ such that $\sup_{i,t} \sup_k |f_k(Y_{it},\Z_{it})| \leq C_1$, $\E[f_k(Y_{it},\Z_{it})] = 0$ and $Var(f_k(Y_{it},\Z_{it})) \leq C_2d_T$. Applying Lemma~\ref{varb} we obtain
\begin{equation*}
\sigma_q^2(f) := Var\Big(\frac{1}{\sqrt{q}}\sum_{t=1}^{q} f_k(Y_{it},\Z_{it})\Big) \leq \C_{10}d_T|\log d_T|.
\end{equation*}
Now applying the Bernstein inequality for $\beta$-mixing sequences (Corollary C.1 in \cite{KatoGalvaoMontes-Rojas12}) we have for some constant $\C$ independent of $i,T,k$ and  $q_T \in [1, T/2]$ and for all $s_T > 0$ 
\begin{equation*}
\P\Big(\Big|\frac{1}{T}\sum_{t=1}^{T} f_k(Y_{it},\Z_{it}) \Big| \geq \C \Big\{ \frac{\sqrt{(s_T \vee 1)}}{\sqrt{T}} \sigma_q(f) +  \frac{s_Tq_T}{T}\Big\}\Big) \leq 2\exp(-s_T) + 2T\beta(q_T).
\end{equation*}
Pick $s_T = \C_{11} \log T$ for some constant $\C_{11}>0$ to be determined later and $q_T = T^{c}$ for $c < \frac{1}{2}$. By assumption $\log n \leq \C_4 \log T$, so choosing $\C_{11}>\C_4$ we obtain
\begin{equation*}
\sup_i \Big\| \frac{1}{T}\sum_{t=1}^{T} \frac{\psi_{i,\tau+d_T}(\Z_{it}, Y_{it}) - \psi_{i,\tau - d_T}(\Z_{it}, Y_{it})}{2d_T} \Big\| = O_p\Big( \frac{\log T}{(Td_T)^{1/2} }\Big).
\end{equation*}

%Hence we have the following bounds, for any fixed positive integer $j$, 
%\begin{align*}
%&\E[|f(\xi_{i1,k})|^2]\E[|f(\xi_{i1+j, k})|^2] = [Var(f(\xi_{i1,k}))]^2 \lesssim \frac{1}{d_T^2}\\
%& \E[|f(\xi_{i1,k})f(\xi_{i1+j,k})|^2] \leq \Big(\sup_{i,t} \sup_k  |f(\xi_{it,k})| \Big)^2 Var(f(\xi_{i1,k}))\lesssim \frac{1}{d_T^3}
%\end{align*}
%Apply Lemma C.1 in Kato, Galvao and Montes-Rojas (KGM hereafter) we have, 
%\[
%|Cov(f(\xi_{i1,k}), f(\xi_{i1+j,k}))| \lesssim \frac{\beta(j)^{1/2}}{d_T^{3/2}}
%\]
%with $\beta(j) = \C b^j$ for $j = 1, 2, \dots,\infty$ and $b \in (0,1)$. On the other hand, we also have 
%\[
%|Cov(f(\xi_{i1,k}), f(\xi_{i1+j,k}))| = |\E[f(\xi_{i1,k})f(\xi_{i1+j,k})]| \leq Var(f(\xi_{i1,k})) \lesssim \frac{1}{d_T}
%\]
%Hence 
%\begin{align*}
%\sigma_{q,k}^2(f)&  := Var\Big(\frac{1}{\sqrt{q}}\sum_{j=1}^q f(\xi_{ij, k})\Big) = Var(f(\xi_{i1,k})) + 2 \sum_{j=1}^{q-1} \Big(1-\frac{j}{q}\Big) Cov(f(\xi_{i1,k}), f(\xi_{i1+j,k}))
%\end{align*}

\end{proof}

\bigskip

\begin{proof}[Proof of Lemma~\ref{lemma2_new}]
%\noindent \textbf{Proof of Lemma~\ref{lemma2_new}.} 
By definition, 
\begin{equation*}
\widehat{e}_{is}-e_{is} = \Z_{is}^\top \Big( \widehat{\bm{\gamma}}_{i}(\tau+d_T)- \bm{\gamma}_{i0}(\tau+d_T)-(\widehat{\bm{\gamma}}_{i}(\tau -d_T) - \bm{\gamma}_{i0}(\tau - d_T))\Big) /2d_T.
\end{equation*}
By Lemma~\ref{VCClemma_beta}, we obtain 
\begin{equation*}
\Z_{is}^\top (\widehat{\bm{\gamma}}_{i}(\tau \pm d_T)- \bm{\gamma}_{i0}(\tau \pm d_T)) = -\frac{1}{T}\Z_{is}^\top B_i^{-1}\sum_{t=1}^{T}  \Z_{it}\Big(\1(Y_{it} \leq q_{i, \tau \pm d_T}(\Z_{it}))-(\tau \pm d_T)\Big) +O_p\Big( \frac{(\log T)^{5/4}}{T^{3/4}} \Big). 
\end{equation*}
Hence with $U_{it} := F_{Y\mid \Z}(Y_{it}\mid \Z_{it}) \sim U[0,1]$ independent of $\Z_{it}$, 
\begin{equation*}
\widehat{e}_{is} -e_{is} = -\frac{1}{2Td_T}\Z_{is}^\top B_i^{-1}\sum_{t=1}^{T}  \Z_{it}\Big(\1(U_{it}\leq \tau + d_T) -\1(U_{it}\leq \tau - d_T)-2d_T\Big)+O_p\Big( \frac{(\log T)^{5/4}}{d_T T^{3/4}} \Big). 
\end{equation*}
By Lemma~\ref{lem:pisiit_beta} we have 
\begin{equation*}
\sup_i \Big\|\frac{1}{T}\sum_{t=1}^{T} \frac{ \Z_{it}\Big(\1(U_{it}\leq \tau + d_T) -\1(U_{it}\leq \tau - d_T)-2d_T\Big)}{2d_T}\Big\| = O_p\Big(\frac{\log T}{(Td_T)^{1/2}}\Big).
\end{equation*}
Putting together we obtain 
\begin{equation*}
\sup_{i,s}|\widehat{e}_{is}-e_{is}| = O_p\Big(\frac{\log T}{(Td_T)^{1/2}}\Big) + O_p\Big( \frac{(\log T)^{5/4}}{d_TT^{3/4}} \Big) = O_p\Big(\frac{\log T}{(Td_T)^{1/2}}\Big),
\end{equation*}
where the last equality follows under (MD). The rest of the proof follows by the same arguments as in Lemma~\ref{lemma2}. \hfill 
\end{proof}
%$\Box$

\bigskip

\begin{proof}[Proof of Lemma~\ref{lem:ait_beta}]
Consider the following decomposition
\begin{align*}
&\widetilde{A}_{iT} - A_{iT}
\\
&=  \sum_{1\leq j \leq m_T}\Big(1-\frac{j}{T}\Big) \Big( \frac{1}{T}\sum_{t\in T_j}(\widehat{w}_{it} \widehat{w}_{it+j}^\top + \widehat{w}_{it+j} \widehat{w}_{it}^\top) - \frac{1}{T}\sum_{t\in T_j}( w_{it}  w_{it+j}^\top +  w_{it+j}  w_{it}^\top)\Big) 
\\
&= \sum_{1\leq j \leq m_T}\Big(1-\frac{j}{T}\Big) \frac{1}{T}\sum_{t\in T_j}\Big((\widehat{w}_{it} \widehat{w}_{it+j}^\top + \widehat{w}_{it+j} \widehat{w}_{it}^\top) - ( w_{it}  w_{it+j}^\top +  w_{it+j}  w_{it}^\top) - \{\mu_{i,j}(\widehat{\bm{\gamma}}_{i}(\tau)) - \mu_{i,j}(\bm{\gamma}_i(\tau)) \}\Big)
\\
%& - \sum_{1\leq j \leq m_T}\Big(1-\frac{j}{T}\Big) \frac{1}{T}\sum_{t\in T_j}\Big(( w_{it}  w_{it+j}^\top +  w_{it+j}  w_{it}^\top) - \mu_j( \gamma_i(\tau)\Big)
%\\
& \quad\quad + \sum_{1\leq j \leq m_T} \Big(1-\frac{j}{T}\Big) \frac{|T_j|}{T} \Big(\mu_{i,j}(\widehat{\bm{\gamma}}_{i}(\tau)) - \mu_{i,j}(\bm{\gamma}_{i0}(\tau)) - \sum_{k=1}^{p+1} D_{i,j,k}(\tau) \frac{1}{T}\sum_{t=1}^{T} (B_{i}^{-1}\psi_{i,\tau}(\Z_{it}, Y_{it}))_k\Big)
\\
& \quad\quad + \sum_{1\leq j \leq m_T}\Big(1-\frac{j}{T}\Big)\frac{|T_j|}{T}\sum_{k=1}^{p+1} D_{i,j,k}(\tau) \frac{1}{T}\sum_{t=1}^{T} (B_{i}^{-1}\psi_{i,\tau}(\Z_{it}, Y_{it}))_k
\\
&=:   R_{4i}^{(1)} + R_{4i}^{(2)} + \sum_{1\leq j \leq m_T}\Big(1-\frac{j}{T}\Big)\frac{|T_j|}{T}\sum_{k=1}^{p+1} D_{i,j,k}(\tau) \frac{1}{T}\sum_{t=1}^{T} (B_{i}^{-1}\psi_{i,\tau}(\Z_{it}, Y_{it}))_k.
\end{align*}
Regarding $R_{4i}^{(2)}$, begin by observing that under (A1), (D2) by~\eqref{muij}
\begin{align*}
& \sup_{i,j}\Big\|\mu_{i,j}(\widehat{\bm{\gamma}}_{i}(\tau)) - \mu_{i,j}(\bm{\gamma}_{i0}(\tau)) - \sum_{k=1}^{p+1} D_{i,j,k}(\tau) \frac{1}{T}\sum_{t=1}^{T} (B_{i}^{-1}\psi_{i,\tau}(\Z_{it}, Y_{it}))_k \Big\|
\\
\leq & \sup_{i,j}\Big\| \mu_{i,j}(\widehat{\bm{\gamma}}_{i}(\tau)) - \mu_{i,j}(\bm{\gamma}_{i0}(\tau)) - \sum_{k=1}^{p+1}D_{i,j,k}(\tau) (\widehat{\bm{\gamma}}_{i,k}(\tau) - \bm{\gamma}_{i0,k}(\tau)) \Big\|
\\
&~+ \sup_{i,j}\Big\|\sum_{k=1}^{p+1}D_{i,j,k}(\tau) (\widehat{\bm{\gamma}}_{i,k}(\tau) - \bm{\gamma}_{i0,k}(\tau)) - \sum_{k=1}^{p+1} D_{i,j,k}(\tau) \frac{1}{T}\sum_{t=1}^{T} (B_{i}^{-1}\psi_{i,\tau}(\Z_{it}, Y_{it}))_k \Big\|
\\
\leq & O\Big(\sup_{i}\| \widehat{\bm{\gamma}}_{i}(\tau) - \bm{\gamma}_{i0}(\tau)\|^2\Big) + \sup_{i,j,k} \|D_{i,j,k}(\tau)\| \sup_i \|R_{iT}^{(1)}(\tau) + R_{iT}^{(2)}(\tau)\|,
\end{align*}
where $R_{iT}^{(1)}(\tau),R_{iT}^{(2)}(\tau)$ are defined in Lemma \ref{VCClemma_beta} and hence $\sup_{i}\|R_{4i}^{(2)}\| = O_p\Big( m_T\frac{(\log T)^{5/4}}{T^{3/4}}\Big)$ by \eqref{G1_beta} and Lemma \ref{VCClemma_beta}. 
%Therefore 
%\begin{align*}
%R_{4i} & =  \sum_{1\leq j \leq m_T}\Big(1-\frac{j}{T}\Big) \frac{1}{T}\sum_{t\in T_j}\Big((\widehat w_{it} \widehat w_{it+j}^\top + \widehat w_{it+j} \widehat w_{it}^\top) - \mu_j(\widehat \gamma_i(\tau)\Big)\\
%& - \sum_{1\leq j \leq m_T}\Big(1-\frac{j}{T}\Big) \frac{1}{T}\sum_{t\in T_j}\Big(( w_{it}  w_{it+j}^\top +  w_{it+j}  w_{it}^\top) - \mu_j( \gamma_i(\tau)\Big)\\
%& + \sum_{1\leq j \leq m_T}\Big(1-\frac{j}{T}\Big) \frac{|T_j|}{T}D_{i,j}(\tau) (R_{iT}^{(1)} + R_{iT}^{(2)}) \\
%& + \sum_{1\leq j \leq m_T}\Big(1-\frac{j}{T}\Big) \frac{|T_j|}{T} R_{5ij}.
%\end{align*}
%By results in Lemma~\ref{VCClemma_beta} and uniform boundedness of $D_{i,j}$ ({\textbf{put in assumption}}), the last two terms is $O_p\Big(m_T \frac{(\log T)^{5/4}}{T^{3/4}} \Big)$. 

Next we deal with $R_{4i}^{(1)}$. Pick a constant $C_1$ such that the event 
\begin{equation*}
\Omega_{1,T} := \Big\{\sup_i \|\widehat{\bm{\gamma}}_{i}(\tau) - \bm{\gamma}_{i0}(\tau)\| \leq C_1(\log T)^{1/2}T^{-1/2}\Big\},
\end{equation*}
has probability tending to one, this is possible by Lemma \ref{VCClemma_beta} and \eqref{G1_beta}. Note that on $\Omega_{1,T}$ we have
\begin{equation*}
\sup_i\|R_{4i}^{(1)}\| \lesssim \sum_{1\leq j \leq m_T}\Big(1-\frac{j}{T}\Big) \sup_{k,\ell}S_{T,j,k,\ell}(C_1(\log T)^{1/2}T^{-1/2}) \leq m_T \sup_{k,\ell,j} S_{T,j,k,\ell}(C_1(\log T)^{1/2}T^{-1/2}),
\end{equation*}
where $S_{T,j,k,\ell}$ is defined in~\eqref{STjkl}. Apply~\eqref{G3_beta} and the union bound to find that the right-hand side above is $O_P\Big(m_T (\log T + m_T)^{1/2}\Big(\frac{\log T}{T}\Big)^{3/4}\Big)$. Putting together, we get \eqref{Ait3}. 

To prove~\eqref{Ait4} note that
\begin{align*}
A_{iT}-\E[A_{iT}] & = \frac{\tau(1-\tau)}{T}\sum_{t=1}^{T} (\Z_{it} \Z_{it}^\top - \E[\Z_{i1}\Z_{i1}^\top])\\
& + \sum_{1\leq j \leq m_T} \Big(1-\frac{j}{T}\Big) \Big(\frac{1}{T}\sum_{t\in T_j} \Big((w_{it} w_{it+j}^\top + w_{it+j} w_{it}^\top)- \E[w_{i1}w_{i1+j}^\top + w_{i1+j}w_{i1}^\top]\Big)\Big).
\end{align*}
The first term is $O_p\Big(\frac{(\log T)^{1/2}}{T^{1/2}}\Big)$ by a standard application of the Bernstein inequality for dependent variables. To deal with the second term, note that under (D1) the series of random vectors $\{(Y_{it}, \Z_{it}, Y_{it+j}, \Z_{it+j})\}_{t\in\mathbb{Z}}$ is $\beta$-mixing with mixing coefficients $\widetilde{\beta}(t)$ satisfying $\widetilde{\beta}(t) \leq \beta(0\vee (t-j))$. This implies that the $\beta$-mixing coefficients of the series $\{ v_{it,j}\}_{t \in \mathbb{Z}}$ where $v_{it,j} := (w_{it} w_{it+j}^\top + w_{it+j} w_{it}^\top)_{k,\ell}$
are also bounded by $\beta(0\vee (t-j))$ (here, $M_{k,\ell}$ denotes the $(k,\ell)$ entry of the matrix $M$). Hence by an application of Lemma C.1 in \cite{KatoGalvaoMontes-Rojas12} with $\delta=1$ we have $|Cov(v_{i1,j},v_{it+1,j})| \leq C \beta(0\vee (t-j))^{1/2}$ and direct computations show that 
\begin{equation*}
\sup_{1\leq q \leq T}\sup_{k,\ell} \max_{1\leq j \leq m_T} \sup_i  Var \Big(\frac{1}{\sqrt{q}} \sum_{t=1}^q (w_{it} w_{it+j}^\top + w_{it+j} w_{it}^\top)_{k,\ell} \Big) = O(m_T).
\end{equation*}
Apply Bernstein's inequality for $\beta$-mixing sequences (Corollary C.1 in \cite{KatoGalvaoMontes-Rojas12}) with $q=q_T= m_T + C\log T$, $s=s_T = C\log T$ for suitable constants $C$ and the union bound to obtain
\begin{equation*}
\sup_{i,j} \Big(\frac{1}{|T_j|}\sum_{t\in T_j} \Big((w_{it} w_{it+j}^\top + w_{it+j} w_{it}^\top)- \E[w_{i1}w_{i1+j}^\top + w_{i1+j}w_{i1}^\top]\Big)\Big) = O_P\Big(\sqrt{\frac{m_T \log T}{T}}\Big).
\end{equation*}
Combined with the definition of $A_{iT}$ this yields~\eqref{Ait4}. 

To prove \eqref{Ait4.1}, note that 
\begin{equation*}
\Big \|\widetilde{A}_{iT} - \E[A_{iT}]\Big \| \leq \Big \|\widetilde{A}_{iT}-A_{iT}\Big \| + \Big \|A_{it}-\E[A_{iT}]\Big \|,
\end{equation*}
apply the first result of Lemma~\ref{lem:G1G2} to obtain 
\begin{equation*}
\sup_i \|\widetilde{A}_{iT} - A_{iT}\| =O_p\Big(m_T\Big(\frac{\log T}{T}\Big)^{1/2}\Big),
\end{equation*}
and combined with~\eqref{Ait3}, \eqref{Ait4}, we get \eqref{Ait4.1}. 

To prove \eqref{Ait5}, apply Lemma C.1 in \cite{KatoGalvaoMontes-Rojas12} to obtain 
\begin{equation*}
\|\E[w_{i1}w_{i1+j}^\top + w_{i1+j}w_{i1}^\top]\|\leq C\beta(j)^{1/2},
\end{equation*}
for a constant independent of $i,T,j$. Hence under Assumptions (D1) and (MD), where the latter implies $m_T\to \infty, m_T/T \to 0$, we have
\begin{align*}
\|\E[A_{iT}] - \widetilde{A}_i\| & \leq \sum_{j=1}^{m_T}\Big|\Big(1-\frac{j}{T}\Big)\frac{|T_j|}{T}-1\Big| C\beta(j)^{1/2}+ \sum_{j=m_T+1}^{\infty} C\beta(j)^{1/2}\\
& \leq C_1\Big(\frac{m_T}{T}\sum_{j=1}^{m_T}  \beta(j)^{1/2} + \sum_{j=m_T+1}^{\infty} \beta(j)^{1/2}\Big)=o(1).
\end{align*}
\hfill 
\end{proof}

\bigskip

\begin{proof}[Proof of Lemma~\ref{lem:aitbit2}]
The statement \eqref{Bit2_beta} follows from Lemma~\ref{lem:pisiit_beta} after noting hat the bound $\sup_i e_{i1} > 0$ derived in \eqref{eitlowb} still holds in the present setting since the latter derivation only used (A1).

The proof of~\eqref{Bit3_beta} is very similar to the proof of~\eqref{Bit3} and uses Lemma~\ref{VCClemma_beta}, so we will only give a brief outline. First observe that
\begin{align*}
\widehat{B}_{iT} - B_{iT} &= \frac{1}{T}\sum_{t=1}^{T} (\widehat{f}_{it} -f_{it})\Z_{it}\Z_{it}^\top  
\\
&= -\frac{1}{T}\sum_{t=1}^{T} \frac{\widehat{e}_{it} - e_{it}}{e_{it}^2}\Z_{it}\Z_{it}^\top + O_p\Big(\frac{(\log T)^2}{Td_T}\Big) 
%\\
%&=  -\frac{1}{2d_T}\left [\left ( \widehat{\bm{\gamma}}_{i}(\tau + d_T)- \bm{\gamma}_i(\tau + d_T)\right) - \left ( \widehat{\bm{\gamma}}_{i}(\tau-d_T) - \bm{\gamma}_{i}(\tau-d_T)\right)\right]^\top \frac{1}{T}\sum_{t=1}^{T} \frac{ \Z_{it} (\Z_{it}\Z_{it}^\top)}{e_{it}^2} 
%%\frac{4 d_T^2 \Z_{it} (\Z_{it}\Z_{it}^\top)}{\left (q_{i,\tau+d_T}(\Z_{it}) - q_{i,\tau-d_T}(\Z{it})\right)^2}
%+  R_{1i}
\\
& = \frac{1}{T}\sum_{t=1}^{T} \sum_{k=1}^{p+1}\frac{\psi_{i,\tau+d_T,k}(\Z_{it}, Y_{it}) - \psi_{i,\tau - d_T,k}(\Z_{it}, Y_{it})}{2d_T} \E\Big[\frac{(B_i^{-1}\Z_{i1})_k}{e_{i1}^2}(\Z_{i1}\Z_{i1}^\top) \Big]
% \E \left [ \frac{4d_T^2 \Z_{i1}(\Z_{i1}\Z_{i1}^\top)}{\left (q_{i,\tau+d_T}(\Z_{i1}) - q_{i,\tau-d_T}(\Z_{i1})\right)^2}\right] 
+ R_{2i}, 
\end{align*}
where the first equality follows by Lemma~\ref{lemma2_new},
\begin{align*}
R_{2i}  := &~~ \sum_{k=1}^{p+1} \Big( \frac{1}{T}\sum_{s=1}^T \frac{\psi_{i,\tau+d_T,k}(\Z_{is}, Y_{is}) - \psi_{i,\tau - d_T,k}(\Z_{is}, Y_{is})}{2d_T}\Big)  \Big(\frac{1}{T}\sum_{t=1}^{T} N_{it,k}\Big)\\
&~~ - \sum_{k=1}^{p+1} \Big (\frac{R_{iT,k}^{(1)}(\tau + d_T) + R_{iT,k}^{(2)}(\tau + d_T)}{2d_T}\Big) \Big ( \frac{1}{T} \sum_{t=1}^T Z_{it,k} \Z_{it}\Z_{it}^\top\Big)\\
&~~ + \sum_{k=1}^{p+1} \Big (\frac{R_{iT,k}^{(1)}(\tau - d_T) + R_{iT,k}^{(2)}(\tau - d_T)}{2d_T}\Big) \Big( \frac{1}{T} \sum_{t=1}^T Z_{it,k} \Z_{it}\Z_{it}^\top\Big) + O_p\Big(\frac{(\log T)^2}{Td_T}\Big) ,
\\
N_{it,k} := &~~ \frac{(B_i^{-1}\Z_{i1})_k \Z_{i1}\Z_{i1}^\top}{e_{i1}^2}  - \E\Big[\frac{(B_i^{-1}\Z_{i1})_k \Z_{i1}\Z_{i1}^\top}{e_{i1}^2} \Big],
\end{align*}
$R_{iT}^{(1)},R_{iT}^{(2)}$ are defined in Lemma~\ref{VCClemma_beta}, and $R_{iT,k}^{(j)}$ denotes the k-th component of the vector $R_{iT}^{(j)}$. Now an application of the Bernstein inequality for $\beta$-mixing sequences (see Corollary C.1 in \cite{KatoGalvaoMontes-Rojas12}) shows that 
\begin{equation*}
\sup_i \sup_k \Big\| \frac{1}{T}\sum_{t=1}^{T} N_{it,k}\Big\| = O_p\Big((\log T)^{1/2}/T^{1/2}\Big),
\end{equation*}
and combining this with Lemma~\ref{lem:pisiit_beta} and Lemma~\ref{VCClemma_beta} completes the argument.
 
Finally,~\eqref{Bit1_beta} follows by uniform boundedness of $f_{it}$ and an application of the Bernstein inequality for $\beta$-mixing sequences (see Corollary C.1 in \cite{KatoGalvaoMontes-Rojas12}) after noting that by an application of Lemma C.1 in \cite{KatoGalvaoMontes-Rojas12} we have 
\begin{equation*}
\sup_{k,\ell}\sup_i \sup_{1\leq q \leq T} Var\Big(\frac{1}{q^{1/2}} \sum_{t=1}^q f_{it} (\Z_{it} \Z_{it}^\top)_{k,\ell} \Big) = O(1).
\end{equation*}
Details are omitted for the sake of brevity. \hfill
\end{proof}
%$\Box$. 

\end{document}